\documentclass[11pt, a4paper]{article}
\usepackage{amsmath, amssymb, amsthm, amsfonts, mathrsfs}
\usepackage{geometry}
\usepackage{hyperref}
\usepackage{authblk}
\usepackage{graphicx}
\usepackage{array}
\usepackage{titlesec}
\usepackage[utf8]{inputenc}
\usepackage[T1]{fontenc}
\usepackage{newtxtext, newtxmath} 
\usepackage{enumitem}
\usepackage{braket}
\usepackage{tikz}
\usetikzlibrary{cd, positioning, arrows.meta}
\usepackage{mathtools}
\usepackage{xcolor}
\usepackage{comment}
\usepackage{algorithm}
\usepackage{algpseudocode}

\usepackage[backend=biber,style=numeric]{biblatex}
\hypersetup{
    colorlinks=true,
    linkcolor=blue,
    filecolor=magenta,      
    urlcolor=cyan,
    citecolor=red,
}

\theoremstyle{definition}
\newtheorem{theorem}{Theorem}[section]
\newtheorem{lemma}[theorem]{Lemma}
\newtheorem{proposition}[theorem]{Proposition}
\newtheorem{corollary}[theorem]{Corollary}
\newtheorem{definition}[theorem]{Definition}
\newtheorem{example}[theorem]{Example}
\newtheorem{remark}[theorem]{Remark}

\newcommand{\mf}[1]{\mathfrak{#1}}

\DeclareMathOperator{\Tr}{Tr}
\DeclareMathOperator{\ad}{ad}
\DeclareMathOperator{\Ad}{Ad}
\DeclareMathOperator{\Id}{Id}
\DeclareMathOperator{\Span}{span}
\DeclareMathOperator{\Ker}{Ker}
\DeclareMathOperator{\Image}{Im}

\DeclareMathOperator{\End}{End}

\DeclareMathOperator{\cone}{cone}
\DeclareMathOperator{\aff}{\mathfrak{aff}}
\DeclareMathOperator{\Lie}{Lie}

\newcommand{\Lcal}{\mathcal{L}}
\newcommand{\Hcal}{\mathcal{H}}
\newcommand{\Bcal}{\mathcal{B}}

\newcommand{\Ecal}{\mathcal{E}}

\newcommand{\Dcal}{\mathcal{D}}

\newcommand{\g}{\mathfrak{g}}
\newcommand{\s}{\mathfrak{s}}
\newcommand{\rr}{\mathfrak{r}}

\newcommand{\gl}{\mathfrak{gl}}
\newcommand{\sll}{\mathfrak{sl}}

\newcommand{\su}{\mathfrak{su}}
\newcommand{\so}{\mathfrak{so}}
\DeclareMathOperator{\rad}{rad}
\DeclareMathOperator{\nil}{nil}

\newcommand{\C}{\mathbb{C}}
\newcommand{\R}{\mathbb{R}}

\newcommand{\WGKLS}{\mathcal{W}_{\text{GKLS}}}

\newcommand{\Wrad}{W_{\rr}}
\newcommand{\Cquot}{C_{\text{quot}}}

\newcommand{\gLK}{\mf{g}_{LK}}
\newcommand{\Htr}{\Hcal_{0}} 

\begin{document}

\title{The Geometry of Dissipative Complexity: A Levi-Type Decomposition Theorem for Markovian Quantum Dynamics via Lie Wedges and Invariant Cones}

\author[1]{Yibin Wang}
\author[2]{Dhara Thakkar}

\affil[1]{Graduate School of Mathematics, Nagoya University, 464-8601 Nagoya, Aichi, Japan.\\
\texttt{yibinw0210@gmail.com}}
\affil[2]{Graduate School of Mathematics, Nagoya University, 464-8601 Nagoya, Aichi, Japan.\\
\texttt{dharathakkar754@gmail.com}}

\date{}

\maketitle

\begin{abstract}
Lie algebras describe how control Hamiltonians combine in closed quantum systems. In open Markovian systems, however, dissipation introduces irreversible directions that the Lie algebra alone does not retain. A dynamical Lie wedge preserves this information as a convex cone of locally admissible generators. The classical Levi decomposition separates a finite-dimensional Lie algebra into semisimple and solvable parts. In this work, we prove a corresponding decomposition and reconstruction theorem for dynamical Lie wedges.

The theorem decomposes the wedge into semisimple and solvable data, records how these components are coupled, and provides a converse reconstruction of the wedge. The framework yields four structural types of the generated algebra. We further prove that the total dissipation strength of every admissible generator depends only on its coordinate in the solvable radical. In particular, if the generated Lie algebra is semisimple, every admissible generator is Hamiltonian. Together, these results extend Lie-algebraic structural methods to irreversible Markovian control and clarify how dissipation is encoded in the generator geometry.
\end{abstract}

\newpage
\setcounter{tocdepth}{3}
\tableofcontents

\section{Introduction: From Dynamical Lie Algebras to Dynamical Lie Wedges}\label{sec:introduction}

Symmetry and dynamics meet in the Dynamical Lie Algebra (DLA) framework. For a closed quantum system, Lie algebras and their representations provide the language \cite{wigner1939unitary,hall2013quantum}, and the available Hamiltonians generate a real Lie algebra under commutation. This algebra determines controllability \cite{altafini2001controllability} and supplies structural data for quantum and geometric-control analysis \cite{d2021introduction,jurdjevic1997geometric}. Group-manifold methods have underpinned geometric control theory since its inception \cite{brockett1972system,sussmann1972controllability} and now inform classifications such as those for dynamical Lie algebras of $2$-local spin systems \cite{wiersema2024classification}.

The mathematical cornerstone is the Levi Decomposition Theorem for finite-dimensional Lie algebras \cite{knapp1996lie,varadarajan2013lie}. It identifies the solvable radical $\mf{r}$ as the unique maximal solvable ideal and guarantees a semisimple Levi complement $\mf{s}$. After choosing such a complement,
\begin{equation}
\g=\mf{s}\ltimes\mf{r}.
\end{equation}
The radical and quotient $\g/\mf r$ are intrinsic, whereas the embedded complement is choice-dependent; different choices are related by Levi--Malcev conjugacy. For physical generators, this decomposition supplies structural data. Corollary~\ref{cor:hamiltonian_rigidity}(2) shows that a semisimple system algebra has a Hamiltonian-only wedge and no synthesizable dissipation. In mixed systems, dynamical quantities such as mixing, thermalization, steady-state uniqueness, and the dissipative gap require information beyond the radical and semisimple factor.

\paragraph*{Why Irreversibility Calls for a Lie Wedge}

Carrying the DLA framework to open systems runs into a structural property of the generator set. The Markovian generators do not form a Lie algebra. Hamiltonian generators close under commutation to form $\mf{su}(d)$, whereas the physically admissible dissipative generators $\WGKLS$ fail this closure. Dirr et al.\ \cite{dirr2009lie} establish the form of the non-closure. Their Corollary~III.3 shows that $\WGKLS$ is a Lie wedge but not a Lie semialgebra, so the Lie bracket $[\Lcal_1, \Lcal_2]$ of two valid Lindbladians can violate the Conditional Complete Positivity (CCP) condition.

\begin{example}[The Closure Paradox: Qutrit Violation]
\label{ex:qutrit_closure}
Consider a simple qutrit system ($d=3$). The commutator of two physically valid rank-1 dissipative generators, $\Lcal_1$ (generated by $L_1 = \lambda_1 + i\lambda_4$) and $\Lcal_2$ (generated by $L_2 = \lambda_2 + i\lambda_5$), has an indefinite Kossakowski form; equivalently, its conditional Choi compression has a negative eigenvalue.\footnote{This example uses the unnormalized Gell-Mann convention $\Tr(\lambda_i\lambda_j)=2\delta_{ij}$, while the default convention in Eq.~\eqref{eq:gkls_kossakowski} is Hilbert--Schmidt orthonormal. In the raw $\lambda$ basis the commutator has Kossakowski spectrum $\{-2,2,0,\dots\}$; after passing to $F_i=\lambda_i/\sqrt{2}$ the Kossakowski spectrum is $\{-4,4,0,\dots\}$. The conditional Choi compression has minimum eigenvalue $-4$ under the Choi normalization used here.} It therefore violates Conditional Complete Positivity and lies outside $\WGKLS$. This counterexample is established here by direct computation in the Gell-Mann basis.
\end{example}

The non-closure is the structural reason a classification cannot rest on the generated algebra alone. The algebraic operation of commutation pulls against the geometric constraint of complete positivity, so a faithful classification has to carry both. We do this by passing from the Lie algebra to the \textit{Dynamical Lie Wedge Pair}, which holds the geometry of admissibility alongside the algebra in the presence of irreversibility.
\paragraph*{Structural Positioning}

The study of open quantum systems spans several decades, yet a structural classification analogous to the DLA is still missing. Work on Markovian dynamics has largely addressed particular phenomena and has not produced a unified generator-level structure. To the best of our knowledge, a systematic classification framework grounded in the Levi decomposition structure of the dynamically generated Lie algebra, which explicitly distinguishes three distinct classes: the purely solvable type, the semisimple type (corresponding to the Hamiltonian-only scenario), and the mixed Levi type, has not been established in existing literature. Geometric approaches have instead probed response properties and the space of Lindbladians through differential geometry and Fisher information \cite{albert2016geometry}, without supplying this algebraic classification of complexity.

Lie semigroup theory and geometric control supply the mathematical language. Irreversible evolution forms a semigroup, whose infinitesimal counterpart is a Lie wedge \cite{hilgert1989lie}. For quantum control, the GKLS generators form such a wedge. This viewpoint has supported reachability and controllability analyses \cite{dirr2009lie,kurniawan2012controllability,cai2025lindbladians}, explicit studies of coherently controlled unital systems \cite{OMeara2011IllustratingTG,schulteherbruggen2017kossakowski}, and the reduction of a single fixed Lindbladian under fast unitary control to a system on the eigenvalue simplex \cite{malvetti2024reachability}. Those results retain their stated control assumptions.

We classify the algebraic and local wedge structure attached to a specified generating set; Remark~\ref{rem:scope_control_systems} fixes the input. Reachability from a selected state is the corresponding global question. O'Meara identified the Lie wedge, distinct from the algebra generated by the Lindbladians, as the relevant open-system object \cite{o2025lie}. The same work shows why Lie semialgebras are too rigid for a general treatment: the stated semialgebra condition holds only when coherent controls affect neither the drift Hamiltonian nor the incoherent component. A systematic classification based on the internal wedge geometry and a Levi splitting has not been developed.

\paragraph*{The Dynamical Lie Wedge Pair}

This paper develops structural tools that combine geometric control, Lie-algebra structure, and invariant cones. The ambient algebra of physical generators is non-reductive, unlike the reductive algebra $\mf{u}(d)$ underlying unitary dynamics, as Theorem~\ref{thm:structure_glk} and Corollary~\ref{cor:glk_nonreductive} establish. Neeb's theory provides a qualified comparison with cones invariant under the full algebra. The reconstruction theorem follows directly from fiber data and Lie-wedge conditions.

The quantum-control literature separates the Lie algebra generated by brackets from the geometry of physical admissibility. The system Lie algebra and system Lie wedge of \cite{dirr2009lie,OMeara2011IllustratingTG,o2025lie} are those two objects. We package them as the Dynamical Lie Wedge Pair (DLWP)
\[
S(G)=(\g(G),W(G))
\]
for a specified set $G$ of available physical generators. Here $\g(G)$ is the smallest real Lie algebra containing $G$; it includes formal bracket directions that may fail CCP and need not be reachable by positive-time sequences. Operational reachability is encoded separately by $\Sigma(G)$ and its tangent wedge $L(\Sigma(G))$, with inverse evolutions available only along reversible directions. The cone $W(G)$ is the smallest closed Lie wedge containing $G$. It lies in $\WGKLS$ and is the minimal closed-wedge inner approximation to $L(\Sigma(G))$; equality holds precisely when $W(G)$ is global.

\paragraph*{Structural Results}

The Levi-Type Decomposition Theorem of Section~\ref{sec:levi_type_theorem} gives a fibered normal form relative to a chosen Levi splitting. It re-encodes the wedge through its radical slice, quotient cone, and fibers, and reconstructs the wedge from data satisfying the Lie-wedge conditions. The normal form builds on semigroup-theoretic reconstruction methods for invariant cones \cite{hilgert1989lie}; the exact minimal-wedge construction is a separate result used by the structural program.

After choosing $\g=\mf{s}\ltimes\mf{r}$, the radical and quotient data remain intrinsic, while the embedded Levi coordinates and fibers are relative to that choice. The components have the following interpretations:
\begin{enumerate}
    \item The radical wedge $W_{\rr}=W\cap\mf r$ is a Lie wedge in the solvable radical. Pure dephasing and cascaded decay give examples with solvable system algebras. Decay channels, conserved quantities, and mixing behavior depend on additional represented generator data.
    \item The quotient cone $C_{\mathrm{quot}}=\pi_{\s}(W)$ records quotient coordinates that admit full lifts. Physical generators occur as complete lifts, while mixing, thermalization, steady-state uniqueness, and scrambling require dynamical information beyond the presence or dimension of this cone.
    \item The action of $\mf s$ on $\mf r$ constrains the fibers, and the full wedge is encoded by $W_{\rr}$, $C_{\mathrm{quot}}$, and the complete fiber map. The FCC is edge invariance in these coordinates; complete positivity and spectral stability are tested independently.
\end{enumerate}

Lie-semigroup theory supplies Lie wedges, tangent calculus, and reconstruction methods for invariant cones \cite{hilgert1989lie}; geometric control supplies the pair $(\g(G),W(G))$ \cite{dirr2009lie,OMeara2011IllustratingTG,o2025lie}. This paper proves the exact iterative construction of the minimal closed wedge and the chosen-splitting fiber re-encoding with a matching reconstruction theorem. Corollary~\ref{cor:hamiltonian_rigidity} places the dissipation strength of every element of $W$ in its radical coordinate. The comparison with Neeb's theory applies to bracket-induced symplectic nilradicals, while convex type is vacuous for $\gLK$ itself.

The decomposition and reconstruction are proved directly from closed convex-process data and edge invariance. Neeb's full-algebra invariant-cone theory \cite{neeb1994classification,Neeb2000}, following Vinberg \cite{vinberg1963theory} and Olshanskii \cite{olshanskii1981invariant}, supplies the later comparison. Its convex-type condition combines a moment-map cone requirement with a condition on the center action for a bracket-induced symplectic nilradical module. The comparison therefore applies when the relevant module exists and the invariance domains agree. For $\gLK$, the nilradical is abelian and the condition is vacuous.

This work provides an algebraic framework for classifying Markovian dissipative control systems by solvable type, semisimple type (Hamiltonian-only), and mixed Levi type. Section~\ref{sec:preliminaries} establishes the ambient affine algebra and GKLS wedge. Section~\ref{sec:dlwp_framework} defines the DLWP and constructs its minimal closed wedge. Section~\ref{sec:levi_type_theorem} proves the fibered normal form and reconstruction and gives counterexamples and examples, including the detailed non-reductive system of Section~\ref{sec:example_non_reductive}. Section~\ref{sec:neebs_theory} gives the FCC--Neeb comparison. The appendices contain longer proofs and the invariant-cone background.

\section{Mathematical Preliminaries and the Algebraic Structure of GKLS Generators}
\label{sec:preliminaries}

This section collects the background the rest of the paper draws on. Alongside standard material on Lie algebra structure theory, convex geometry, and Lie wedges, it contains the results specific to open quantum dynamics that later sections lean on, namely the structure theorem and non-reductivity of the Lindblad--Kossakowski algebra $\gLK$ together with the Lie wedge structure of the GKLS cone $\WGKLS$. Readers familiar with Lie theory may skim Sections~\ref{sec:lie_structure_prelim} and~\ref{sec:convex_lie_wedges_prelim}.

\subsection{Lie Algebra Structure Theory}
\label{sec:lie_structure_prelim}

We review the concepts from Lie theory that the classification program rests on, with emphasis on non-reductive structure and the Levi decomposition.

\paragraph*{Basic Definitions and the Adjoint Representation}

Throughout, $\g$ is a finite dimensional real Lie algebra, a vector space with a bilinear alternating bracket $[\cdot, \cdot]$ obeying the Jacobi identity $[X, [Y, Z]] + [Y, [Z, X]] + [Z, [X, Y]] = 0$. An ideal $\mf{i}$ satisfies $[\g, \mf{i}] \subseteq \mf{i}$ and a subalgebra is closed under the bracket; the ideals are the invariant subspaces of the adjoint action and carry the structural decomposition. That action is the adjoint representation $\ad: \g \to \text{Der}(\g) \subset \End(\g)$, $\ad(X)(Y) = [X, Y]$, each $\ad(X)$ a derivation by the Jacobi identity. Its kernel is the center $Z(\g) = \{X \in \g \mid [X, Y] = 0 \text{ for all } Y \in \g\}$, always an ideal. Exponentiating gives the bracket-preserving automorphisms $e^{\ad(X)}$, which transport cones along the flow of the edge.

\paragraph*{Solvability, Nilpotency, and the Structure of Radicals}

Solvability and nilpotency organize algebras through their commutator series. The derived series $\g^{(0)} = \g$, $\g^{(k+1)} = [\g^{(k)}, \g^{(k)}]$ makes $\g$ solvable when $\g^{(k)} = \{0\}$ for some $k$, and the lower central series $\g^{0} = \g$, $\g^{k+1} = [\g, \g^{k}]$ makes it nilpotent when $\g^{k} = \{0\}$; nilpotency implies solvability but not the reverse. Under its representation-theoretic hypotheses, Lie's theorem turns solvability into a complete invariant flag.

\begin{theorem}[Lie's Theorem]
Let $\g$ be a finite-dimensional solvable complex Lie algebra and let $V$ be a nonzero finite-dimensional complex representation space. There is a basis of $V$ in which all represented elements of $\g$ are upper triangular simultaneously; equivalently, $V$ has a complete $\g$-invariant flag.
\end{theorem}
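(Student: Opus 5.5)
We prove Lie's Theorem by induction on $\dim_{\C}\g$. The whole argument reduces to one claim: \emph{the represented elements of $\g$ have a common eigenvector in $V$}. Once this holds, let $v_1$ be such a vector, so $\C v_1$ is $\g$-invariant. The quotient $V/\C v_1$ is again a finite-dimensional complex representation of the same solvable $\g$, so an induction on $\dim V$ gives a complete invariant flag there. Pulling that flag back and prepending $\C v_1$ gives a complete invariant flag of $V$. In the adapted basis every represented element is upper triangular, which is the equivalent matrix formulation. So the plan is to establish the common-eigenvector claim by induction on $\dim\g$. The case $\dim\g=0$ is trivial, and for $\dim\g=1$ we use the algebraic closure of $\C$ to get an eigenvector of a single operator.

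For the inductive step we first locate a codimension-one ideal $\mf{h}\subset\g$. Solvability gives $[\g,\g]\neq\g$ when $\g\neq 0$. The quotient $\g/[\g,\g]$ is abelian, so any hyperplane in it containing $0$ is an ideal. Its preimage $\mf{h}$ is an ideal of codimension one containing $[\g,\g]$, and $\mf{h}$ is solvable as a subalgebra. Write $\g=\mf{h}\oplus\C X$. By the induction hypothesis there is a common eigenvector $v$ for $\mf{h}$, with eigenvalue given by a linear functional $\lambda:\mf{h}\to\C$, so $\rho(H)v=\lambda(H)v$. Now define the weight space
\[
V_\lambda=\{w\in V \mid \rho(H)w=\lambda(H)w \text{ for all } H\in\mf{h}\},
\]
which is nonzero. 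If $V_\lambda$ is invariant under $\rho(X)$, then $\rho(X)$ restricted to $V_\lambda$ has an eigenvector by algebraic closure. That vector is a common eigenvector for all of $\g$, completing the induction.

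The main obstacle is proving the invariance of $V_\lambda$ under $\rho(X)$; this is the standard invariance lemma. Take $w\in V_\lambda$ and $H\in\mf{h}$. Then $\rho(H)\rho(X)w=\lambda(H)\rho(X)w-\lambda([X,H])w$. So it suffices to show $\lambda([X,H])=0$ for all $H\in\mf{h}$.

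To prove this, fix $w\neq 0$ in $V_\lambda$ and consider the Krylov space $U=\Span\{w,\rho(X)w,\rho(X)^2w,\dots\}$, of dimension $m$. By induction on $k$, each $\rho(H)$ acts on the basis $w_k=\rho(X)^kw$ by $\rho(H)w_k\equiv\lambda(H)w_k$ modulo $\Span\{w_0,\dots,w_{k-1}\}$. The induction uses that $[H,X]\in\mf{h}$ because $\mf{h}$ is an ideal. Hence $U$ is invariant under $\rho(H)$, and $\rho(H)$ is upper triangular on $U$ with constant diagonal $\lambda(H)$. In particular $\Tr_U\rho(H)=m\lambda(H)$.

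Apply this to $[X,H]\in\mf{h}$. The trace of a commutator of two operators preserving $U$ vanishes, so $\Tr_U\rho([X,H])=0$. This gives $m\lambda([X,H])=0$, and since $\operatorname{char}\C=0$ we get $\lambda([X,H])=0$. It is exactly here that characteristic zero is essential, and algebraic closure was used for the eigenvectors above; these are the two hypotheses the statement requires.
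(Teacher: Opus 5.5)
The paper does not prove this statement; it quotes Lie's theorem as standard background and applies it only after complexifying a real solvable algebra. Your argument is the standard textbook proof, and it is correct and complete as written. It reduces to a common eigenvector, builds a codimension-one ideal containing $[\g,\g]$, and proves the invariance lemma for $V_\lambda$ by computing the trace of $\rho([X,H])$ on the $\rho(X)$-Krylov space.

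You use algebraic closure for the eigenvector of $\rho(X)$ on $V_\lambda$ and characteristic zero to cancel $m$, and those are exactly the points where the paper's step ``over the reals one complexifies first'' is needed. For example, the rotation action of $\so(2)$ on $\R^2$ has no real common eigenvector.
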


Over the reals one complexifies first, $\g_{\C} = \g \otimes_{\R} \C$, preserving solvability; the resulting invariant flag is the algebraic content of the Introduction's solvable type. The radical $\mf{r} = \rad(\g)$ is the maximal solvable ideal and the nilradical $\mf{n} = \nil(\g)$ the maximal nilpotent one, with $\mf{n} \subseteq \mf{r} \subseteq \g$ and $[\mf{r}, \mf{r}] \subseteq \mf{n}$, so $\mf{r}/\mf{n}$ is abelian \cite{dixmier1996enveloping} (Appendix~\ref{app:nonreductive_structure}). The nilradical enters Neeb's theory of invariant cones (Section~\ref{sec:invariant_cones_prelim}); its symplectic modules constrain how the adjoint action coexists with convexity, and its nilpotency class is an algebraic invariant of the radical.

\subsubsection{Semisimplicity and the Levi Decomposition}

Semisimplicity is the complementary algebraic case. A Lie algebra is semisimple when $\rad(\g) = \{0\}$, equivalently when the Killing form $B(X, Y) = \Tr(\ad(X) \circ \ad(Y))$ is nondegenerate (Cartan's criterion), and simple when it is nonabelian with no non-trivial ideals. A semisimple algebra is an essentially unique direct sum of simple ideals $\g = \bigoplus_i \mf{s}_i$. This decomposition is algebraic and does not by itself imply mixing, thermalization, or scrambling in a selected dynamics.

The structure theorem organizing these concepts is the Levi decomposition \cite{knapp1996lie}.

\begin{theorem}[Levi Decomposition]
Any finite dimensional real Lie algebra $\g$ can be written as a semidirect sum (Levi decomposition):
\begin{equation}
\g = \mf{s} \ltimes \mf{r},
\end{equation}
where $\mf{r} = \rad(\g)$ is the radical, and $\mf{s}$ is a semisimple subalgebra (the Levi factor). The Levi factor is unique up to conjugation by an inner automorphism of the form $e^{\ad(z)}$ where $z$ belongs to the nilradical $\mf{n}$ (Malcev Harish-Chandra theorem \cite{jacobson2013lie}).
\end{theorem}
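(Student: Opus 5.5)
The plan is the classical cohomological argument for existence, followed by a separate argument for Malcev--Harish-Chandra conjugacy. Both steps reduce to the case where $\mf r$ is abelian.

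\textbf{Step 1 (reduction to an abelian radical).} Let $\mf r=\rad(\g)$, which exists and is an ideal because the sum of two solvable ideals is solvable. Then $\g/\mf r$ is semisimple, since the preimage of a solvable ideal of $\g/\mf r$ is a solvable ideal of $\g$. We need a subalgebra $\mf s$ complementary to $\mf r$. Induct on $\dim\mf r$. If $[\mf r,\mf r]\neq 0$, set $\mf a=[\mf r,\mf r]$. This is an ideal of $\g$ because $\mf r$ is characteristic. Apply induction to $\g/\mf a$, whose radical $\mf r/\mf a$ has smaller dimension, to obtain a Levi subalgebra $\bar{\mf s}$. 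Its preimage $\mf h$ satisfies $\rad(\mf h)=\mf a$, and $\dim\mf a<\dim\mf r$. Applying induction again to $\mf h$ gives $\mf s\subset\mf h$ with $\mf s\cap\mf a=0$, and this $\mf s$ is a complement to $\mf r$ in $\g$. It remains to treat the case where $\mf r$ is abelian. There we may moreover reduce to $\mf r$ being a minimal, hence irreducible, ideal by the same quotient argument.

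\textbf{Step 2 (abelian case via Whitehead).} Let $\mf r$ be abelian. Then $\mf r$ is a $\g/\mf r$-module. Choose a linear section $\sigma:\g/\mf r\to\g$. Its failure to be a homomorphism, $\omega(x,y)=[\sigma x,\sigma y]-\sigma[x,y]\in\mf r$, is a 2-cocycle in $Z^2(\g/\mf r,\mf r)$. By Whitehead's second lemma, $H^2(\mf s',V)=0$ for a semisimple $\mf s'$ and any finite-dimensional module $V$. This lemma is proved via the Casimir element: on nontrivial irreducibles the Casimir acts invertibly and yields a contracting homotopy, and the trivial module is handled by nondegeneracy of the Killing form together with $H^1=H^2=0$ with trivial coefficients. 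Hence $\omega=d\beta$, and $\sigma-\beta$ is a Lie homomorphism whose image is the desired $\mf s$. We work over $\R$ throughout, since the Casimir argument is field-independent in characteristic $0$. This is the main obstacle: the entire existence proof hinges on the vanishing of $H^2$, and the careful part is the trivial-coefficient case, namely $H^2(\mf s',\R)=0$. For that I would use $\mf s'=[\mf s',\mf s']$ and invariance of the Killing form to show that every 2-cocycle is of the form $B([x,y],z_0)$, which is a coboundary.

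\textbf{Step 3 (conjugacy).} Let $\mf s_1,\mf s_2$ be two Levi factors. I would again induct on $\dim\mf r$ and reduce to $\mf r$ abelian, using the quotient by $[\mf r,\mf r]$ as in Step 1; after that, conjugating inside a preimage is possible, and the conjugating elements lie in $[\g,\mf r]\subseteq\mf n$. In the abelian case, $\mf s_2$ is the graph of a map $\phi:\mf s_1\to\mf r$, and the homomorphism property of this graph says exactly that $\phi$ is a 1-cocycle. Whitehead's first lemma, $H^1=0$, gives $\phi(x)=[x,z]$ for some $z\in\mf r$. Then $e^{\ad z}=1+\ad z$ on $\mf s_1$ (because $\mf r$ is abelian) maps $\mf s_1$ onto $\mf s_2$. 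We may take $z$ in $[\mf s_1,\mf r]\subseteq[\g,\mf r]\subseteq\mf n$ by discarding the $\mf s_1$-invariant component. This yields conjugation by $e^{\ad z}$ with $z\in\mf n$, as stated.
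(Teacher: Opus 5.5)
The paper does not prove this statement. It quotes it as standard background, citing Knapp for the decomposition and Jacobson for the Malcev--Harish-Chandra part. Your outline is the classical cohomological argument of those references: reduce to an abelian radical through $\mathfrak a=[\mathfrak r,\mathfrak r]$, then use Whitehead's second lemma for existence and the first lemma for conjugacy. Its overall structure is sound. In particular, carrying $z\in[\mathfrak g,\mathfrak r]$ rather than $z\in\mathfrak n$ through the induction is the right bookkeeping. The reason is that $[\mathfrak g,\mathfrak r]$ maps onto $[\mathfrak g/\mathfrak a,\mathfrak r/\mathfrak a]$ and contains $[\mathfrak h,\mathfrak a]$, while the nilradical of $\mathfrak g/\mathfrak a$ need not lift into $\mathfrak n$.

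\textbf{Single exponential.} What is missing is the step that turns the output of your conjugacy induction into a \emph{single} automorphism $e^{\ad z}$, which is what the statement asserts. Your induction yields a product $e^{\ad z_2}e^{\ad z_1}$. Here $z_1\in[\mathfrak g,\mathfrak r]$ lifts the element conjugating $\bar{\mathfrak s}_1$ to $\bar{\mathfrak s}_2$ in $\mathfrak g/\mathfrak a$. The element $z_2\in[\mathfrak h,\mathfrak a]$ conjugates $e^{\ad z_1}\mathfrak s_1$ to $\mathfrak s_2$ inside $\mathfrak h=\mathfrak s_2+\mathfrak a$. To close this, use that $\mathfrak m=[\mathfrak g,\mathfrak r]$ is a nilpotent ideal acting on $\mathfrak g$ by nilpotent operators. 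Then the Baker--Campbell--Hausdorff series terminates and $e^{\ad z_2}e^{\ad z_1}=e^{\ad(z_2* z_1)}$ with $z_2* z_1\in\mathfrak m$. So $\{e^{\ad z}: z\in\mathfrak m\}$ is a group, and the induction hypothesis is reproduced in the form you stated.

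\textbf{Sign slip.} In the abelian case, with $\phi(x)=[x,z]$ you get $e^{\ad z}x=x+[z,x]=x-\phi(x)$, because $[z,[z,x]]\in[\mathfrak r,\mathfrak r]=0$. So $e^{\ad z}$ carries $\mathfrak s_1$ onto the graph of $-\phi$, not $\phi$; use $e^{-\ad z}$ instead.

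\textbf{Smaller points.} $[\mathfrak r,\mathfrak r]$ is an ideal of $\mathfrak g$ because it is stable under every derivation of $\mathfrak r$, in particular under $\ad x|_{\mathfrak r}$ for $x\in\mathfrak g$. The fact that $\mathfrak r$ is characteristic in $\mathfrak g$ is not what is used.

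In the trivial-coefficient case of the second Whitehead lemma, the fact that produces the form $B([x,y],z_0)$ is that derivations of a semisimple algebra are inner. Write $\omega(x,y)=B(Dx,y)$. The alternating property, the cocycle identity and invariance of $B$ make $D$ a $B$-skew derivation, hence $D=\ad z_0$. The identity $\mathfrak s'=[\mathfrak s',\mathfrak s']$ plays a different role: it gives $H^1(\mathfrak s',\R)=0$, which your Step~3 needs for trivial summands of $\mathfrak r$.
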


The semidirect structure is set by the action of $\mf{s}$ on $\mf{r}$, a homomorphism $\rho: \mf{s} \to \text{Der}(\mf{r})$ with $\rho(S)(R) = [S, R]$. This $\rho$ is the algebraic coupling between the selected Levi factor and the radical.

With $\g$ identified with $\mf{s} \oplus \mf{r}$, the bracket of $X_i = S_i+R_i$ in $\g = \mf{s} \ltimes_{\rho} \mf{r}$ is
\begin{equation}
\label{eq:semidirect_bracket}
[S_1+R_1, S_2+R_2] = \underbrace{([S_1, S_2])}_{\in \mf{s}} + \underbrace{([R_1, R_2] + \rho(S_1)(R_2) - \rho(S_2)(R_1))}_{\in \mf{r}}.
\end{equation}
The $\rho$ terms show how the semisimple component modulates the solvable one, the coupling that makes non-reductive algebras hard.

\subsubsection{Reductive and Non-Reductive Algebras}

The coupling $\rho$ fixes the global structure, and the split between reductive and non-reductive algebras is central to this work.

\begin{definition}[Reductivity]
A Lie algebra $\g$ is reductive if its radical equals its center, $\rad(\g) = Z(\g)$, equivalently if the adjoint representation is completely reducible (Weyl's theorem \cite{Humphreys2012}).
\end{definition}

When $\g$ is reductive the radical is abelian, since $[\mf{r}, \mf{r}] \subseteq \mf{n}$ and $\mf{r}=Z(\g)$ force $[\mf{r}, \mf{r}]=0$. The decomposition becomes a direct sum $\g = \mf{s} \oplus Z(\g)$ with trivial $\rho$, so the semisimple and central parts decouple.

Unitary dynamics live in subalgebras of $\mf{u}(d) = \mf{su}(d) \oplus \R \cdot iI$, which is reductive, so the closed-system DLA framework reduces to the semisimple part $\mf{su}(d)$. This is why that framework is tractable.

A Lie algebra is non-reductive if $\rad(\g) \neq Z(\g)$. This occurs when the adjoint representation is not completely reducible. This structure arises in two ways:
\begin{enumerate}
    \item The radical $\mf{r}$ is non abelian, which happens when the nilradical $\mf{n}$ is non-trivial and $\mf{r}$ acts non-trivially on it.
    \item The radical $\mf{r}$ is abelian, but $\rho$ acts non-trivially, so the radical is an ideal lying outside the center.
\end{enumerate}

\paragraph{The Prevalence of Non-Reductivity.}
Such algebras are common in physics, the Euclidean and Poincar\'e algebras among them (Appendix~\ref{app:nonreductive_structure}). The ambient Markovian algebra $\gLK$ is itself non-reductive (Section~\ref{sec:structure_glk}), and System Lie Algebras can inherit the feature when the induced Levi action on the radical is non-trivial or when the radical is non-abelian. The coupling $\rho$ enters the FCC and constrains the fiber map, but it does not determine the fibers.

\subsection{Markovian Dynamics, the Algebra $\g_{LK}$, and its Non-Reductivity}
\label{sec:structure_glk}

We review the formalism of Markovian open quantum systems and identify the ambient Lie algebra in which the classification takes place.

\paragraph*{Operator Spaces and Superoperators}

We consider a system with Hilbert space $\Hcal$ ($\dim(\Hcal)=d$). The space of bounded linear operators is $\Bcal(\Hcal)$, isomorphic to the matrix algebra $M_d(\C)$. The state is given by a density operator $\rho \in \Bcal(\Hcal)$, satisfying $\rho \ge 0$ (positive semidefinite) and $\Tr(\rho)=1$. The space of Hermitian operators is denoted $\Bcal_H(\Hcal)$, a real vector space of dimension $d^2$.

Dynamics are described by superoperators $\Lcal \in \End(\Bcal(\Hcal))$. We work in the Liouville space representation, where operators are viewed as vectors (vectorization) and superoperators as matrices acting on these vectors. The space is often equipped with the Hilbert Schmidt inner product $\langle A, B \rangle = \Tr(A^\dagger B)$. The dimension of the space of superoperators is $d^4$.

\subsubsection{The Ambient Lie Algebra of Physical Dynamics $\g_{LK}$}
\label{sec:glk_definition}

We are interested in dynamics that preserve the essential properties of the density operator. The fundamental physical constraints defining the space of infinitesimal generators are:
\begin{enumerate}
    \item \textbf{Hermiticity Preservation (HP):} $\Lcal(\rho)^\dagger = \Lcal(\rho)$ if $\rho^\dagger = \rho$. This ensures that the evolution preserves the reality of observables.
    \item \textbf{Trace Annihilation (TA):} $\Tr(\Lcal(\rho)) = 0$. This infinitesimal condition ensures the conservation of probability for the integrated semigroup. (The evolution $\Ecal_t = e^{t\Lcal}$ satisfies $\Tr(\Ecal_t(\rho)) = \Tr(\rho)$).
\end{enumerate}

These two conditions are linear, and they are conditions on a \emph{generator}: a trace-annihilating map sends all of $\Bcal_H(\Hcal)$ into the traceless subspace, so it is never invertible. The HPTA maps therefore form a Lie algebra, and the group whose Lie algebra it is consists of the invertible Hermiticity-preserving, trace-\emph{preserving} maps, a closed subgroup of $GL(\Bcal_H(\Hcal))$. The structure of these maps and the associated constraints have been studied extensively since the foundational work on positive maps.

\begin{definition}[Lindblad Kossakowski Lie Algebra $\g_{LK}$]
The Lie algebra $\g_{LK}$ is the real Lie algebra consisting of all Hermiticity-preserving and trace-annihilating superoperators on $\Bcal(\Hcal)$.
\end{definition}

The space of Hermitian operators has dimension $d^2$, and the space of endomorphisms of $\Bcal_H(\Hcal)$ has dimension $d^4$; the constraints above cut this down. The structure theorem below builds an explicit isomorphism onto an affine algebra, and the exact count $\dim \g_{LK} = (d^2-1)^2 + (d^2-1) = d^4 - d^2$ is read off from it. The GKLS generators $\WGKLS$, introduced in Section~\ref{sec:structure_gkls}, form a cone inside $\g_{LK}$.

\subsubsection{The Structure Theorem: $\gLK$ is an Affine Algebra}

How hard the cone geometry of open dynamics can get is decided by the structure of the ambient algebra. The following theorem identifies that structure exactly.

\begin{theorem}[Structure of $\g_{LK}$ (affine isomorphism due to O'Meara \cite{o2025lie})]
\label{thm:structure_glk}
The Lindblad Kossakowski Lie algebra $\g_{LK}$ is isomorphic to the Lie algebra of the affine group acting on the space of traceless Hermitian operators,
\begin{equation}
\g_{LK} \cong \aff(\Htr) = \gl(\Htr) \ltimes \Htr,
\end{equation}
where the translational part $\Htr$ is an abelian ideal \cite{o2025lie}.
\end{theorem}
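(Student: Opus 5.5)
The plan is to work in the real vector space $\Bcal_H(\Hcal)$ and to split off the identity direction. Every HP superoperator is determined by its restriction to $\Bcal_H(\Hcal)$, because $\Bcal(\Hcal)=\Bcal_H(\Hcal)\oplus i\Bcal_H(\Hcal)$ and HP forces complex-linearity to extend this restriction uniquely. So $\g_{LK}$ is identified with the real linear maps $\Lcal:\Bcal_H(\Hcal)\to\Bcal_H(\Hcal)$ satisfying $\Tr\circ\Lcal=0$.

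I will decompose $\Bcal_H(\Hcal)=\R\cdot \tfrac{1}{d}I\oplus\Htr$. In this splitting, TA says exactly that the image of $\Lcal$ lies in $\Htr$. Hence each $\Lcal$ is described by two pieces of data:
\begin{itemize}
\item a linear map $A=\Lcal|_{\Htr}\in\gl(\Htr)$;
\item a vector $b=\Lcal(\tfrac1d I)\in\Htr$.
\end{itemize}
The candidate isomorphism is $\Phi(\Lcal)=(A,b)$. It is linear, and it is bijective, since any pair $(A,b)$ defines a unique HPTA map by prescribing it on $\tfrac1d I$ and on $\Htr$. This already gives the dimension count $(d^2-1)^2+(d^2-1)$.

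The main step is to check that $\Phi$ intertwines brackets, using the convention $[(A,b),(A',b')]=([A,A'],Ab'-A'b)$ for $\aff(\Htr)$. Write $\rho=\tfrac1d I+x$ with $x\in\Htr$, so $\Lcal(\rho)=b+Ax$. This is affine in $x$ with linear part $A$ and translation $b$. Composing, $\Lcal\Lcal'(\tfrac1dI)=\Lcal(b')=Ab'$, because $b'\in\Htr$. Also $\Lcal\Lcal'|_{\Htr}=AA'$. Therefore the commutator has linear part $[A,A']$ and translation $Ab'-A'b$, which is exactly the affine bracket.

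The potential obstacle is bookkeeping rather than substance. One must note that $\Lcal$ as a matrix on $\R\oplus\Htr$ has block form $\begin{pmatrix}0&0\\ b&A\end{pmatrix}$. With this form, the bracket computation is just block matrix multiplication, and the affine group appears as the embedding $(A,b)\mapsto\begin{pmatrix}0&0\\b&A\end{pmatrix}\subset\gl(d^2)$.

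Finally, $\Htr\cong\{(0,b)\}$ is an ideal, since $[(A,0),(0,b)]=(0,Ab)$. It is abelian, since $[(0,b),(0,b')]=(0,0)$. Both facts are read directly off the bracket formula.
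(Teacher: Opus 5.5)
Your proposal is correct and follows essentially the same route as the paper. You split $\Bcal_H(\Hcal)$ along the identity, use trace annihilation to force the image into $\Htr$, define $\Phi(\Lcal)$ from $\Lcal|_{\Htr}$ and the value on the identity direction, verify the affine bracket by composition, and get bijectivity from the unique complex-linear extension of a Hermiticity-preserving real map. Normalizing the translation as $\Lcal(\tfrac1d I)$ instead of the paper's $\Lcal(I)$, and doing the bracket check as block-matrix multiplication, are cosmetic differences, since rescaling $b$ is an automorphism of $\aff(\Htr)$.
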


\begin{proof}[Proof sketch]
Trace annihilation forces $\Image(\Lcal) \subseteq \Htr$ for every $\Lcal \in \g_{LK}$, so the assignments $A_\Lcal = \Lcal|_{\Htr}$ and $b_\Lcal = \Lcal(I)$ are well defined, and $\Phi(\Lcal) = (A_\Lcal, b_\Lcal)$ maps $\g_{LK}$ into $\aff(\Htr)$. The verification that $\Phi$ is a bijective Lie algebra homomorphism is a direct computation, given in Appendix~\ref{app:proof_structure_glk}.
\end{proof}

The affine picture is the generalized Bloch representation. Writing a trace-one state as $\rho=I/d+x$, with $x\in\Htr$, gives
\[
\dot x=A_\Lcal x+\frac{1}{d}b_\Lcal,
\qquad b_\Lcal=\Lcal(I).
\]
Thus $b_\Lcal$ records non-unital drift and is nonzero exactly when the maximally mixed state is not stationary under the generator. Cooling or purification requires additional assumptions about the model, the state, and the relevant energy ordering.

The isomorphism is O'Meara's \cite{o2025lie}. The following consequence, which anchors the use of non-reductive cone theory throughout this paper, is established here.

\begin{corollary}[Non-reductivity of $\g_{LK}$]
\label{cor:glk_nonreductive}
For $d \ge 2$,
\begin{equation}
\rad(\g_{LK}) \cong \R \cdot \Id_{\Htr} \ltimes \Htr, \qquad Z(\g_{LK}) = \{0\},
\end{equation}
so $\rad(\g_{LK}) \neq Z(\g_{LK})$ and $\g_{LK}$ is non-reductive.
\end{corollary}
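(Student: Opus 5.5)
By Theorem~\ref{thm:structure_glk} it suffices to work in $\aff(V)=\gl(V)\ltimes V$ with $V=\Htr$ and $n=\dim V=d^2-1\ge 3$. The bracket is
\[
[(A,b),(A',b')]=([A,A'],\,Ab'-A'b).
\]
I will compute the radical and the center there and transport them back. Since $d\ge 2$, we have $n\ge 1$, and that is all the argument needs.

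\emph{Step 1: the candidate radical is a solvable ideal.} Let $\rr_0=\R\,\Id_V\ltimes V$. It is an ideal because
\[
[(A,b),(\lambda\Id_V,c)]=(0,\,Ac-\lambda b)\in V\subseteq\rr_0 .
\]
It is solvable because $[\rr_0,\rr_0]\subseteq V$ and $V$ is abelian, so $\rr_0^{(2)}=0$. Hence $\rr_0\subseteq\rad(\aff(V))$.

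\emph{Step 2: maximality.} Take the projection $p:\aff(V)\to\gl(V)$, a Lie homomorphism with kernel $V\subseteq\rr_0$. The image $p(\rad)$ is a solvable ideal of $\gl(V)=\sll(V)\oplus\R\,\Id_V$, which is reductive with center $\R\,\Id_V$. A solvable ideal of a reductive algebra lies in its center, since its projection to the semisimple part $\sll(V)$ is a solvable ideal and hence zero. So $p(\rad)\subseteq\R\,\Id_V$, and therefore $\rad\subseteq p^{-1}(\R\,\Id_V)=\rr_0$. This gives $\rad(\aff(V))=\R\,\Id_V\ltimes V$. The same subtlety about $n=1$ is harmless here, because then $\sll(V)=0$. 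I expect this to be the only step needing care, namely the fact that solvable ideals of a reductive algebra are central; one can also cite the standard fact $\rad(\gl(V))=\R\,\Id_V$.

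\emph{Step 3: the center vanishes.} Suppose $(A,b)$ is central. Bracketing with $(0,c)$ gives $Ac=0$ for all $c$, so $A=0$. Bracketing $(0,b)$ with $(\Id_V,0)$ gives $(0,-b)=0$, so $b=0$. Hence $Z=\{0\}$.

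Since $\rr_0\neq 0$, the radical differs from the center, and by the definition of reductivity $\aff(V)$, and hence $\gL K$, is non-reductive. Pulling back through $\Phi$ identifies the radical as the span of the generators with $\Lcal|_{\Htr}\in\R\,\Id_{\Htr}$, which completes the argument.
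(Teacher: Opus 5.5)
Your proof is correct and follows the paper's outline. You pass to the affine model, identify the radical as the preimage of $\rad(\gl(\Htr))=\R\cdot\Id_{\Htr}$ under the linear-part projection, and kill the center by explicit brackets. On the radical you prove the quotient step directly, whereas the paper cites Bourbaki and appeals to faithfulness of the action; your argument shows faithfulness is not needed for that step. On the center, you bracket with pure translations $(0,c)$ to force $A=0$ at once and then use $(\Id_{\Htr},0)$ to force $b=0$. This is slightly more elementary than the paper's route, which goes through Schur's lemma ($A=\lambda\Id_{\Htr}$), then faithfulness ($b=0$), and finally $\lambda b'=0$.
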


\begin{proof}
We argue in the affine model of Theorem~\ref{thm:structure_glk}. The linear part splits as $\gl(\Htr) = \sll(\Htr) \oplus \R \cdot \Id_{\Htr}$, with $\sll(\Htr)$ semisimple for $d \ge 2$. Since the translational ideal $\Htr$ is abelian and the action of the semisimple part of $\gl(\Htr)$ on it is faithful, the radical of the semidirect product is $\rad(\gl(\Htr)) \ltimes \Htr = \R \cdot \Id_{\Htr} \ltimes \Htr$, of dimension $d^2$; the product is semidirect and not direct, since $[\Id_{\Htr}, b] = b \neq 0$ for $b \neq 0$, so the radical is non-abelian with nilradical $\Htr$ (see \textit{Lie Groups and Lie Algebras}, Ch.~1, \S\S 5--6 \cite{bourbaki1989lie}). For the center, suppose $(A, b)$ commutes with every $(A', b')$. Vanishing of $[A, A']$ for all $A' \in \gl(\Htr)$ forces $A = \lambda \Id_{\Htr}$ by Schur's lemma, because the standard representation of $\gl(\Htr)$ on $\Htr$ is irreducible. The choice $b' = 0$ then gives $A' b = 0$ for every $A'$, hence $b = 0$ by faithfulness of the action, and finally $\lambda b' = 0$ for all $b' \in \Htr$ gives $\lambda = 0$. The center is trivial, and the corollary follows.
\end{proof}

The affine coupling makes $\gLK$ non-reductive and separates open dynamics from the closed case, where the ambient algebra $\mf{u}(d)$ is reductive (Section~\ref{sec:lie_structure_prelim}). In $\gLK$ the linear part shears the translational part, so convex-cone analysis must account for that coupling. Section~\ref{sec:invariant_cones_prelim} gives the qualified comparison with Neeb's full-algebra invariant-cone theory, while the reconstruction proof uses only the Lie-wedge data.

\subsection{Convex Geometry, Lie Semigroups, and Lie Wedges}
\label{sec:convex_lie_wedges_prelim}

When evolution is irreversible it generates a semigroup. Lie semigroup theory is the right framework, and its infinitesimal counterpart is the Lie wedge \cite{hilgert1989lie}, which links the Lie algebra to the convex geometry that irreversibility imposes.

\paragraph*{Cones, Wedges, and Duality}

We fix the convex-geometry notation the Lie wedge needs. A cone $W$ in a real vector space $V$ satisfies $\lambda X \in W$ for all $X \in W$ and $\lambda \ge 0$; it is a convex cone when also $X+Y \in W$ for all $X, Y \in W$, and a wedge when it is closed. Its edge $E(W) = W \cap (-W)$ is the largest subspace it contains, and $W$ is pointed when $E(W) = \{0\}$, the purely irreversible case, and generating (nonempty interior) when $W - W = V$. The dual wedge $W^* = \{f \in V^* \mid f(X) \ge 0 \text{ for all } X \in W\}$ is a cone of linear functionals on $V$ and determines the supporting hyperplanes of $W$. When $V$ is a generator algebra, an observable gives such a functional only after an input state $\rho$ has been fixed, through the expectation pairing $f_{\rho,A}(\Lcal)=\Tr[A\Lcal(\rho)]$. This pairing maps the $d^2$-dimensional real space of Hermitian observables into $V^*$ and need not span the dual; for example, $\dim\gLK=d^4-d^2$. Thus there is no canonical identification of $W^*$ with a cone of observables.

\paragraph*{Face Structure and Geometric Invariants}

\begin{definition}[Face]
A face $F$ of $W$ is a subwedge with $X+Y \in F$ and $X, Y \in W$ implying $X, Y \in F$; it is exposed when $F = \{X \in W \mid f(X)=0\}$ for some $f \in W^*$.
\end{definition}

The edge is the smallest face. A wedge with non-trivial edge has no extreme rays at all: pick $e \in E(W) \setminus \{0\}$ and $x \in W \setminus \{0\}$, and the decomposition $x = (x+e) + (-e)$ has both summands in $W$, so $\R_{\ge 0}x$ can be a face only if $-e \in \R_{\ge 0}x$, which forces $x \in E(W)$, and there $x = 2x + (-x)$ with $-x \in W \setminus \R_{\ge 0}x$ rules the ray out as well. The relevant indecomposable conic directions are therefore the extreme rays of the pointed quotient cone $W/E(W)$, and every statement about extreme rays below is to be read in that quotient.

\subsubsection{Lie Wedges: The Infinitesimal Objects of Irreversibility}

A Lie wedge is the tangent object of a Lie semigroup, exactly so locally, with the global question more delicate (Section~\ref{sec:dlwp_framework}); the notion comes from the Lie semigroup literature \cite{hilgert1989lie}.

\begin{definition}[Lie Wedge]
A closed convex cone $W$ in a Lie algebra $\g$ is a Lie wedge if it satisfies the invariance condition:
\begin{equation}
\label{eq:lie_wedge_condition}
e^{\ad(X)}(W) \subseteq W \quad \text{for all } X \in E(W).
\end{equation}
\end{definition}

The edge collects the reversible (group like) generators, and the invariance condition asks that their action leave the cone fixed. For quantum dynamics the edge is Hamiltonian, so the condition encodes a physical fact, that a unitary change of basis preserves complete positivity. Applying \eqref{eq:lie_wedge_condition} to $\pm X$ for $X \in E(W)$ gives $e^{\ad(X)}(W) = W$, so the inclusion form here matches the equality form of \cite{hilgert1989lie}.

\subsubsection{Infinitesimal Condition: Subtangency and the Tangent Cone}

The Lie wedge condition has an infinitesimal form built on the tangent cone, the Bouligand contingent cone at a point.

\begin{definition}[Tangent Cone]
\label{def:tangent_cone_wedge}
The tangent cone $T_Y(W)$ at $Y \in W$ collects the limits $V = \lim_{k\to\infty} \lambda_k (Y_k - Y)$ over sequences $\{Y_k\} \subset W$ converging to $Y$ and positive scalars $\{\lambda_k\}$.
\end{definition}

If $W$ is a convex cone, the definition simplifies significantly. The tangent cone at $Y \in W$ is the closure of the cone generated by the differences $X-Y$ for $X \in W$:
\begin{equation}
T_Y(W) = \overline{\cone(W - \{Y\})} = \overline{\{ \lambda(X-Y) \mid X \in W, \lambda \ge 0 \}}.
\end{equation}
Geometrically $T_Y(W)$ holds the directions pointing from $Y$ into $W$.

The infinitesimal Lie wedge condition is subtangency, which relates the bracket to the tangent cone.

\begin{proposition}[Subtangency Condition]
\label{prop:subtangency}
Let $W$ be a closed wedge in a finite-dimensional Lie algebra, with edge $E(W) = W \cap (-W)$. Then $W$ is a Lie wedge if and only if the subtangency condition holds:
\begin{equation}
\label{eq:subtangency}
[X, Y] \in T_Y(W) \quad \text{for all } X \in E(W), Y \in W.
\end{equation}
\end{proposition}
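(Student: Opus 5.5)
The plan is to prove the two implications separately. The forward direction comes from differentiating the flow at $t=0$. The converse is a Nagumo/Bony--Brezis-type invariance argument for the linear flow $t\mapsto e^{t\,\ad(X)}$ on the closed convex set $W$.

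\emph{Necessity.} Assume $W$ is a Lie wedge and fix $X\in E(W)$, $Y\in W$. Since $E(W)$ is a subspace, $tX\in E(W)$ for every $t>0$, so \eqref{eq:lie_wedge_condition} gives $Y_t:=e^{t\,\ad(X)}Y\in W$. Then $Y_t\to Y$ as $t\downarrow 0$, and
\[
\tfrac1t(Y_t-Y)\to \ad(X)Y=[X,Y].
\]
Taking $Y_k=Y_{1/k}$ and $\lambda_k=k$ in Definition~\ref{def:tangent_cone_wedge} shows $[X,Y]\in T_Y(W)$.

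\emph{Sufficiency.} Assume \eqref{eq:subtangency} and fix $X\in E(W)$. Let $A=\ad(X)$, a linear vector field on $\g$. It suffices to show that the trajectories $y(t)=e^{tA}Y$ with $Y\in W$ stay in $W$ for all $t\ge 0$; taking $t=1$ then gives \eqref{eq:lie_wedge_condition}. The hypothesis says exactly that $Ay\in T_y(W)$ for every $y\in W$, i.e.\ the vector field is subtangential to the closed set $W$.

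\emph{Key step: invariance of $W$.} I would not quote Nagumo's theorem but prove it directly by a distance-function argument. This is the main obstacle: subtangency only controls points of $W$, while the trajectory might leave $W$, so one has to pass through nearest-point projections.

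Fix a Euclidean norm on $\g$ and let $P$ be the metric projection onto the closed convex set $W$. Set $\varphi(t)=\tfrac12\operatorname{dist}(y(t),W)^2$. This function is differentiable with
\[
\varphi'(t)=\langle y-P y,\;Ay\rangle .
\]
Split $Ay = A(Py) + A(y-Py)$.
\begin{itemize}
\item For the first term, $A(Py)\in T_{Py}(W)$ by hypothesis. For convex $W$ the normal vector $y-Py$ satisfies $\langle y-Py,\,Z-Py\rangle\le 0$ for all $Z\in W$. This inequality passes to the closed conic hull, so $\langle y-Py,\,A(Py)\rangle\le 0$.
\item For the second term, $\langle y-Py,\,A(y-Py)\rangle\le \|A\|\,\|y-Py\|^2=2\|A\|\varphi$.
\end{itemize}
Together these give $\varphi'\le 2\|A\|\varphi$ with $\varphi(0)=0$. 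Gronwall's inequality then forces $\varphi\equiv 0$ on $[0,\infty)$, so $y(t)\in W$ because $W$ is closed.

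Convexity of $W$ is used essentially here: it gives the simple description $T_Y(W)=\overline{\cone(W-\{Y\})}$ stated above and the variational inequality for $P$. Closedness of $W$ is what makes the projection well defined and the final conclusion valid.

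Finally, I would remark that the argument applied to $-X\in E(W)$ recovers the equality form $e^{\ad(X)}W=W$, which is consistent with the comment after \eqref{eq:lie_wedge_condition}.
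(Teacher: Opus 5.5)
Your proof is correct and takes essentially the paper's route: necessity by differentiating the edge flow $e^{t\,\ad(X)}$, and sufficiency by a Nagumo/Bony--Brezis viability argument, which the paper only cites from Hilgert--Hofmann--Lawson. You make that viability step self-contained, and the argument holds up: for closed convex $W$ the function $\tfrac12\operatorname{dist}(\cdot,W)^2$ is $C^1$ with gradient $y-Py$, the vector $y-Py$ pairs nonpositively with $T_{Py}(W)$, and Gronwall then closes the estimate.
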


This is the subtangentiality characterization of Hilgert, Hofmann, and Lawson \cite[Prop.~1.3]{hofmann1991shortcourse}, \cite[Ch.~II, Ch.~IV]{hilgert1989lie}. The forward implication differentiates $e^{\ad(X)}W = W$ at the identity. The reverse is the substantive direction, where closedness does the work, since subtangency of the field $Y \mapsto [X, Y]$ forces the flow $e^{t\,\ad(X)}$ to leave the closed set invariant, a viability argument of Nagumo and Bony--Brezis type. The edge keeps pointing into the wedge, and this is the form we use when deriving the Fiber Compatibility Condition in semidirect products.

\subsection{The GKLS Wedge $\WGKLS$}
\label{sec:structure_gkls}

With the Lie wedge language in place, we can describe the set of physical generators precisely. This subsection records the GKLS theorem, explains why $\WGKLS$ fails to close under the Lie bracket, and establishes the structure it does have, that of a Lie wedge.

\subsubsection{Quantum Dynamical Semigroups and the GKLS Theorem}

Markovian evolution is described by a QDS, a one parameter semigroup of CPTP maps $\{\Ecal_t\}_{t \ge 0}$ generated by $\Lcal$, such that $\Ecal_t = e^{t\Lcal}$. The defining physical constraint is Complete Positivity (CP). A map $\Ecal$ is positive if it maps positive operators to positive operators. It is completely positive if the extended map $\Ecal \otimes \Id_k$ acting on $\Bcal(\Hcal \otimes \C^k)$ is positive for all $k \ge 1$; the stronger condition reflects the tensor product structure of quantum mechanics and is what keeps the dynamics physical in the presence of entanglement with an auxiliary system. Which channels arise from such semigroups, and how the channel set stratifies under composition, is a question of its own, developed in the divisibility theory of Wolf and Cirac \cite{wolf2008dividing}; this paper stays at the level of generators.

The structure of the generator $\Lcal$ of a QDS is given by the celebrated GKLS theorem \cite{gorini1976completely, lindblad1976generators}. We work with the Kossakowski matrix form, as it explicitly separates the coherent and incoherent components and parameterizes the positivity constraint. We denote the space of traceless Hermitian operators by $\Htr$.

\begin{theorem}[GKLS Theorem (Kossakowski Matrix Form)]
A linear map $\Lcal \in \g_{LK}$ is the generator of a QDS if and only if it can be written in the form using a Hilbert--Schmidt orthonormal basis $\{F_j\}_{j=1}^{d^2-1}$ for $\Htr$, with $\Tr(F_i^\dagger F_j)=\delta_{ij}$:
\begin{equation}
\label{eq:gkls_kossakowski}
\Lcal(\rho) = -i[H, \rho] + \sum_{j,k=1}^{d^2-1} K_{jk} \left( F_j \rho F_k^\dagger - \frac{1}{2} \{F_k^\dagger F_j, \rho\} \right),
\end{equation}
where $H$ is a Hermitian operator (Hamiltonian), and the Kossakowski matrix $K$ is positive semidefinite ($K \ge 0$).
\end{theorem}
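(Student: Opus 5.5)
The proof has two implications. For sufficiency, I show that every map of the form \eqref{eq:gkls_kossakowski} with $K\ge 0$ generates a CPTP semigroup. For necessity, I show that every QDS generator admits such a representation. Throughout I use the Choi--Jamio\l{}kowski correspondence and the notion of conditional complete positivity (CCP). Here $\Lcal$ is CCP if its Choi matrix $C_\Lcal=\sum_{ij}E_{ij}\otimes\Lcal(E_{ij})$ is positive semidefinite on the orthogonal complement of the maximally entangled vector $\Omega=\sum_i e_i\otimes e_i$. Equivalently, $P^\perp C_\Lcal P^\perp\ge 0$, where $P^\perp$ is the projection onto that complement; this is the ``conditional Choi compression'' used in Example~\ref{ex:qutrit_closure}. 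The argument is organised so that both directions pass through this single criterion. First, $\Lcal$ generates a semigroup of CP maps if and only if $\Lcal$ is HP and CCP. Second, an HPTA map is CCP if and only if it has the form \eqref{eq:gkls_kossakowski} with $K\ge 0$.

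For the first equivalence, suppose $e^{t\Lcal}$ is CP for all $t\ge 0$. Differentiating at $t=0$ gives, for every vector $\psi\perp\Omega$,
\[
\langle\psi,C_\Lcal\psi\rangle=\lim_{t\to0^+}\tfrac1t\bigl(\langle\psi,C_{e^{t\Lcal}}\psi\rangle-\langle\psi,C_{\Id}\psi\rangle\bigr).
\]
Here $C_{\Id}=|\Omega\rangle\langle\Omega|$, so $\langle\psi,C_{\Id}\psi\rangle=0$ and the difference quotient is non-negative; hence $P^\perp C_\Lcal P^\perp\ge 0$. For the converse, write $\Lcal=\Phi+\kappa(\cdot)+(\cdot)\kappa^\dagger$ with $\Phi$ CP. This splitting is exactly the content of the second equivalence, so I prove that first. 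Given the splitting, the Lie--Trotter product formula
\[
e^{t\Lcal}=\lim_{n\to\infty}\bigl(e^{t\Phi/n}\,e^{t\mathcal{K}/n}\bigr)^n,\qquad \mathcal{K}(\rho)=\kappa\rho+\rho\kappa^\dagger,
\]
yields CP maps. Indeed, $e^{t\Phi}=\sum_k (t\Phi)^k/k!$ is a sum of CP maps, and $e^{t\mathcal K}(\rho)=e^{t\kappa}\rho\, e^{t\kappa^\dagger}$ is CP. The set of CP maps is closed, so the limit is CP. Trace preservation follows from trace annihilation, since $\tfrac{d}{dt}\Tr e^{t\Lcal}(\rho)=\Tr\Lcal(e^{t\Lcal}\rho)=0$.

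For the second equivalence, I expand an arbitrary HP map in the operator basis $F_0=I/\sqrt d,\,F_1,\dots,F_{d^2-1}$:
\[
\Lcal(\rho)=\sum_{\alpha,\beta=0}^{d^2-1} c_{\alpha\beta}\,F_\alpha\rho F_\beta^\dagger .
\]
Hermiticity preservation makes the coefficient matrix $c$ Hermitian, and $c$ is unitarily equivalent to $C_\Lcal$ via the basis change $F_\alpha\mapsto(F_\alpha\otimes I)\Omega$. The vector $(F_0\otimes I)\Omega$ is proportional to $\Omega$. The remaining vectors $(F_j\otimes I)\Omega$, $j\ge1$, are orthogonal to $\Omega$ because $\langle\Omega,(F_j\otimes I)\Omega\rangle=\Tr F_j=0$. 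So the compression $P^\perp C_\Lcal P^\perp$ corresponds exactly to the block $K=(c_{jk})_{j,k\ge1}$, and CCP is equivalent to $K\ge 0$. It remains to collect the terms with $\alpha=0$ or $\beta=0$. They have the form $\kappa\rho+\rho\kappa^\dagger$ with $\kappa=\tfrac{c_{00}}{2d}I+\tfrac1{\sqrt d}\sum_j c_{j0}F_j$. Split $\kappa=G-iH$ with $G,H$ Hermitian. Trace annihilation, $\Tr\Lcal(\rho)=0$ for all $\rho$, forces $2G=-\sum_{jk}K_{jk}F_k^\dagger F_j$. This produces the anticommutator term $-\tfrac12\{F_k^\dagger F_j,\rho\}$ and the commutator $-i[H,\rho]$, which is \eqref{eq:gkls_kossakowski}. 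Conversely, substituting \eqref{eq:gkls_kossakowski} back into the expansion shows that any such map with $K\ge0$ is HPTA with Kossakowski block $K$, hence CCP. The $\Phi$ in the first step is then $\sum K_{jk}F_j\rho F_k^\dagger$, which is CP because $K\ge0$.

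The main obstacle is the bookkeeping in this last step. One must check that the $\Omega$-direction of the Choi matrix carries exactly the identity-containing coefficients, and that trace annihilation uniquely fixes the Hermitian part $G$ of $\kappa$. Uniqueness of $H$ holds only up to a multiple of $I$, which I normalise by taking $H$ traceless. I expect a careful index computation to settle this, and it is also where the normalisation $\Tr(F_i^\dagger F_j)=\delta_{ij}$ is essential.
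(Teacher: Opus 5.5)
Your argument is correct. The paper does not prove this theorem: it quotes it from \cite{gorini1976completely, lindblad1976generators}, and it asserts without argument that $K\ge 0$ is equivalent to conditional complete positivity.

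Your necessity direction is the finite-dimensional Gorini--Kossakowski--Sudarshan argument recast in Choi language. Their coefficient matrix $c(t)$ of $e^{t\Lcal}$ in the basis $\{I/\sqrt d,F_1,\dots,F_{d^2-1}\}$ is your Choi matrix written in an orthonormal basis of $\C^d\otimes\C^d$. The identity map $\rho\mapsto d\,F_0\rho F_0$ feeds only the $(0,0)$ entry, so differentiating at $t=0$ on $\Omega^\perp$ is the same computation as extracting positivity of the traceless block. Your sufficiency step, via the splitting $\Lcal=\Phi+\kappa(\cdot)+(\cdot)\kappa^\dagger$ and a Lie--Trotter product of the CP semigroups $e^{t\Phi}$ and $\rho\mapsto e^{t\kappa}\rho\,e^{t\kappa^\dagger}$, is the standard finite-dimensional device. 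Lindblad's theorem covers norm-continuous semigroups on $\mathcal{B}(\mathcal{H})$ with $\mathcal{H}$ possibly infinite-dimensional, and it goes through complete dissipativity and the structure of completely positive maps. None of that machinery is needed for the $d$-dimensional statement here.

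What your route adds for this paper is a proof of the CCP equivalence that the paper takes for granted. It also identifies the conditional Choi compression $P^\perp C_\Lcal P^\perp$ exactly with $0\oplus K$ in the orthonormal normalization. That identification is what justifies reading the minimum eigenvalue $-4$ of the compression in Example~\ref{ex:qutrit_closure} directly off the spectrum of $K^{(F)}$.

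Three small points.

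\begin{enumerate}
\item With your convention $C_\Lcal=\sum_{ij}E_{ij}\otimes\Lcal(E_{ij})$, one gets $C_\Lcal=\sum_{\alpha\beta}c_{\alpha\beta}\,|u_\alpha\rangle\langle u_\beta|$ with $u_\alpha=(I\otimes F_\alpha)\Omega$, not $(F_\alpha\otimes I)\Omega$. In the $u_\alpha$ basis the Choi matrix is literally $c$, since $\langle u_\alpha,u_\beta\rangle=\Tr(F_\alpha^\dagger F_\beta)=\delta_{\alpha\beta}$ and $u_0=\Omega/\sqrt d$. Your vectors give a unitarily equivalent matrix that is block-diagonal with respect to $\C\Omega\oplus\Omega^\perp$, so nothing downstream changes.
\item The bookkeeping you flag as the main obstacle is already settled by your own expansion. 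Maps of the form $\rho\mapsto A\rho$ and $\rho\mapsto\rho B$ produce only terms with $\beta=0$ or $\alpha=0$; write $A\rho=\sqrt d\,A\rho F_0$ to see this. Hence the commutator and anticommutator parts never touch the block $(c_{jk})_{j,k\ge1}$, which is why that block is the Kossakowski matrix in both directions.
\item State explicitly that the maps $\rho\mapsto F_\alpha\rho F_\beta^\dagger$ form a basis of $\End(\Bcal(\Hcal))$. Uniqueness of the coefficients $c_{\alpha\beta}$ is what both the Hermiticity of $c$ and the identification of the block rest on.
\end{enumerate}
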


The condition $K \ge 0$ is equivalent to the Conditional Complete Positivity (CCP) of the generator. The GKLS theorem thereby maps the geometry of the cone of positive semidefinite matrices onto the geometry of the space of physical dynamics, modulo the choice of basis and the Hamiltonian component.

The set $\WGKLS$ of GKLS generators forms a closed convex cone in $\g_{LK}$. Convexity is inherited directly from the positive semidefinite matrices: if $\Lcal_1, \Lcal_2$ have Kossakowski matrices $K_1, K_2 \ge 0$, the combination $\alpha \Lcal_1 + \beta \Lcal_2$ ($\alpha, \beta \ge 0$) has Kossakowski matrix $\alpha K_1 + \beta K_2 \ge 0$. The result is a cone of dimension $d^4-d^2$ whose boundary is stratified by the rank of $K$ and whose extreme rays modulo the edge, that is the extreme rays of the pointed quotient $\WGKLS/E(\WGKLS)$, correspond to rank-one generators. The cone itself carries no extreme ray, its edge being the $(d^2-1)$-dimensional space of Hamiltonian generators (Lemma~\ref{lem:gkls_edge}).

\paragraph*{Failure of Closure under the Lie Bracket}

The set $\WGKLS$ is not a subalgebra of $\g_{LK}$. Lie brackets belong to the formal algebraic closure, but isolating a commutator generally uses inverse evolutions and therefore need not be available for one-sided dissipative directions. Positive-time Trotter alternation supplies conic first-order combinations, while iterated brackets generate only the Lie algebra of the chosen directions. Neither operation guarantees access to every element of $\g_{LK}$, and bracket closure does not preserve the spectral constraint $K \ge 0$. Example~\ref{ex:qutrit_closure} in the Introduction exhibits the failure concretely for a qutrit, where the commutator of two valid rank-1 dissipative generators has an indefinite Kossakowski form, equivalently a negative eigenvalue in its conditional Choi compression; the supporting Gell-Mann computation appears in Appendix~\ref{app:qutrit_counterexample}. Dimension is not what drives the failure. Corollary~\ref{cor:hamiltonian_rigidity}(4) supplies the exact criterion, namely that a bracket of two generators lands in $\WGKLS$ only when it is purely Hamiltonian, and Remark~\ref{rem:rigidity_consequences} records the qubit witness that already violates it.

\paragraph*{Lie Semialgebras and Their Restricted Scope}

A Lie semialgebra is a wedge compatible with the local Campbell--Hausdorff multiplication: $(B \cap W) * (B \cap W) \subseteq W$ for every sufficiently small Campbell--Hausdorff neighborhood $B$. Its bracket form reads $[x, \mathcal{T}_x(W)] \subseteq \mathcal{T}_x(W)$ for every $x \in W$, where
\begin{equation}
\label{eq:tangent_space_at_x}
\mathcal{T}_x(W) \;:=\; E\bigl(T_x(W)\bigr)
\end{equation}
is the \emph{tangent space} at $x$, the largest linear subspace of the tangent cone $T_x(W)$ of Definition~\ref{def:tangent_cone_wedge}. The equivalence of the multiplicative and bracket forms is the semialgebra characterization of Hilgert, Hofmann, and Lawson \cite[Prop.~1.5]{hofmann1991shortcourse}. The bracket form uses this tangent space, rather than the full tangent cone. In $\sll(2,\R)$ with the standard basis $\{H,E,F\}$ the wedge $W = \{aH+bE+cF \mid c \ge 0\}$ is the half-space bounded by the Borel subalgebra $\operatorname{span}\{H,E\}$, and a half-space whose bounding hyperplane is a subalgebra is a Lie semialgebra. Yet $T_H(W) = W$ and $[H,F] = -2F \notin W$, so the criterion stated on the tangent cone would reject it. The semialgebra condition is strictly stronger than the Lie wedge condition, which constrains the geometry only along the edge. Within $\WGKLS$, pointwise bracket stability is therefore confined to the restricted situations in the theorem below.

\begin{theorem}[Lie semialgebras in $\WGKLS$ \cite{o2025lie}]
A Lie wedge $W \subseteq \WGKLS$ generated by a set of controls specialises to a Lie semialgebra only when the coherent controls have no effect on both the inherent drift Hamiltonian and the incoherent part of the dynamics.
\end{theorem}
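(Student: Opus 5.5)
The plan is to work with the bracket form of the semialgebra condition recalled just above: $[x,\mathcal{T}_x(W)]\subseteq\mathcal{T}_x(W)$ for all $x\in W$, where $\mathcal{T}_x(W)=E(T_x(W))$. We use this characterization \cite[Prop.~1.5]{hofmann1991shortcourse} as given. Throughout, $W\subseteq\WGKLS$ is the closed Lie wedge generated by a drift $\Lcal_0=-i[H_0,\cdot\,]+\mathcal{D}_0$ and coherent controls $\pm(-i[H_k,\cdot\,])$. The Kossakowski part $\mathcal{D}_0$ is assumed fixed, so it lies outside the edge.

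\textbf{Step 1: compute the tangent spaces.} First we identify $E(W)$. It contains the Hamiltonian directions $\mathfrak{k}$: these are the controls and, because $W$ is a Lie wedge, all brackets among them, since $e^{\ad(X)}$ preserves $W$ for $X\in E(W)$. For a point $x=\Lcal_0+h$ with $h\in\mathfrak{k}$, the tangent cone contains $\pm\mathfrak{k}$. It also contains $\pm x$, because $x$ lies in the relative interior of a ray. Hence $\mathcal{T}_x(W)\supseteq\mathfrak{k}+\R x$.

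\textbf{Step 2: test the semialgebra condition at drift points.} We apply the condition at $x=\Lcal_0+h$ with a tangent direction $K\in\mathfrak{k}$. This requires $[\Lcal_0+h,K]\in\mathcal{T}_x(W)$.

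Split the bracket using the Hamiltonian/dissipative split in $\gLK$, which comes from the affine model of Theorem~\ref{thm:structure_glk}. The piece $[-i[H_0,\cdot],K]$ is again Hamiltonian. The piece $[\mathcal{D}_0,K]$ is the infinitesimal unitary rotation of the Kossakowski matrix $K_0$. Its dissipative component is nonzero unless $K$ commutes with $\mathcal{D}_0$.

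Directions in $\mathcal{T}_x(W)$ modulo $\mathfrak{k}$ that are dissipative must be multiples of $x$ itself, because the $K_0$-coordinates of $W$ near $x$ form a one-dimensional ray. So the condition forces $[\mathcal{D}_0,K]\in\R\mathcal{D}_0+\mathfrak{k}$. A trace or norm argument on the Kossakowski matrix rules out the $\R\mathcal{D}_0$ component: $[\mathcal{D}_0,K]$ corresponds to a commutator $[K_0,\cdot]$, whose HS pairing with $K_0$ vanishes. Hence $[\mathcal{D}_0,K]$ has zero dissipative part, which means the controls leave the incoherent part invariant.

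\textbf{Step 3: remove the drift Hamiltonian.} Next we take a Hamiltonian tangent direction at the point $x$, or use the edge itself, which is a closed subspace. The condition $[\mathfrak{k},\mathcal{T}_x]$ is then combined with the requirement that $\mathfrak{k}$-directions modulo the drift ray close up.

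From $[-i[H_0,\cdot],K]\in\mathfrak{k}$ we would conclude $H_0$ normalizes the control algebra. Then, arguing as in Step 2 on the Hamiltonian projection of $x$, we would conclude that it in fact commutes with the controls, i.e.\ the controls have no effect on the drift Hamiltonian. This is precisely the statement's ``only when''.

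\textbf{Main obstacle.} The hardest step is Step 1: pinning down $\mathcal{T}_x(W)$ exactly rather than giving a lower bound. If the wedge generated by the system has a larger tangent space, because Lie-wedge closure adds directions, the dissipative-component argument in Step 2 weakens. I would first try to show that the dissipative projection of $W$ is $\R_{\ge0}\mathcal{D}_0$, by following the iterative minimal-wedge construction. The risk is that the edge orbit $e^{\ad\mathfrak{k}}\mathcal{D}_0$ enlarges this projection to a cone. In that case I would instead apply the condition at an extreme point of that orbit cone and get the contradiction there.
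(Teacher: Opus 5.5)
The paper quotes this result from O'Meara and gives no proof, so your argument has to stand on its own. It does not, because its decisive test is vacuous.

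For any Lie wedge and any $K\in E(W)$, the whole curve $t\mapsto e^{t\ad K}x$, $t\in\R$, stays in $W$, since $tK\in E(W)$ for every real $t$. Hence both $[K,x]$ and $-[K,x]$ lie in $T_x(W)$, so $[x,K]\in\mathcal{T}_x(W)$ holds automatically at every $x\in W$. Pairing points with edge directions, as Steps 2 and 3 do, can therefore never constrain $H_0$ or $\mathcal{D}_0$. Your fallback of moving to an extreme point of the orbit cone does not change this.

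The same fact refutes the key claim of Step 2. Lie-wedge closure puts the orbit $\Ad_{e^{\mathfrak{k}}}\Lcal_0$ into $W$, so $[K,\Lcal_0]$ is a two-sided tangent direction at $\Lcal_0$, and its Kossakowski part is the nonzero rotation of $K_0$. The dissipative part of $\mathcal{T}_x(W)$ modulo $\mathfrak{k}$ is one-dimensional only if $[\mathfrak{k},\mathcal{D}_0]=0$ already holds. Step 2 thus assumes its conclusion, and the obstacle you flag is a certainty rather than a risk.

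Step 3 also aims at a false target. Take $\mathfrak{k}=\su(2)$ and drift $\Lcal_{H_z}+\Lcal_D$, with $\Lcal_D$ depolarizing as in Section~\ref{sec:example_reductive}. Then $W=\su(2)+\R_{\ge0}\Lcal_D$ is a half-space bounded by an ideal, hence a Lie semialgebra, although $[\sigma_x,\sigma_z]\neq0$. So ``no effect on the drift Hamiltonian'' can only mean no effect modulo the control algebra, and ``normalizes implies commutes'' is not available.

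The missing idea is a second-order test against a non-edge tangent direction.
\begin{itemize}
\item Since $W+E(W)=W$, one has $\mathcal{T}_{\Lcal_0+sk}(W)=\mathcal{T}_{\Lcal_0}(W)$ for $k\in\mathfrak{k}$ and $s\in\R$.
\item The orbit direction $v=[k,\Lcal_0]$ lies in this space.
\item Applying the condition at $x=\Lcal_0+sk$ against $y=v$ for all $s$ forces $[k,[k,\Lcal_0]]\in\mathcal{T}_{\Lcal_0}(W)$.
\end{itemize}
To see that this fails, assume $\mathcal{D}_0\neq0$. Then $E(W)=\mathfrak{k}$ and $W=\overline{\cone}(\mathfrak{k}\cup\Ad_{e^{\mathfrak{k}}}\Lcal_0)$. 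Fix a unitary-conjugation-invariant inner product on $\gLK$, let $P$ be the orthogonal projection onto $\mathfrak{k}^{\perp}$, set $p=P\Lcal_0$, and put
\[
f(X)=\frac{\|p\|^2}{\varphi(\Lcal_0)}\,\varphi(X)-\langle p,X\rangle .
\]
This $f$ has the following properties.
\begin{itemize}
\item It vanishes on $\mathfrak{k}$ and at $\Lcal_0$.
\item $f(\Ad_g\Lcal_0)=\|p\|^2-\langle p,\Ad_g p\rangle\ge0$, so $f\in W^*$.
\item Consequently $f\ge0$ on $T_{\Lcal_0}(W)$ and $f=0$ on $\mathcal{T}_{\Lcal_0}(W)$.
\item Since $P$ commutes with $\ad k$, $\ad k$ is skew, and $\varphi$ kills commutators, $f([k,[k,\Lcal_0]])=\|P[k,\Lcal_0]\|^2$.
\end{itemize}
Hence a Lie semialgebra forces $[\mathfrak{k},\Lcal_0]\subseteq\mathfrak{k}$, that is, $[\mathfrak{k},\mathcal{D}_0]=0$ and $[\mathfrak{k},\Lcal_{H_0}]\subseteq\mathfrak{k}$. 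This is the correct form of the statement.

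For the qubit with control $\sigma_x$ and $z$-dephasing, this is the failure at $x=\Lcal_z+\Lcal_{H_x}$ against $y=[\Lcal_{H_x},\Lcal_z]$: the second derivative of the curved orbit leaves $\mathcal{T}_{\Lcal_z}(W)$.
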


Thus the Hamiltonian and dissipative parts decouple, as in pure dephasing in a fixed basis, where the relevant generators commute. General open systems instead require a structure that retains the bracket action of the reversible part together with one-sided dissipative directions.

\subsubsection{$\WGKLS$ is a Lie Wedge}

The correct structure, identified in geometric control theory \cite{dirr2009lie}, is that of a Lie wedge. We first record the edge of the GKLS cone, a characterization the rest of the paper uses repeatedly.

\begin{lemma}[Edge of the GKLS wedge (cf. \cite{dirr2009lie})]
\label{lem:gkls_edge}
The edge of $\WGKLS$ consists precisely of the Hamiltonian generators,
\begin{equation}
E(\WGKLS) = \{\Lcal_H \mid \Lcal_H(\rho) = -i[H, \rho],\ H=H^\dagger,\ H \in \Htr\} \cong \su(d).
\end{equation}
In particular, reversible dissipation is impossible: a generator with $\pm\Lcal \in \WGKLS$ has vanishing Kossakowski matrix.
\end{lemma}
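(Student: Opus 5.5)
The plan is to prove the two inclusions separately and then read off the final sentence as the special case of the second.

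\emph{Hamiltonian generators lie in the edge.} Take $\Lcal_H(\rho)=-i[H,\rho]$ with $H$ Hermitian. It has the GKLS form \eqref{eq:gkls_kossakowski} with $K=0$. Its negative is $\Lcal_{-H}$, again with $K=0\ge 0$. Hence $\pm\Lcal_H\in\WGKLS$, and so $\Lcal_H\in E(\WGKLS)$.

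The identity component of $H$ acts trivially, so we may take $H\in\Htr$. The map $H\mapsto\Lcal_H$ is then injective on $\Htr$: if $[H,\cdot]=0$, Schur's lemma on $M_d(\C)$ forces $H\propto I$, hence $H=0$. The image is a $(d^2-1)$-dimensional real space closed under brackets, since
\[
[\Lcal_{H_1},\Lcal_{H_2}]=\Lcal_{-i[H_1,H_2]}.
\]
So $H\mapsto\Lcal_H$ is a Lie algebra isomorphism (up to sign convention) from $i\Htr\cong\su(d)$ onto its image.

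\emph{The edge consists only of Hamiltonian generators.} This is the substantive step. Suppose $\pm\Lcal\in\WGKLS$, so that
\[
\Lcal=\Lcal_{H_1}+\mathcal D_{K_1},\qquad -\Lcal=\Lcal_{H_2}+\mathcal D_{K_2},\qquad K_1,K_2\ge 0,
\]
where $\mathcal D_K$ denotes the dissipator in \eqref{eq:gkls_kossakowski}. Adding the two gives
\[
0=\Lcal_{H_1+H_2}+\mathcal D_{K_1+K_2}.
\]
The obstacle is uniqueness of the decomposition into a traceless Hamiltonian and a Kossakowski matrix for a fixed basis $\{F_j\}$ of $\Htr$. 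Once that is available, $K_1+K_2=0$, and with $K_1,K_2\ge 0$ this forces $K_1=K_2=0$. Then $\Lcal=\Lcal_{H_1}$ is Hamiltonian.

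To prove uniqueness, I would extend the basis to $F_0=I/\sqrt d$ and expand any linear map as
\[
\Lcal(\rho)=\sum_{\alpha,\beta=0}^{d^2-1} c_{\alpha\beta}F_\alpha\rho F_\beta^\dagger .
\]
Because $\{F_\alpha\}$ is an orthonormal basis of $M_d(\C)$, the maps $\rho\mapsto F_\alpha\rho F_\beta^\dagger$ form a basis of $\End(M_d(\C))$, so the coefficient matrix $c_{\alpha\beta}$ is uniquely determined by $\Lcal$. Its block with $j,k\ge 1$ can be read off directly.
\begin{itemize}
\item The Hamiltonian term $-i[H,\rho]$ with $\Tr H=0$ contributes only entries of the form $c_{0k}$ and $c_{j0}$.
\item The anticommutator term $\{F_k^\dagger F_j,\rho\}$ expands as $(F_k^\dagger F_j)\rho I+I\rho(F_k^\dagger F_j)$. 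Since $I\propto F_0$, this also contributes only entries with an index $0$.
\item Therefore $c_{jk}=K_{jk}$ for $j,k\ge 1$.
\end{itemize}
Hence $K$ is determined by $\Lcal$. The residual map $\Lcal-\mathcal D_K=\Lcal_H$ then determines the traceless $H$, by the injectivity shown above.

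\emph{Final sentence.} If $\pm\Lcal\in\WGKLS$, then $\Lcal$ lies in the edge, and the uniqueness argument already shows that its Kossakowski matrix vanishes.
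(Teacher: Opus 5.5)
Your proof is correct and follows the same idea as the paper's: the GKLS form is linear in $(H,K)$, and a matrix with $K\succeq 0$ and $-K\succeq 0$ must vanish. The paper simply asserts that $-\Lcal$ ``corresponds to'' the data $(-H,-K)$; you justify that step by proving the Kossakowski matrix is uniquely determined by $\Lcal$ (it is the $j,k\ge 1$ block of the coefficients in the basis $\rho\mapsto F_\alpha\rho F_\beta^\dagger$), and you also verify the reverse inclusion and the identification with $\su(d)$, which the paper treats as a convention.
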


\begin{proof}
The GKLS representation \eqref{eq:gkls_kossakowski} is linear in the pair $(H, K)$, so $-\Lcal$ corresponds to the data $(-H, -K)$. Membership of both $\pm\Lcal$ in $\WGKLS$ requires $K \ge 0$ and $-K \ge 0$, hence $K = 0$, leaving the purely Hamiltonian generators. We adopt the convention that $H$ is traceless; under this convention the edge is isomorphic to $\su(d)$.
\end{proof}

\begin{theorem}[$\WGKLS$ is a Lie wedge \cite{dirr2009lie}]
\label{thm:gkls_lie_wedge}
The set $\WGKLS$ is a Lie wedge in the Lindblad Kossakowski Lie algebra $\g_{LK}$.
\end{theorem}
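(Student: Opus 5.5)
We need to show that $\WGKLS$ is a closed convex cone and that it is invariant under $e^{\ad(X)}$ for every $X$ in its edge; by definition \eqref{eq:lie_wedge_condition}, that is exactly being a Lie wedge. The plan has three steps. First we show $\WGKLS$ is a closed convex cone. Then we identify the edge using Lemma~\ref{lem:gkls_edge}. Finally, we prove invariance directly at the group level, which avoids the subtangency route of Proposition~\ref{prop:subtangency}.

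\textbf{Closed convex cone.} Convexity and conicity were already noted: they follow from linearity of the map $(H,K)\mapsto\Lcal$ in \eqref{eq:gkls_kossakowski} and the fact that the positive semidefinite cone is a convex cone.

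For closedness, we use that the parametrization $(H,K)\mapsto \Lcal$ is a linear injection from $\Htr\times\{K=K^\dagger\}$ onto $\gLK$. Injectivity follows from the uniqueness of the GKLS decomposition once $H$ is traceless and the $F_j$ are traceless. Dimension counting, $(d^2-1)+(d^2-1)^2=d^4-d^2$, then gives a linear isomorphism between finite-dimensional spaces. The cone $\WGKLS$ is the image of the closed set $\Htr\times\{K\ge 0\}$ under this isomorphism, so it is closed.

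This is the step I expect to need the most care. The injectivity relies on the standard uniqueness statement of Gorini--Kossakowski--Sudarshan. If that statement is not taken as given, I would instead use the conditional-Choi characterization: CCP says the Choi matrix of $\Lcal$, compressed to the orthogonal complement of the maximally entangled vector, is positive semidefinite. That is a closed linear-matrix-inequality condition on $\Lcal$, which gives closedness directly.

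\textbf{Edge.} By Lemma~\ref{lem:gkls_edge}, $E(\WGKLS)$ is the space of Hamiltonian generators $\Lcal_H=-i[H,\cdot\,]$.

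\textbf{Invariance.} Let $U=e^{-iH}$ and let $\mathcal{U}(\rho)=U\rho U^\dagger$. Since $\Lcal_H=\log\mathcal{U}$ as superoperators, we have $e^{\ad(\Lcal_H)}(\Lcal)=\mathcal{U}\circ\Lcal\circ\mathcal{U}^{-1}$. Conjugating \eqref{eq:gkls_kossakowski} by $\mathcal{U}$ gives
\[
\mathcal{U}\Lcal\,\mathcal{U}^{-1}(\rho)=-i[UHU^\dagger,\rho]+\sum_{j,k}K_{jk}\Bigl(G_j\rho G_k^\dagger-\tfrac12\{G_k^\dagger G_j,\rho\}\Bigr),\qquad G_j=UF_jU^\dagger .
\]
The family $\{G_j\}$ is again a Hilbert--Schmidt orthonormal basis of $\Htr$, so $G_j=\sum_l O_{lj}F_l$ with $O$ a real orthogonal matrix. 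Re-expanding in the original basis replaces $K$ by $K'=OKO^{T}$, which is still positive semidefinite. Hence $e^{\ad(\Lcal_H)}\WGKLS\subseteq\WGKLS$.

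If one prefers the infinitesimal route, the same computation differentiated at the identity gives $[\Lcal_H,\Lcal]\in\WGKLS-\R_{\ge0}\Lcal\subseteq T_{\Lcal}(\WGKLS)$. Proposition~\ref{prop:subtangency} then yields the same conclusion.
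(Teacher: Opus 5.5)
Your proof is correct and follows the paper's own argument: the edge acts by unitary conjugation, which sends the Kossakowski matrix to the orthogonal congruence $OKO^{T}$ and so preserves $K\ge 0$. Your closedness argument (GKS uniqueness, or the conditional-Choi inequality) supplies a point the paper only asserts.

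One correction to the optional infinitesimal aside: $[\Lcal_H,\Lcal]$ lies in the \emph{closure} of $\WGKLS-\R_{\ge0}\Lcal$, but generally not in that set itself. For $z$-dephasing rotated by $H=\sigma_x$, the bracket's Kossakowski matrix has vanishing $(y,y)$ entry and nonzero $(y,z)$ entry, and adding $\lambda\Lcal$ only changes the $(z,z)$ entry, so no choice of $\lambda\ge0$ makes it positive semidefinite. This still suffices, because $T_{\Lcal}(\WGKLS)$ is that closure, and the bracket is obtained as the limit of the difference quotients $\bigl(e^{t\ad(\Lcal_H)}\Lcal-\Lcal\bigr)/t$.
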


\begin{proof}[Proof sketch]
The set $\WGKLS$ is a closed convex cone, and by Lemma~\ref{lem:gkls_edge} its edge acts by unitary conjugation, which transforms the Kossakowski matrix by congruence and therefore preserves its positivity. The detailed verification is given in Appendix~\ref{app:proof_gkls_lie_wedge}.
\end{proof}

Physically, the invariance condition says that a unitary change of frame cannot turn a physical generator into an unphysical one. The invariance does not extend beyond the edge.

\begin{proposition}[$\WGKLS$ is not an invariant cone]
For $d \ge 2$, $\WGKLS$ is not an invariant cone in $\g_{LK}$.
\end{proposition}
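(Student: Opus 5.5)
We have to show that some $X\in\gLK$ has $e^{\ad(X)}\WGKLS\not\subseteq\WGKLS$. An invariant cone here means one stable under $e^{\ad(X)}$ for every $X\in\gLK$, not only for $X$ in the edge. A single witness suffices, and we take it in a form that works uniformly for all $d\ge 2$. The cleanest source is the translational ideal $\Htr$ of the affine model in Theorem~\ref{thm:structure_glk}. There $\ad$ of a translation is nilpotent, so $e^{\ad(X)}$ is a finite sum that can be computed exactly.

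Let $X=(0,c)$ with $c\in\Htr$, $c\neq 0$. By the semidirect bracket \eqref{eq:semidirect_bracket}, $[(0,c),(A,b)]=(0,-Ac)$. Since $\ad(X)^2=0$, we get
\[
e^{t\,\ad(X)}(A,b)=(A,\,b-tAc).
\]
Now choose $Y\in\WGKLS$ to be a unital generator, for instance pure dephasing or a Hamiltonian. Then $b_Y=Y(I)=0$, and we may arrange $A_Yc\neq 0$. The image $Y_t=(A_Y,-tA_Yc)$ keeps the linear part of $Y$ but acquires the non-unital drift $-tA_Yc$.

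It remains to show that $Y_t\notin\WGKLS$ for suitable $t$. Here we use a necessary condition for GKLS generators: the linear part $A$ and the drift $b$ are not independent. The clearest instance is the case $A=0$. The only $\WGKLS$ element with $A=0$ is $0$, because a nonzero pure drift $x\mapsto x+tb/d$ would push Bloch vectors out of the state space. For $A\neq 0$ the bound is quantitative: $\|b\|$ is controlled by the dissipative part of $A$. With $Y$ Hamiltonian, $K_Y=0$ and $A_Y$ is a skew derivation. A GKLS generator with linear part $A_Y$ must then have $K=0$, since the Kossakowski matrix determines the symmetric part of $A$ through positivity/trace. It is therefore purely Hamiltonian and has $b=Y_t(I)=0$. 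But $b=-tA_Yc\neq0$ for $t\neq0$ once $c$ is chosen not commuting with $H$. This gives a contradiction.

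The main obstacle is this last step: justifying that the linear part determines whether $K=0$. I plan to handle it by an explicit qubit-style computation embedded into a two-dimensional subspace of $\Hcal$. Take $H=\sigma_z\oplus 0$ and $c=\sigma_x\oplus 0$ (normalized), so that $Y_t(I)=-t[\cdot]$ is a nonzero traceless Hermitian operator. Then verify directly that $Y_t$ is not positivity preserving. Under $e^{sY_t}$, starting at a pure state $\ket{\psi}$ whose Bloch vector lies on the boundary of the state space and is aligned with the drift, the state leaves the state space at first order in $s$, because $\langle\psi|\dot\rho|\psi\rangle>0$ with $\rho=\ket\psi\bra\psi$ violates the requirement that the norm of a pure state cannot increase. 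Equivalently, one checks that the conditional Choi compression of $Y_t$ is indefinite. Since $\WGKLS$ generators generate CPTP semigroups, $Y_t\notin\WGKLS$, and $\WGKLS$ is not $\gLK$-invariant.
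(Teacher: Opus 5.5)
Your argument is correct, and it uses a different witness from the paper's.

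The paper flows along the radical dilation $(\log 2)\,\Id_{\Htr}$, which fixes the linear part and doubles the drift. It applies this flow to the dissipative generator $\Dcal_{\sigma_-}$ and certifies failure by the explicit qubit Kossakowski spectrum $\{-\tfrac12,0,\tfrac32\}$. You flow instead along a nilradical translation $(0,c)$, where $\ad(X)^2=0$ makes the flow exact, and apply it to a Hamiltonian generator.

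The step you flag as the main obstacle needs no qubit embedding. On $\Bcal_H(\Hcal)=\R I\oplus\Htr$ every $\Lcal\in\gLK$ is block-triangular with diagonal blocks $0$ and $A_\Lcal$, so $\Tr\Lcal=\Tr A_\Lcal$. Lemma~\ref{lem:dissipation_functional} then gives $\Tr A_\Lcal=-d\,\Tr K$. Since $A_{Y_t}=A_Y=-i[H,\cdot\,]$ is traceless, $K(Y_t)\succeq0$ forces $K(Y_t)=0$. Then $Y_t$ would be Hamiltonian and hence unital, contradicting $Y_t(I)=-tA_Yc\neq0$.

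There is an even shorter version. Since $\pm Y\in\WGKLS$ and $e^{t\ad X}$ is linear, invariance would put $\pm Y_t$ in $\WGKLS$, so $Y_t\in E(\WGKLS)$. By Lemma~\ref{lem:gkls_edge} that edge consists of Hamiltonian generators, which annihilate $I$. Your witness therefore shows that $E(\WGKLS)\cong\su(d)$ is not an ideal of $\gLK$, and that alone already excludes invariance.

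Your fallback positivity check is also sound once stated precisely. For $\rho=|\psi\rangle\langle\psi|$ one has $\langle\psi|Y_t(\rho)|\psi\rangle=\tfrac1d\langle\psi|Y_t(I)|\psi\rangle$, because the commutator term has zero expectation in $\psi$. Choosing $\psi$ as a top eigenvector of $Y_t(I)$ gives $\langle\psi|e^{sY_t}(\rho)|\psi\rangle>1=\Tr e^{sY_t}(\rho)$ for small $s>0$, which is impossible for a positive operator.

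What each route buys:
\begin{itemize}
\item Your route avoids any Kossakowski computation and covers every $d\ge2$ at once, whereas the paper carries out its concrete check only for a qubit.
\item The paper's route shows that invariance fails on a genuinely dissipative generator: the translational shear itself destroys conditional complete positivity, rather than merely moving the reversible edge out of place.
\end{itemize}
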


\begin{proof}
Invariance would require $e^{\ad(X)}(\WGKLS) \subseteq \WGKLS$ for every $X \in \g_{LK}$, that is, invariance of complete positivity under conjugation by every invertible Hermiticity- and trace-preserving map $e^X$. Conjugation preserves complete positivity only for special maps (completely positive or completely copositive ones). Concretely, a qubit already refutes invariance. Take the isotropic dilation $X = (\log 2)\,\Id_{\Htr}$, extended to $\Bcal(\Hcal)$ by $X(\mathbb{I}) = 0$. It preserves Hermiticity and annihilates traces, so $X \in \g_{LK}$, while its Kossakowski matrix is $-\tfrac{1}{2}(\log 2)$ times the identity and is therefore negative definite, placing $X$ outside $\WGKLS$. In the affine model of Theorem~\ref{thm:structure_glk} the flow $e^{\ad X}$ leaves the linear part of a generator untouched and doubles the inhomogeneous part. Applied to the amplitude damping generator $\Dcal_{\sigma_-}$, whose Kossakowski spectrum is $\{0,0,1\}$, it returns a generator of spectrum $\{-\tfrac{1}{2},\,0,\,\tfrac{3}{2}\}$, so conditional complete positivity is destroyed while Hermiticity and trace preservation survive. Invariance holds under the unitary subgroup generated by the edge and fails beyond it, which is the Lie wedge property and no more.
\end{proof}

\subsubsection{The Dissipation Functional}
\label{sec:dissipation_functional}

A single linear functional on $\g_{LK}$ separates the coherent directions from the
dissipative ones, and it does so in a way that interacts with the Levi decomposition.
Everything in this subsection is elementary; what it buys is recorded in
Corollary~\ref{cor:hamiltonian_rigidity}, which is what singles out the Levi
splitting, among the decompositions of $\g$ into an ideal and a complementary
subalgebra, as the one along which Section~\ref{sec:levi_type_theorem} decomposes the wedge.

\begin{lemma}[Dissipation functional]
\label{lem:dissipation_functional}
Let $\{F_j\}_{j=1}^{d^2-1}$ be a basis of traceless Hermitian operators on $\Hcal$ with Gram matrix
$G_{jk} = \Tr(F_kF_j)$, and let $\Lcal \in \g_{LK}$ have GKLS data $(H, K)$ in that
basis. Then, as a superoperator on $\Bcal(\Hcal)$,
\begin{equation}
\label{eq:dissipation_functional}
\Tr \Lcal \;=\; -\,d\,\Tr(K G),
\end{equation}
independently of $H$. In the orthonormal normalization $G = \mathbb{I}$ this reads
$\Tr\Lcal = -d\,\Tr K$; in the normalization $\Tr(F_iF_j) = 2\delta_{ij}$ used in
Example~\ref{ex:qutrit_closure} and Appendix~\ref{app:qutrit_counterexample} one has
$G = 2\cdot\mathbb{I}$ and $\Tr\Lcal = -2d\,\Tr K$.
\end{lemma}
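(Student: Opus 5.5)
The plan is to compute $\Tr\Lcal$ directly, by linearity, from the GKLS representation \eqref{eq:gkls_kossakowski} in the basis $\{F_j\}$ (Hermitian, so $F_k^\dagger=F_k$). The one tool is the elementary trace identity for superoperators on $\Bcal(\Hcal)\cong M_d(\C)$: the map $X\mapsto AXB$ has trace $\Tr(A)\Tr(B)$. This holds because in Liouville space it is represented by $A\otimes B^{T}$. Consequently, left multiplication $X\mapsto AX$ and right multiplication $X\mapsto XA$ each have trace $d\,\Tr(A)$.

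I also take as given that every $\Lcal\in\gLK$ admits GKLS data $(H,K)$ with $K$ Hermitian, not necessarily positive semidefinite. This is the standard parametrization underlying the dimension count $d^4-d^2$. The formula then follows from linearity term by term, so positivity of $K$ plays no role.

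The computation has three steps.

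\textbf{Hamiltonian part.} The map $X\mapsto -i(HX-XH)$ has trace $-i\bigl(d\Tr H-d\Tr H\bigr)=0$. This is why the result is independent of $H$, and tracelessness of $H$ is not even needed.

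\textbf{Jump terms.} Each term $X\mapsto F_jXF_k$ has trace $\Tr(F_j)\Tr(F_k)=0$, because the basis elements are traceless.

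\textbf{Anticommutator terms.} Each term $X\mapsto -\tfrac12(F_kF_jX+XF_kF_j)$ has trace
\[
-\tfrac12\bigl(d\Tr(F_kF_j)+d\Tr(F_kF_j)\bigr)=-d\,G_{jk}.
\]

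Summing with the coefficients $K_{jk}$ gives
\[
\Tr\Lcal=-d\sum_{j,k}K_{jk}G_{jk}.
\]

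It remains to identify $\sum_{j,k}K_{jk}G_{jk}$ with $\Tr(KG)=\sum_{j,k}K_{jk}G_{kj}$. This follows from the symmetry $G_{jk}=\Tr(F_kF_j)=\Tr(F_jF_k)=G_{kj}$, by cyclicity of the trace. The stated special cases are then immediate. For $G=\mathbb{I}$ the formula reads $\Tr\Lcal=-d\Tr K$. For $G=2\,\mathbb{I}$, the convention of Example~\ref{ex:qutrit_closure}, it reads $\Tr\Lcal=-2d\Tr K$.

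The only point needing care is bookkeeping rather than substance. The trace is taken on the complex $d^2$-dimensional space $\Bcal(\Hcal)$, which is where the factor $d$ comes from. Since $\Lcal$ is Hermiticity preserving, this equals the trace of its restriction to the real space $\Bcal_H(\Hcal)$, so the statement is unambiguous. One must also keep the index order of $G$ consistent with the ordering $F_k^\dagger F_j$ in \eqref{eq:gkls_kossakowski}, and the symmetry of $G$ removes that issue.
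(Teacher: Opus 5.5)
Your proposal is correct and follows the paper's proof essentially line for line: the trace of $X\mapsto AXB$ is $\Tr(A)\Tr(B)$, the Hamiltonian and sandwich contributions vanish, each anticommutator pair contributes $-d\,G_{jk}$, and symmetry of the Gram matrix turns $\sum_{jk}K_{jk}G_{jk}$ into $\Tr(KG)$. Your extra remark, that for a Hermiticity-preserving map the complex trace on $\Bcal(\Hcal)$ equals the real trace on $\Bcal_H(\Hcal)$, is a harmless clarification that the paper leaves implicit.
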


\begin{proof}
For fixed $A, B \in \Bcal(\Hcal)$ the superoperator $\rho \mapsto A\rho B$ has trace
$\Tr(A)\Tr(B)$. The Hamiltonian part contributes
$\Tr(-i[H,\cdot]) = -i(d\Tr H - d\Tr H) = 0$. In the dissipative part the sandwich term
contributes $\Tr(F_j)\Tr(F_k^{\dagger}) = 0$ because the basis is traceless, while each of
the two anticommutator terms contributes $\tfrac12 d\,\Tr(F_k^{\dagger}F_j)$. Summing,
$\Tr\Lcal = -d\sum_{jk} K_{jk}G_{jk} = -d\Tr(KG)$, where the last equality uses the
real symmetry of the Gram matrix of a Hermitian basis. For a general complex operator basis
the coordinate contraction is instead $-d\Tr(KG^{\mathsf T})$ with this Gram convention; the
Hermitian hypothesis is therefore material to the displayed formula.
\end{proof}

Write $\varphi(\Lcal) := \Tr K$ for the orthonormal normalization, so that
$\varphi = -\tfrac{1}{d}\Tr(\cdot)$ is linear on $\g_{LK}$. On the GKLS wedge $K \succeq 0$,
so $\varphi \ge 0$ on $\WGKLS$, and $\varphi(\Lcal) = 0$ forces $K = 0$: the functional
vanishes on $\WGKLS$ exactly at the Hamiltonian generators, which by
Lemma~\ref{lem:gkls_edge} are the edge $E(\WGKLS) \cong \su(d)$. It measures the total
dissipation strength.

\begin{corollary}[Hamiltonian rigidity of the commutator ideal]
\label{cor:hamiltonian_rigidity}
Let $\g \subseteq \g_{LK}$ be a subalgebra and $W \subseteq \WGKLS \cap \g$ a wedge. Then
\begin{equation}
\label{eq:hamiltonian_rigidity}
W \cap [\g,\g] \;\subseteq\; E(\WGKLS) \cap \g ,
\end{equation}
that is, every element of $W$ lying in the commutator ideal is purely Hamiltonian.
Consequently, with $\g = \s \ltimes \rr$ the Levi decomposition:
\begin{enumerate}
  \item $W \cap \s \subseteq E(\WGKLS)$: the wedge meets the Levi factor only in
        Hamiltonian directions;
  \item if $\g$ is semisimple then $W$ consists entirely of Hamiltonian generators, so a
        semisimple system algebra synthesizes no dissipation at all;
  \item $\varphi(X) = \varphi(\pi_{\rr}(X))$ for every $X \in \g$, where $\pi_{\rr}$ is the
        projection onto $\rr$ along the Levi section: all dissipation strength is carried by
        the radical component;
  \item for any $\Lcal_1, \Lcal_2 \in \g_{LK}$, the bracket $[\Lcal_1,\Lcal_2]$ lies in
        $\WGKLS$ if and only if it is purely Hamiltonian.
\end{enumerate}
\end{corollary}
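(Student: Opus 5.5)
The whole argument rests on one observation. The dissipation functional $\varphi=-\tfrac1d\Tr(\cdot)$ of Lemma~\ref{lem:dissipation_functional} is, up to sign, the trace of a superoperator. It therefore vanishes on every commutator: for $\Lcal_1,\Lcal_2\in\g_{LK}$ we have $\Tr[\Lcal_1,\Lcal_2]=\Tr(\Lcal_1\Lcal_2)-\Tr(\Lcal_2\Lcal_1)=0$, since both are endomorphisms of the finite-dimensional space $\Bcal(\Hcal)$. By linearity $\varphi$ vanishes on $[\g,\g]$, and indeed on $[\g_{LK},\g_{LK}]$.

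For \eqref{eq:hamiltonian_rigidity}, take $X\in W\cap[\g,\g]$. Then $X\in\WGKLS$, so its Kossakowski matrix satisfies $K\succeq0$, while $\varphi(X)=\Tr K=0$. A positive semidefinite matrix with zero trace is zero, so $K=0$ and $X$ is purely Hamiltonian. By Lemma~\ref{lem:gkls_edge} this places $X$ in $E(\WGKLS)$, and $X\in\g$ trivially. One point needs care: the GKLS data $(H,K)$ of an element of $\g_{LK}$ must be unique, so that ``$K=0$'' is meaningful. I would take this from the isomorphism underlying \eqref{eq:gkls_kossakowski}, with $H$ normalized traceless and $K$ Hermitian, which is the parametrization of $\g_{LK}$ implicit in Lemma~\ref{lem:dissipation_functional}.

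The numbered consequences go as follows. For (1), a semisimple $\s$ satisfies $\s=[\s,\s]\subseteq[\g,\g]$, so $W\cap\s\subseteq W\cap[\g,\g]$. For (2), if $\g$ is semisimple then $\g=[\g,\g]$, so $W=W\cap[\g,\g]$ is entirely Hamiltonian. For (3), write $X=\pi_\s(X)+\pi_\rr(X)$. Since $\pi_\s(X)\in\s\subseteq[\g,\g]$, we get $\varphi(\pi_\s(X))=0$, and linearity of $\varphi$ gives the claim; this is independent of the choice of Levi section. For (4), the ``if'' direction is immediate because Hamiltonian generators lie in $E(\WGKLS)\subseteq\WGKLS$. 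For ``only if'', apply the same trace argument to $[\Lcal_1,\Lcal_2]\in[\g_{LK},\g_{LK}]$: membership in $\WGKLS$ together with $\varphi=0$ forces $K=0$.

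There is no real obstacle. The only step needing care is confirming that the trace in Lemma~\ref{lem:dissipation_functional} is the genuine trace of the endomorphism of $\Bcal(\Hcal)$, so that it kills commutators, and not a trace taken over some other space. The lemma's proof computes exactly the trace of $\rho\mapsto A\rho B$ on $\Bcal(\Hcal)$, so cyclicity applies. Extending $\Lcal$ complex-linearly from Hermitian operators if needed changes nothing, because complexification preserves both the trace and the bracket.
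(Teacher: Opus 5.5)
Your proof is correct and follows the paper's argument: $\varphi$ vanishes on commutators because superoperator traces are cyclic, a positive semidefinite Kossakowski matrix with zero trace must vanish, and items (1)--(4) follow from $\s=[\s,\s]\subseteq[\g,\g]$ and linearity exactly as in the paper. You do not actually need the uniqueness of $(H,K)$ that you flag, because the trace identity of Lemma~\ref{lem:dissipation_functional} holds for whichever GKLS representation witnesses membership in $\WGKLS$, and that representation alone forces $K=0$.
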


\begin{proof}
Traces of commutators vanish, so $\varphi$ vanishes on $[\g,\g]$ by
Lemma~\ref{lem:dissipation_functional}. If $X \in W \cap [\g,\g]$ then $K(X) \succeq 0$ and
$\Tr K(X) = 0$, whence $K(X) = 0$ and $X \in E(\WGKLS)$. For (1), $\s = [\s,\s] \subseteq
[\g,\g]$ by semisimplicity. Item (2) is (1) with $\g = \s$, using $W \subseteq \g$. Item (3)
follows since $\varphi$ vanishes on $\s \subseteq [\g,\g]$. For (4), $\Tr[\Lcal_1,\Lcal_2] = 0$
gives $\Tr K = 0$; if the bracket lies in $\WGKLS$ then $K \succeq 0$ forces $K = 0$, and the
converse is immediate.
\end{proof}

\begin{remark}[Bracket-closure consequences]
\label{rem:rigidity_consequences}
Item (4) gives the exact bracket criterion. The qutrit case $d=3$ appears in
Example~\ref{ex:qutrit_closure}. The failure already occurs for a qubit, since
$[\Dcal_{\sigma_-},\Dcal_{\sigma_x}]$ is a nonzero pure Bloch translation with Kossakowski
spectrum $\{-1,0,1\}$. In dimension $d = 2$, a pair of \emph{unital}
dissipators cannot witness it: unitality makes the Bloch parts symmetric, and a commutator of
symmetric matrices is antisymmetric, hence a rotation. From $d = 3$ on, unital pairs fail too.
Item (2) identifies the semisimple-type class with the closed-system, no-synthesizable-dissipation class.
\end{remark}

\subsection{Invariant Cones, Spectral Terminology, and the Convex Type Condition}
\label{sec:invariant_cones_prelim}

A condition stronger than the Lie wedge property is invariance under the adjoint action of the full algebra. We record the definition, fix the spectral terminology used below, and then state the criterion from Neeb's theory; Appendix~\ref{app:neeb_theory_details} gives more detail.

\begin{definition}[Invariant Cone]
A Lie wedge $W$ is an invariant cone if $e^{\ad(X)}(W) \subseteq W$ for all $X \in \g$.
\end{definition}

If $W$ is an invariant cone, the edge $E(W)$ must be an ideal in $\g$. In reductive algebras the classification of invariant cones is well understood and related to the geometry of the associated symmetric spaces. In non-reductive algebras existence is a delicate question, because the coupling $\rho$ between Levi factor and radical can generate flows that no pointed cone survives.

\paragraph*{Stability of the Adjoint Action}

The following terms describe the spectrum of a specified operator $\ad(X)$. They do not, by themselves, decide whether the algebra or a module is of convex type.

\begin{itemize}
    \item \textbf{Elliptic:} A semisimple operator with purely imaginary spectrum generates a bounded one-parameter group, represented geometrically by rotations or oscillations.
    \item \textbf{Hyperbolic:} An operator with an eigenvalue of nonzero real part has an exponentially expanding or contracting mode.
    \item \textbf{Parabolic:} A nonzero nilpotent Jordan part with zero real spectral part produces polynomial shear. Mixed cases can contain more than one of these components.
\end{itemize}

Whether a pointed generating invariant cone exists is a separate algebraic question. In particular, failure of convex type need not produce an expanding mode.

\paragraph*{Neeb's Theory and the Convex Type Condition}

Neeb's theory of invariant cones \cite{neeb1994classification, Neeb2000}, building on work of Vinberg \cite{vinberg1963theory} and Olshanskii \cite{olshanskii1981invariant}, supplies the relevant module condition; ordered structures of this kind also occur in algebraic quantum field theory and the analysis of causality \cite{Brunetti2002}. In Neeb's Section~II the acting algebra $\g$ is real and reductive. For a symplectic $\g$-module $(V,\omega)$ with action $A_X$, define $\phi_X(v)=\tfrac12\omega(A_Xv,v)$, let $\Phi$ be the corresponding moment map, and set $C_V=\overline{\cone(\Phi(V))}$. The module is of convex type when $C_V$ is pointed, equivalently when its dual $W_V=C_V^*$ is generating, and when the center $\mf{z}(\g)$ acts semisimply with purely imaginary eigenvalues \cite[Definition~II.1]{neeb1994classification}. If, in addition, $\g$ has the compactly embedded Cartan used in Proposition~II.23, $V_{\mathrm{fix}}=\{v\mid\g.v=0\}$ is zero, and the center condition is already assumed, that proposition makes convex type equivalent to the existence of an element in $\mf{z}(\mf{k})$, where $\mf{k}$ is the maximal compactly embedded subalgebra, for which $\phi_X$ is positive definite. Thus a positive-definite quadratic Hamiltonian is a witness under those hypotheses; it is not the unqualified definition. No converse spectral claim follows from its absence.

The Dynamical Lie Wedge $W(G)$ is generally invariant only under its edge, so Neeb's invariant-cone conclusions do not transfer directly. Section~\ref{sec:neebs_theory} records a structured correspondence between the FCC and convex type when the nilradical carries a bracket-induced symplectic module; for $\gLK$ itself that module is absent and the convex type condition is vacuous.

\section{The Dynamical Lie Wedge Pair (DLWP) Framework}
\label{sec:dlwp_framework}

Carrying the algebraic classification to open systems calls for a different approach. We have to keep the algebraic structure generated by commutation apart from the geometric constraints that complete positivity imposes, and then study how the two interact. The Dynamical Lie Wedge Pair (DLWP) is the object we introduce for this purpose. It carries the algebraic classification program over to irreversible evolution.

\paragraph*{The Conceptual Shift: Algebra vs. Geometry}

In closed systems the physical generators, the skew-Hermitian operators, form a real Lie algebra. Here the algebraic structure and the geometric space of admissible dynamics are one and the same, and the DLA paradigm rests on that coincidence. Classifying the dynamics then amounts to classifying the Lie algebra.

In open systems this coincidence breaks down. The Lie bracket still defines the algebraic structure, but it generates a space $\g_{LK}$ strictly larger than the cone of physical generators $\WGKLS$. We are therefore forced to study a relationship rather than a single object, namely how the algebra generated by the bracket sits against the cone that complete positivity carves out.

\subsection{Formal Definition and Construction of the DLWP}
\label{sec:dlwp_definition}

We now define the central object of the classification, attached to a quantum system through its set of available generators $G$. The construction follows geometric control theory and the theory of Lie semigroups \cite{dirr2009lie, OMeara2011IllustratingTG}.

\begin{definition}[Dynamical System Specification]
A generator-labelled Markovian control specification is a finite family of independently actuable positive rays $\R_{\geq0}\Lcal_i$ with $\Lcal_i\in\WGKLS$. We write $G=\{\Lcal_1,\ldots,\Lcal_m\}$ for chosen representatives of those rays; replacing a representative by a positive multiple does not change the specification. Bidirectional coherent control is stated by including both $\Lcal_H$ and $-\Lcal_H$ in $G$. A decomposition of a fixed autonomous Lindbladian supplies such a family only when its components can be actuated separately.
\end{definition}

\begin{definition}[Dynamical Lie Wedge Pair (DLWP)]
\label{def:DLWP}
Given a specification $G$, the Dynamical Lie Wedge Pair (DLWP) is the pair $S(G) = (\g(G), W(G))$, where:

\begin{enumerate}
    \item \textbf{System Lie Algebra} $\g(G)$. This is the algebra generated by $G$ under the Lie bracket; equivalently, the smallest real Lie subalgebra of $\g_{LK}$ that contains $G$. We write
    \begin{equation}
    \g(G) = \Lie(G).
    \end{equation}
    It is the open-system counterpart of the \emph{dynamical Lie algebra} of a control family, a construction studied at length in geometric control theory and quantum control \cite{jurdjevic1997geometric, d2021introduction, Schirmer2000Complete}.
    \item \textbf{Dynamical Lie Wedge} $W(G)$. This is the smallest closed Lie wedge in $\g(G)$ that contains $G$. Proposition~\ref{prop:construction_W(G)_saturation} proves its existence and gives an iterative construction. It is an inner approximation of the tangent wedge of the reachability semigroup in Definition~\ref{def:W(G)_semigroup}; the two coincide exactly when $W(G)$ is global.
\end{enumerate}
\end{definition}

\begin{remark}[What is being classified]
\label{rem:scope_control_systems}
The classification input is a generator-labelled control specification, not a decomposition of a bare Liouvillian. If the only available dynamics is a fixed autonomous generator $\Lcal$, then $G=\{\Lcal\}$, the algebra $\g(G)=\R\Lcal$ is one-dimensional and abelian, the minimal local wedge is $\R_{\geq0}\Lcal$, and the global object is its semigroup. A Hamiltonian--jump decomposition may be used as $G$ only when the resulting rays are independently actuable; physical distinguishability alone is insufficient. This restriction also matters because the GKLS representation admits gauge transformations, including inhomogeneous shifts $L_k\mapsto L_k+c_k\mathbb I$ with a compensating change of $H$ and unitary recombinations of jump operators. Such changes do not create additional control rays. Throughout the paper, the declared actuation data are part of the specification, and all invariants belong to the resulting pair $(\g(G),W(G))$. The multi-generator case describes piecewise-controlled Markovian dynamics, in the same sense in which a dynamical Lie algebra describes a coherent control family.
\end{remark}

\subsubsection{Interpretation of the Components}

\paragraph{The System Lie Algebra $\g(G)$.}
The algebra $\g(G)$ is the formal Lie-algebraic envelope of $G$. It collects the real linear combinations and iterated brackets generated algebraically from $G$, but these directions need not be reachable through positive-time sequences because isolating a commutator can require inverse evolutions. Operational reachability belongs to $\Sigma(G)$ and $L(\Sigma(G))$; reversible edge directions supply the inverses needed for their adjoint action. As Section~\ref{sec:structure_gkls} shows, $\g(G)$ generally contains elements that are not valid GKLS generators. It records the algebraic complexity and symmetry of the interactions with positivity set aside, and its Levi decomposition supplies the algebraic backbone of the classification. The dimension and structure of $\g(G)$ measure the range of algebraic relations generated by the available interactions.

\paragraph{The Dynamical Lie Wedge $W(G)$.}
The wedge $W(G)$ is the minimal closed Lie-wedge local model determined by $G$. It carries the constraints of irreversibility and complete positivity, and by construction $W(G) \subseteq \g(G)$. The intersection $\WGKLS \cap \g(G)$ is a closed convex cone containing $G$, and because $\g(G)$ is a linear subspace its edge is $E(\WGKLS) \cap \g(G)$. For $X$ in that edge the map $e^{\ad X}$ preserves $\WGKLS$, by the Lie wedge property of Theorem~\ref{thm:gkls_lie_wedge}, and preserves $\g(G)$, because $\g(G)$ is a subalgebra. Thus $\WGKLS \cap \g(G)$ is a closed Lie wedge in $\g(G)$ containing $G$, and minimality gives
\begin{equation}
W(G) \subseteq \g(G) \cap \WGKLS.
\end{equation}
The containing algebra $\g(G)$ is itself a closed Lie wedge, but it belongs to $\WGKLS$ only when all of its elements are Hamiltonian. Taking $W(G)$ to be the smallest Lie wedge containing $G$ therefore gives the minimal closed-wedge approximation determined by the specification and records how positivity restricts the algebraic possibilities left open by $\g(G)$.

\subsubsection{Construction of the Dynamical Lie Wedge $W(G)$}

The construction of $W(G)$ is non-trivial and requires clarification. The wedge $W(G)$ must be distinguished from the simple convex cone generated by $G$, $\cone(G) = \{\sum \alpha_i \Lcal_i \mid \alpha_i \ge 0\}$. In general, $\cone(G)$ is not necessarily a Lie wedge, as it may not satisfy the invariance condition (Equation \eqref{eq:lie_wedge_condition}). The Lie wedge condition imposes constraints related to the action of the reversible component of the dynamics.

Two distinct objects are in play here, and we keep them apart. The \emph{local} object is the smallest closed Lie wedge in $\g(G)$ containing $G$, which we denote by $W(G)$. The \emph{global} object is the tangent wedge of the closed reachability semigroup. In control-theoretic terminology, this tangent wedge is the smallest global Lie wedge containing the control directions and is called their global Lie saturate \cite[Thm.~II.5(b)]{dirr2009lie}. The two objects need not coincide, and their gap is the local--global problem of Lie semigroup theory \cite{hilgert1989lie}.

\begin{definition}[Lie Semigroup and Tangent Wedge]
\label{def:W(G)_semigroup}
The Lie semigroup $\Sigma(G)$ generated by $G$ is the closure of the set of all finite products of the form $e^{t_1 \Lcal_1} \cdots e^{t_k \Lcal_k}$ with $\Lcal_i \in G, t_i \ge 0$. Each factor evolves under a chosen representative of one declared ray for a nonnegative time, so $\Sigma(G)$ is the closed reachable set for piecewise-constant switching among those rays; a bidirectional coherent direction is available only when both signs occur in $G$. Its tangent wedge at the identity, written $L(\Sigma(G))$, consists of the generators subtangent to $\Sigma(G)$ and is a closed Lie wedge \cite{hilgert1989lie}.
\end{definition}

The tangent wedge contains $G$ but need not lie inside $\g(G)$, so the relevant comparison inside $\g(G)$ is $L(\Sigma(G)) \cap \g(G)$. It is a closed Lie wedge in $\g(G)$ containing $G$, by the argument already used for $\WGKLS \cap \g(G)$ in Section~\ref{sec:dlwp_definition}, with $L(\Sigma(G))$ in place of $\WGKLS$. Minimality of $W(G)$ therefore gives the inclusion
\begin{equation}\label{eq:saturate_vs_tangent}
W(G) \;\subseteq\; L(\Sigma(G)).
\end{equation}
The inclusion can be strict. If $G = \{\Lcal_H\}$ consists of a single Hamiltonian generator $\Lcal_H = -i[H,\cdot\,]$ with quasi-periodic flow, then $W(G) = \R_{\geq 0}\,\Lcal_H$ is already a closed Lie wedge with trivial edge, while the closure of $\{e^{t\Lcal_H} : t \geq 0\}$ is a compact group, so $L(\Sigma(G))$ contains $-\Lcal_H$ as well. Equality in \eqref{eq:saturate_vs_tangent} is the statement that the wedge $W(G)$ is \emph{global} in $\g(G)$, a separate and generally difficult question in Lie semigroup theory \cite{hilgert1989lie}. The structural results use the inclusion, and the decomposition theorem of Section~\ref{sec:levi_type_theorem} takes an arbitrary Lie wedge as input. The classification framework works with the minimal closed Lie wedge, an inner approximation of reachability of the kind used in quantum control \cite[Sec.~II-D]{OMeara2011IllustratingTG}.

The practical construction of $W(G)$ from $G$ enforces the Lie wedge condition iteratively. In certain situations the construction simplifies. If the closed conical hull $\overline{\cone}(G)$ has trivial edge, the invariance condition is vacuous, so $\overline{\cone}(G)$ is itself a closed Lie wedge, and minimality gives $W(G) = \overline{\cone}(G)$.

However, if the edge $E_0 = E(\cone(G))$ is non-trivial, we must account for the action of the group generated by $E_0$ on the generators $G$. The adjoint-orbit closure can generate new reversible elements, so the edge of the resulting Lie wedge $E(W(G))$ may be strictly larger than $E_0$.

\paragraph{The Mechanism of Edge Growth.}
Edge growth occurs when the adjoint orbit of the Hamiltonian subset creates new lineality in the closed conical hull. Specifically, let $G_{E_0}$ be the connected Lie group generated by $E_0$ and set $W_{\mathrm{orb}}=\overline{\cone(\{\Ad(g)(X)\mid X\in G,\ g\in G_{E_0}\})}$. If a Hamiltonian direction $H\notin E_0$ satisfies $H,-H\in W_{\mathrm{orb}}$, then $H$ joins the enlarged edge. A compact edge orbit can supply both signs of a one-sided Hamiltonian direction. Genuinely dissipative orbit directions cannot cancel into the edge: the dissipation functional of Lemma~\ref{lem:dissipation_functional} is nonnegative on $\WGKLS$ and vanishes exactly on Hamiltonian generators. For a concrete illustration of maximal edge growth, consider the analysis of the ``magic cone'' in $\mf{sl}(2, \R)$, where adjoint-orbit closure can generate the entire algebra (see, e.g., Chapter V in \cite{hilgert1989lie}).

The construction of $W(G)$ must account for this Hamiltonian lineality growth. Consequently, the Levi decomposition must use the edge of the completed wedge.

\begin{proposition}[The iterative closed-wedge construction yields $W(G)$]
\label{prop:construction_W(G)_saturation}
The smallest closed Lie wedge in $\g(G)$ containing $G$ exists: the family of closed Lie wedges containing $G$ is nonempty ($\g(G)$ itself belongs to it) and is stable under arbitrary intersections. Initialized at $\cone(G)$, the iteration below stabilizes after finitely many steps, and its output is this smallest closed Lie wedge, namely $W(G)$. It adapts the five-step inner-approximation template of \cite[Sec.~II-D]{OMeara2011IllustratingTG}; the exactness and finite termination asserted here follow from the proof below. In particular, the fixed point lies inside the global tangent wedge $L(\Sigma(G))$ by \eqref{eq:saturate_vs_tangent}.
\end{proposition}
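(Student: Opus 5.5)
The approach is to check closure of the family under intersections, then run the iteration and use a dimension count on the edges for termination and an induction for minimality. I take the iteration to be
\[
W_0=\overline{\cone}(G),\qquad
W_{k+1}=\overline{\cone\bigl(\{e^{\ad X_1}\cdots e^{\ad X_r}(Y)\mid Y\in W_k,\ X_i\in E(W_k)\}\bigr)} .
\]
This says: close the conical hull, then saturate by the connected group $G_{E(W_k)}$ generated by $e^{\ad E(W_k)}$ acting on $\g(G)$, then re-close. This is the reading of the five-step template that I will follow.

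\emph{Existence.} The family of closed Lie wedges in $\g(G)$ containing $G$ is nonempty, because $\g(G)$ is a closed Lie wedge with edge $\g(G)$ and $e^{\ad X}$ preserves the subalgebra. Let $\{W_\alpha\}$ be such wedges and $W=\bigcap_\alpha W_\alpha$. Then $W$ is a closed convex cone containing $G$, and $E(W)=\bigcap_\alpha E(W_\alpha)\subseteq E(W_\alpha)$ for every $\alpha$. Hence for $X\in E(W)$ we get $e^{\ad X}W\subseteq e^{\ad X}W_\alpha\subseteq W_\alpha$ for all $\alpha$, and therefore $e^{\ad X}W\subseteq W$. The intersection of the whole family is thus the smallest element, and it is $W(G)$.

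\emph{Monotonicity and termination.} Each $W_k$ is a closed convex cone in $\g(G)$, and $W_k\subseteq W_{k+1}$ because the identity lies in $G_{E(W_k)}$. It follows that $E(W_k)\subseteq E(W_{k+1})$, so the edges form an increasing chain of subspaces of the finite-dimensional space $\g(G)$. The chain can strictly increase at most $\dim\g(G)$ times.

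The key observation concerns what happens when the edge does not grow, $E(W_{k+1})=E(W_k)=:E$. The set $G_E\cdot W_k$ is $G_E$-invariant, since $G_E G_E=G_E$. Its conical hull and the closure of that hull are also $G_E$-invariant, because each $g\in G_E$ is a linear homeomorphism and so commutes with taking conical hulls and closures. So $W_{k+1}$ is invariant under $e^{\ad X}$ for every $X$ in its own edge, i.e.\ $W_{k+1}$ is a Lie wedge. The step after that adds nothing: $W_{k+2}=W_{k+1}$.

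Consequently the iteration reaches a fixed point after at most about $2\dim\g(G)+1$ steps, and every fixed point is a closed Lie wedge containing $G$.

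\emph{Minimality.} Let $W'$ be any closed Lie wedge in $\g(G)$ containing $G$. I show $W_k\subseteq W'$ for all $k$ by induction. The base case holds since $W'$ is a closed cone containing $G$, so $W_0\subseteq W'$. For the step, suppose $W_k\subseteq W'$. Then $E(W_k)\subseteq E(W')$, so every generator $e^{\ad X}$ of $G_{E(W_k)}$ preserves $W'$, and hence all of $G_{E(W_k)}$ does. Thus $G_{E(W_k)}\cdot W_k\subseteq W'$, and because $W'$ is a closed convex cone, $W_{k+1}\subseteq W'$.

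The fixed point is therefore contained in every admissible wedge and is itself admissible, so it equals $W(G)$. The inclusion $W(G)\subseteq L(\Sigma(G))$ is then \eqref{eq:saturate_vs_tangent}.

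The main obstacle is the termination step. One has to make sure that stagnation of the edge really forces invariance, which is where it matters that the saturation group is generated by the current edge and that closure commutes with the linear action. One also has to make sure the edge cannot shrink under closure; it cannot, since the $W_k$ are nested. The possibility that closure creates new lineality, which is exactly the edge growth described in the paragraph before the proposition, is harmless: it only advances the finite chain of edges.
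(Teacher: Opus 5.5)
Your proposal is correct and follows the paper's proof essentially step for step. It uses intersection-stability of closed Lie wedges for existence, and an induction showing every $W_k$ lies in any closed Lie wedge containing $G$. Termination comes from the nondecreasing edge chain stagnating at some $k$, at which point $W_{k+1}$ is $\Ad(G_{E_k})$-invariant with edge $E_k$, hence a Lie wedge with $W_{k+2}=W_{k+1}=W(G)$. The only differences are cosmetic: starting from $\overline{\cone}(G)$ changes nothing because $G$ is finite, and your step bound can be tightened to $\dim\g(G)+2$, since the first stagnation occurs at some $k\le\dim\g(G)$.
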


Following the orbit-and-edge steps of \cite[Sec.~II-D]{OMeara2011IllustratingTG}, start from $W_0 = \cone(G)$. Each abstract stage reads off the current edge $E_k = W_k \cap (-W_k)$, replaces $W_k$ by the closed conical hull of its adjoint orbit under the subgroup $G_{E_k} = \langle \exp(E_k) \rangle$, and compares the new edge with $E_k$. This description presupposes that the closed conical hull can be formed and its lineality determined. The sequence reaches an index $k$ with $E_{k+1}=E_k$, because strict inclusions among the finite-dimensional subspaces $E_k$ cannot continue indefinitely. The cone may still enlarge on the step that first yields this equality. At that index $W_{k+1}$ is a Lie wedge, and one further application gives $W_{k+2}=W_{k+1}=W(G)$.

\begin{proof}[Proof of Proposition~\ref{prop:construction_W(G)_saturation}]
Stability under intersections: an intersection of closed convex cones is a closed convex cone, and an element of the edge of the intersection lies in the edge of every member, so every member, and hence the intersection, is invariant under the corresponding $e^{\ad X}$. For the iteration, let $V$ be any closed Lie wedge containing $G$; we claim $W_k \subseteq V$ for every $k$. Indeed $W_0 = \cone(G) \subseteq V$, and if $W_k \subseteq V$ then $E_k = W_k \cap (-W_k) \subseteq E(V)$, so for every $g \in G_{E_k}$, written as a finite product of exponentials $e^{X_1} \cdots e^{X_m}$ with $X_i \in E_k \subseteq E(V)$, the wedge condition for $V$ absorbs each factor and gives $\Ad_g(W_k) \subseteq V$; and $V$, being a closed convex cone, absorbs the closed conical hull, giving $W_{k+1} \subseteq V$. The subspaces $E_k$ are nondecreasing, so finite dimensionality gives an index $k$ with $E_{k+1}=E_k$. The cone may still enlarge on that step. By construction, however, $W_{k+1}$ is invariant under $\Ad(G_{E_k})$: each element of this group permutes the adjoint-orbit union and hence preserves its closed conical hull. Since $E(W_{k+1})=E_{k+1}=E_k$, this is precisely invariance under the group generated by its edge, and $W_{k+1}$ is a closed Lie wedge containing $G$. The preceding claim puts it inside every such wedge, proving $W_{k+1}=W(G)$. Applying the construction once more uses the same edge and the already established invariance, so $W_{k+2}=W_{k+1}$.
\end{proof}

\begin{remark}[Finite abstract edge-stage construction]
Finite dimensionality bounds the number of strict enlargements of the subspaces $E_k$, so an index with $E_{k+1}=E_k$ occurs after finitely many stages. The cone can still enlarge on the step that first produces this equality; the proof shows that $W_{k+1}$ is then a Lie wedge and that the next stage is fixed. This abstract argument does not by itself give an effective algorithm for a nonpolyhedral stage, because one must still represent the closed conical hull and determine its lineality.
\end{remark}

\begin{remark}[Determining the edge $E_k$]
When $W_k$ is supplied by a finite polyhedral representation, its lineality space $E_k=W_k\cap(-W_k)$ is determined by finite-dimensional linear programming. For a nonpolyhedral $W_k$, the same formula defines the edge abstractly, but effective computation requires both a representation of $W_k$ and an edge oracle. Finite dimensionality alone does not supply either one.
\end{remark}

\begin{remark}[Local wedge and global tangent wedge]
Definition~\ref{def:DLWP} and the fixed point of Proposition~\ref{prop:construction_W(G)_saturation} give the same local object, the minimal closed Lie wedge $W(G)$. The tangent wedge $L(\Sigma(G))$ of Definition~\ref{def:W(G)_semigroup} is the global Lie saturate \cite[Thm.~II.5(b)]{dirr2009lie}; it matches $W(G)$ when the latter is global and otherwise is an upper bound by \eqref{eq:saturate_vs_tangent}. All subsequent arguments use this inclusion.
\end{remark}

In general, building $W(G)$ calls for the iterative closed-wedge construction and edge computation above. The gap between $\cone(G)$ and the global tangent wedge is central to control theory, where it governs reachability \cite{dirr2009lie}; a fast-unitary-control reachability wedge for a Lindbladian is computed in \cite[Prop.~2.6]{malvetti2024reachability}. For classifying the intrinsic dynamics fixed by $G$, we work with the minimal closed Lie wedge throughout.

\subsection{The Structure of the System Lie Algebra $\g(G)$}

The first step in the classification is the analysis of the algebraic structure of $\g(G)$. We apply the Levi decomposition:
\begin{equation}
\g(G) = \mf{s}(G) \ltimes \mf{r}(G).
\end{equation}

The structure of $\mf{s}(G)$ (the semisimple Levi factor) and $\mf{r}(G)$ (the solvable radical), along with the action $\rho$ of $\mf{s}(G)$ on $\mf{r}(G)$, determines the algebraic complexity of the system. This decomposition provides the basis for the classification of complexity:

\begin{itemize}
    \item \textbf{Solvable type:} If $\mf{s}(G)=\{0\}$, then $\g(G)$ is solvable. In every finite-dimensional complex representation, or after complexifying a real representation, Lie's theorem gives simultaneous upper triangularization and a complete invariant flag. Abelian, nilpotent, and general solvable radicals remain distinct algebraic cases. Pure dephasing and cascaded decay illustrate two particular realizations.
    \item \textbf{Semisimple type (Hamiltonian-only):} If $\mf{r}(G)=\{0\}$, then $\g(G)$ is semisimple. Corollary~\ref{cor:hamiltonian_rigidity}(2) makes $W(G)$ Hamiltonian-only, so this is the closed-system, no-synthesizable-dissipation class. Its algebraic complexity is recorded by the number and type of its simple factors.
    \item \textbf{Mixed reductive type:} If both $\mf{s}(G)$ and $\mf{r}(G)$ are non-trivial and $\rad(\g(G))=Z(\g(G))$, then $\g(G)$ is reductive. Equivalently, its adjoint representation is completely reducible \cite{knapp1996lie}, and the Levi product is the direct sum $\g(G)=\mf{s}(G)\oplus\mf{r}(G)$.
    \item \textbf{Mixed non-reductive type:} If both factors are non-trivial and $\rad(\g(G))\neq Z(\g(G))$, then $\g(G)$ is non-reductive. A non-trivial Levi action, $\rho\neq 0$, specifies one branch of this class. The corresponding DLWP is action-coupled exactly when both $W_{\rr}$ and $C_{\text{quot}}$ are non-trivial. The class also includes $\rho=0$ with a non-abelian radical. The algebraic data alone imply no conclusion about mixing or relaxation.
\end{itemize}

The analysis of the System Lie Algebra provides the algebraic invariants that form the basis of the classification. The subsequent analysis focuses on how these algebraic structures are constrained by the geometry of the Dynamical Lie Wedge.

\subsection{The Geometry of the Embedding $W(G) \hookrightarrow \g(G)$}

The structure of the DLWP is encoded in the geometry of how the Dynamical Lie Wedge $W(G)$ is embedded within the System Lie Algebra $\g(G)$. This embedding captures the tension between the algebraic structure generated by the Lie bracket and the physical constraints imposed by complete positivity.

\paragraph*{Key Geometric Features}

The geometric features of the wedge furnish the key local invariants of the specified system. They record its reversible edge, lineality, and convex structure.

\begin{itemize}
    \item \textbf{The Edge $E(W(G))$:} The subalgebra $E(W(G)) = W(G) \cap (-W(G))$ consists of the reversible (Hamiltonian) generators contained in the physical local wedge. Its structure determines the available reversible symmetries; a larger edge means more reversible directions.
    \item \textbf{The Interior $\text{int}(W(G))$:} If the interior of the wedge is non-empty relative to $\g(G)$, then the wedge is full-dimensional in its system algebra. Interior membership is a convex-geometric property; the relaxation or mixing produced by an interior generator remains a separate dynamical question.
    \item \textbf{The Generating Property:} We say $W(G)$ is generating in $\g(G)$ if $\Span(W(G)) = \g(G)$ (equivalently, $\g(G) = W(G) - W(G)$). This is a linear-span property of the local wedge. It does not by itself establish controllability or equality with the global tangent wedge.
    \item \textbf{Pointedness:} If $E(W(G)) = \{0\}$, the wedge is pointed. It then contains no nonzero reversible direction. This is a statement about the local wedge, not a global reachability criterion.
\end{itemize}

\paragraph*{The Classification Goal}

The classification problem is reformulated as the problem of classifying the isomorphism classes of DLWPs. Two DLWPs $(\g_1, W_1)$ and $(\g_2, W_2)$ are isomorphic if there exists a Lie algebra isomorphism $\phi: \g_1 \to \g_2$ such that $\phi(W_1) = W_2$. This definition captures both the algebraic structure and the geometric embedding.

The central challenge addressed in this paper is to determine how the algebraic classification of $\g(G)$ (the Levi decomposition) manifests in the geometry of the wedge $W(G)$. This requires understanding how the geometric constraints of the wedge interact with the algebraic structure of the semidirect product, particularly in the non-reductive case. The classification of DLWPs provides a rigorous mathematical framework for classifying the complexity of open quantum dynamics.

\subsection{Framework for DLWP Classification}
\label{sec:algorithm_dlwp}

To put the framework to work, we follow a systematic procedure for classifying a given open quantum system specified by its generators $G$. The steps are not uniform in cost. Standard constructive algorithms are available for the algebraic tasks in Steps~1 and~2, whereas the wedge computation in Step~3, the FCC test in Step~4, and the spectral test in Step~5 are analytic and can be expensive.

\begin{enumerate}
    \item \textbf{Algebraic Closure:} Compute the System Lie Algebra $\g(G) = \Lie(G)$ by iteratively computing commutators until the dimension stabilizes.
    \item \textbf{Levi Decomposition:} Split $\g(G) = \mf{s} \ltimes \mf{r}$ into a semisimple factor $\mf{s}$ and the solvable radical $\mf{r}$. Over characteristic zero, constructive algorithms for this decomposition are standard \cite{Graaf2000LieAT}.
        \begin{itemize}
            \item If $\mf{r}=\{0\}$, the system has semisimple type (Hamiltonian-only).
            \item If $\mf{s}=\{0\}$, the system has solvable type.
            \item If both factors are non-zero and $\rad(\g)=Z(\g)$, the system has mixed reductive type.
            \item If both factors are non-zero and $\rad(\g)\neq Z(\g)$, the system has mixed non-reductive type. This includes $\rho=0$ with a non-abelian radical.
        \end{itemize}
        After assigning one of these four structural types, proceed to Step~3.
    \item \textbf{Edge Identification and Closed-Wedge Construction:}
    Apply the abstract closed-wedge iteration to obtain $W(G)=W_{k+1}$ and its stabilized edge $E(W)$. With a finite cone representation and an edge oracle, this step becomes an effective computation. The orbit-and-edge template is recorded in \cite[Sec.~II-D]{OMeara2011IllustratingTG}; exactness here is Proposition~\ref{prop:construction_W(G)_saturation}.
    \begin{remark}[Constructive Algebraic Stages]
   The abstract orbit-and-edge construction has no general complexity bound without a representation of each cone and an edge oracle. Standard constructive algorithms remain available for determining $\g(G)$ and its Levi decomposition $\mf{s} \ltimes \mf{r}$, even when the wedge geometry requires separate analytic work.
    \end{remark}
    \item \textbf{FCC Test (See Prop. \ref{prop:infinitesimal_fcc}):} Verify the Fiber Compatibility Condition either globally, by edge invariance, or equivalently infinitesimally, by subtangency of the commutator against the local tangent cone:
        \begin{equation}
        \text{Test: } [X_0, Y] \in T_Y(W) \quad \forall X_0 \in E(W), Y \in W.
        \end{equation}
        Operationally, writing $X_0=S_0+R_0$ and $Y=S+R$, this requires verifying that the algebraic shear $[R_0, R] + \rho(S_0)R-\rho(S)R_0$ lies within the contingent derivative $DW(S|R)([S_0, S])$, ensuring the geometric capacity of the fiber accommodates the control-induced drift.
    \item \textbf{Spectral Analysis:} If the action of $\mf{s}$ on $\mf{r}$ is non-trivial, compute the spectrum of the relevant operators $\rho(S)$. Record any expanding or shearing component from that calculation. The FCC test remains the separate edge-invariance check of Step~4 and does not supply spectral compensation.
\end{enumerate}

\paragraph*{Computational Aspects and Foundational Value}

Applying the framework to a given system is demanding. Computing the System Lie Algebra $\g(G)$ means iterating commutators of superoperators on a $d^2$ dimensional space, so the ambient algebra $\g_{LK}$ has dimension $O(d^4)$; standard basis algorithms apply \cite{Graaf2000LieAT}, but the growth of the Hilbert space confines exact closure to few body systems, with the dimension of $\g(G)$ already a useful complexity indicator \cite{Schirmer2000Complete}. Determining $W(G)$ is harder still, bringing in the face structure, the dual cone, and the fiber variation of the non-reductive case, with tools from convex optimization and geometric control.

The framework identifies the structural Levi type and the FCC edge-invariance constraint on coupled wedge geometry. Complete positivity separately fixes the ambient physical cone. These data do not assign spectral behavior, a perturbative regime, a mean-field approximation, or a random-matrix model to a system.

\section{The Levi-Type Decomposition Theorem: Proof and Structure}
\label{sec:levi_type_theorem}

This section contains the central result of this work, a generalization of the Levi Decomposition Theorem to Dynamical Lie Wedge Pairs. The theorem splits the Dynamical Lie Wedge $W(G)$ along the Levi decomposition of the System Lie Algebra $\g(G)$, and this splitting supports the four structural Levi types used in the remainder of the paper. We build it in stages. Section~\ref{sec:proof_levi_type} attaches three components to a given wedge and works out their individual properties; Section~\ref{sec:reconstruction_levi} reverses direction and reconstructs a Lie wedge from abstract fiber data; Section~\ref{sec:decomposition_theorem_statement} assembles the two into the decomposition theorem. The wedge-theoretic vocabulary (wedges, edges, tangent cones, invariance) is that of the mathematical theory of Lie semigroups \cite{hilgert1989lie}. The key new element is the Fiber Compatibility Condition, which governs the coupling between the components in the semidirect product structure.

\subsection{Components of the Decomposition and Their Properties}
\label{sec:proof_levi_type}

We consider a DLWP $(\g, W)$. Let $\g = \mf{s} \ltimes \mf{r}$ be the Levi decomposition of $\g$, where $\mf{r}$ is the solvable radical and $\mf{s}$ is the semisimple Levi factor. Let $\pi_{\s}: \g \to \mf{s} \cong \g/\mf{r}$ be the canonical projection homomorphism, with $\Ker(\pi_{\s}) = \mf{r}$; we identify $\g$ with the vector space $\mf{s} \oplus \mf{r}$ via the fixed Levi section. We adapt the toolkit of the mathematical theory of semigroups \cite{hilgert1989lie} to the physical constraints of GKLS generators.

\begin{definition}[Components of the decomposition]
\label{def:fiber_cone}
Let $(\g, W)$ be a DLWP as above.
\begin{enumerate}
    \item The Radical Wedge is the intersection of $W$ with the solvable radical,
    \begin{equation}
    W_{\rr} = W \cap \mf{r}.
    \end{equation}
    \item The Quotient Cone is the canonical projection of $W$ onto the Levi factor,
    \begin{equation}
    C_{\text{quot}} = \pi_{\s}(W).
    \end{equation}
    \item For $S \in C_{\text{quot}}$, the fiber $W(S)$ is the nonempty closed convex set of radical components that complete $S$ to an element of the wedge,
    \begin{equation}
    W(S) = \{R \in \mf{r} \mid S+R \in W\}.
    \end{equation}
    The set-valued assignment $S \mapsto W(S)$ is the Fiber Map.
\end{enumerate}
\end{definition}

Geometrically, $S + W(S)$ is the slice of $W$ over the base point $S$, the intersection of $W$ with the affine subspace $S+\mf{r}$. The wedge is recovered from its slices by the tautological identity
\begin{equation}
\label{eq:fiber_union}
W = \bigcup_{S \in C_{\text{quot}}} (S + W(S)).
\end{equation}
All the structure therefore resides in the properties these components inherit from the Lie wedge $W$, which we now establish one component at a time.

The analysis of a given Lie wedge here, and the reconstruction from abstract data in the next subsection, use the standard tools for wedges and semidirect products \cite{hilgert1989lie}.

\paragraph{The radical wedge.}

We first characterize the intersection of $W$ with the radical $\mf{r}$. This component captures the part of the dynamics confined to the solvable ideal.

\begin{lemma}[Structure of the Radical Wedge]
\label{lem:radical_wedge}
The radical wedge $W_{\rr} = W \cap \mf{r}$ is a Lie wedge in the Lie algebra $\mf{r}$. Its edge is $E(W_{\rr}) = E(W) \cap \mf{r}$.
\end{lemma}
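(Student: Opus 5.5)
We will verify the three defining properties of a Lie wedge for $W_{\rr}=W\cap\mf r$ inside $\mf r$: closedness, convexity and conicity, and edge invariance. The edge formula is proved first, because the invariance argument uses it.

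The convex-geometric part is immediate. Since $\mf r$ is a linear subspace of the finite-dimensional space $\g$, it is closed and is a convex cone. Hence $W\cap\mf r$ is again a closed convex cone, now viewed in $\mf r$. For the edge we compute directly:
\[
E(W_{\rr})=(W\cap\mf r)\cap(-(W\cap\mf r))=(W\cap(-W))\cap(\mf r\cap(-\mf r))=E(W)\cap\mf r ,
\]
using $-\mf r=\mf r$. This proves the second assertion of Lemma~\ref{lem:radical_wedge}.

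For the invariance condition \eqref{eq:lie_wedge_condition} in the algebra $\mf r$, take $X\in E(W_{\rr})=E(W)\cap\mf r$. The relevant operator is $\ad_{\mf r}(X)=\ad(X)|_{\mf r}$. Because $\mf r$ is an ideal, and in particular a subalgebra, $\ad(X)$ maps $\mf r$ into $\mf r$. Therefore $e^{\ad(X)}$ preserves $\mf r$ and restricts there to $e^{\ad_{\mf r}(X)}$. Because $X\in E(W)$ and $W$ is a Lie wedge in $\g$, we also have $e^{\ad(X)}(W)\subseteq W$. Combining the two facts gives
\[
e^{\ad_{\mf r}(X)}(W_{\rr})\subseteq e^{\ad(X)}(W)\cap e^{\ad(X)}(\mf r)\subseteq W\cap\mf r=W_{\rr}.
\]

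The only point needing care is to check the invariance intrinsically in $\mf r$, with the exponential of the restricted adjoint, rather than with $\g$-adjoint flows. The ideal property of $\mf r$ handles this: the restriction of $e^{\ad(X)}$ to $\mf r$ is exactly $e^{\ad_{\mf r}(X)}$. An alternative route is subtangency (Proposition~\ref{prop:subtangency}). One would check $[X,Y]\in T_Y(W)\cap\mf r$ and then that $T_Y(W)\cap\mf r\subseteq T_Y(W_{\rr})$. That second inclusion can fail for general convex cones. The exponential argument avoids it entirely, so we use the exponential argument.
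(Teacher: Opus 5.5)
Your proof is correct and follows the paper's argument. You get the closed convex cone by intersection, compute the edge by distributing $W\cap(-W)$ over $\mf{r}=-\mf{r}$, and obtain invariance from $e^{\ad(X)}(W)\subseteq W$ together with $e^{\ad(X)}(\mf{r})\subseteq\mf{r}$ for the ideal $\mf{r}$. Your identification of $e^{\ad(X)}|_{\mf{r}}$ with $e^{\ad_{\mf{r}}(X)}$ spells out a point the paper leaves implicit, and the inclusion $e^{\ad(X)}(W\cap\mf{r})\subseteq e^{\ad(X)}(W)\cap e^{\ad(X)}(\mf{r})$ already suffices where the paper invokes equality.
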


\begin{proof}
\textbf{1. Closed Convex Cone Structure:} $W_{\rr}$ is the intersection of a closed convex cone $W$ and a subspace $\mf{r}$. The intersection of closed convex sets is closed and convex. Thus, $W_{\rr}$ is a closed convex cone in $\mf{r}$.

\textbf{2. Identification of the Edge $E(W_{\rr})$:}
We analyze the definition of the edge:
\begin{align}
E(W_{\rr}) &= W_{\rr} \cap (-W_{\rr}) = (W \cap \mf{r}) \cap ((-W) \cap \mf{r}) \\
&= (W \cap (-W)) \cap (\mf{r} \cap \mf{r}) = E(W) \cap \mf{r}.
\end{align}
This identifies the reversible generators within the radical component.

\textbf{3. Verification of the Lie Wedge Condition:} We must show $e^{\ad(X)}(W_{\rr}) \subseteq W_{\rr}$ for all $X \in E(W_{\rr})$.

Let $X \in E(W_{\rr})$. By the identification above, $X \in E(W)$. Since $W$ is assumed to be a Lie wedge in $\g$, it satisfies the invariance condition: $e^{\ad(X)}(W) \subseteq W$.

We also use the fact that the radical $\mf{r}$ is an ideal in $\g$. Ideals are invariant under the adjoint action of the entire algebra: $e^{\ad(Y)}(\mf{r}) = \mf{r}$ for all $Y \in \g$. This follows directly from the ideal property: $\ad(Y)(\mf{r}) \subseteq \mf{r}$ gives $e^{\pm\ad(Y)}(\mf{r}) \subseteq \mf{r}$, and the two inclusions force equality. (The radical is in fact a characteristic ideal, preserved by all automorphisms, but the ideal property already suffices here.) In particular, this holds for $X \in E(W_{\rr})$.

The adjoint action $e^{\ad(X)}$ is an automorphism of $\g$. Automorphisms distribute over intersections:
\begin{equation}
e^{\ad(X)}(W_{\rr}) = e^{\ad(X)}(W \cap \mf{r}) = e^{\ad(X)}(W) \cap e^{\ad(X)}(\mf{r}).
\end{equation}

Substituting the invariance properties:
\begin{equation}
e^{\ad(X)}(W_{\rr}) = e^{\ad(X)}(W) \cap \mf{r}.
\end{equation}

Since $e^{\ad(X)}(W) \subseteq W$, the intersection with $\mf{r}$ satisfies the inclusion:
\begin{equation}
e^{\ad(X)}(W) \cap \mf{r} \subseteq W \cap \mf{r} = W_{\rr}.
\end{equation}

Therefore, $e^{\ad(X)}(W_{\rr}) \subseteq W_{\rr}$. $W_{\rr}$ satisfies the definition of a Lie wedge within the subalgebra $\mf{r}$.
\end{proof}

\paragraph{The quotient cone.}

We next analyze the projection of the wedge $W$ onto the selected semisimple Levi factor $\mf{s}$. This projection records quotient coordinates of admissible full lifts; it has no standalone spectral or mixing interpretation.

\begin{lemma}[Structure of the Quotient Cone]
\label{lem:quotient_cone}
The quotient cone $C_{\text{quot}} = \pi_{\s}(W)$ is a convex cone in $\mf{s}$ (it need not be closed; see Remark~\ref{rem:quotient_closedness}). The projected edge $E_{\text{quot}} = \pi_{\s}(E(W))$ is a subalgebra of $\mf{s}$ and is automatically closed. The quotient cone is invariant under the adjoint action of $E_{\text{quot}}$:
\begin{equation}
e^{\ad(S_0)}(C_{\text{quot}}) \subseteq C_{\text{quot}} \qquad \text{for all } S_0 \in E_{\text{quot}}.
\end{equation}
\end{lemma}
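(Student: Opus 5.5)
The plan is to derive every assertion from two facts: $\pi_{\s}$ is a Lie algebra homomorphism, and $W$ is a closed convex cone invariant under $e^{\ad X}$ for $X\in E(W)$. Here $\pi_{\s}:\g\to\s$ is projection along the ideal $\mf r$ onto the Levi subalgebra $\s$. By the bracket formula \eqref{eq:semidirect_bracket}, the $\s$-component of $[S_1+R_1,S_2+R_2]$ is $[S_1,S_2]$. So $\pi_{\s}([X,Y])=[\pi_{\s}X,\pi_{\s}Y]$, and $\pi_{\s}$ is a homomorphism with kernel $\mf r$.

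\emph{Convexity.} $C_{\text{quot}}$ is a convex cone because it is the linear image of a convex cone: $\lambda\pi_{\s}(X)=\pi_{\s}(\lambda X)$ and $\pi_{\s}(X)+\pi_{\s}(Y)=\pi_{\s}(X+Y)$. Closedness is not claimed, so nothing more is needed here.

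\emph{The projected edge.} The edge $E(W)$ is a linear subspace. Before using it as a subalgebra I must check that it is one, since the paper has not stated this. For $X,Y\in E(W)$, invariance gives $e^{t\ad X}Y\in W$ for all $t\in\R$. Differentiating at $t=0$, using closedness of $W$ and the fact that $\pm Y\in W$, gives $\pm[X,Y]\in \overline{\cone(W-\{\pm Y\})}$. Concretely, the difference quotients $t^{-1}(e^{t\ad X}Y-Y)$ lie in $W+\R Y$, which equals $W$ because $\R Y\subseteq E(W)\subseteq W$. Since $W$ is closed, the limit $[X,Y]$ lies in $W$. The same argument with $-X$ gives $-[X,Y]\in W$, so $[X,Y]\in E(W)$.

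Hence $E(W)$ is a subalgebra, and its homomorphic image $E_{\text{quot}}=\pi_{\s}(E(W))$ is a subalgebra of $\s$. It is a linear subspace of the finite-dimensional space $\s$, so it is automatically closed.

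\emph{Invariance.} Let $S_0\in E_{\text{quot}}$ and pick a lift $X_0=S_0+R_0\in E(W)$. Because $\pi_{\s}$ is a homomorphism, it intertwines the adjoint actions, $\pi_{\s}\circ\ad X_0=\ad S_0\circ\pi_{\s}$. Exponentiating gives $\pi_{\s}\circ e^{\ad X_0}=e^{\ad S_0}\circ\pi_{\s}$. Then
\[
e^{\ad S_0}(C_{\text{quot}})=\pi_{\s}(e^{\ad X_0}W)\subseteq\pi_{\s}(W)=C_{\text{quot}},
\]
using the Lie wedge condition \eqref{eq:lie_wedge_condition} for $X_0\in E(W)$.

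The only step that is not purely formal is the subalgebra property of $E(W)$. There I rely on the difference-quotient argument above rather than on Proposition~\ref{prop:subtangency}. Everything else is functoriality of $\pi_{\s}$.
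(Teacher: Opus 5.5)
Your proof is correct and follows the paper's argument: convexity as the linear image of a cone, $E_{\text{quot}}$ as the homomorphic image of the subalgebra $E(W)$, and invariance via a lift $X_0\in E(W)$ and the intertwining $\pi_{\s}\circ e^{\ad X_0}=e^{\ad S_0}\circ\pi_{\s}$. The only addition is your difference-quotient proof that $E(W)$ is closed under the bracket, a standard fact the paper asserts without proof.
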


\begin{proof}
\textbf{1. Convex Cone Structure:} The projection $\pi_{\s}: \g \to \mf{s}$ is a linear map. The image of a convex cone under a linear map is always a convex cone.

\textbf{2. The Projected Edge:} The edge $E(W)$ is a subalgebra of $\g$, and $\pi_{\s}$ is a Lie algebra homomorphism, so $E_{\text{quot}} = \pi_{\s}(E(W))$ is a subalgebra of $\mf{s}$. As the linear image of a subspace it is itself a subspace, and every linear subspace of a finite-dimensional vector space is closed. No such argument is available for $C_{\text{quot}}$, whose closedness is a strictly stronger property (Remark~\ref{rem:quotient_closedness}).

\textbf{3. Invariance under $E_{\text{quot}}$:} We must show that $e^{\ad(S_0)}(C_{\text{quot}}) \subseteq C_{\text{quot}}$ for all $S_0 \in E_{\text{quot}}$.

Let $S_0 \in E_{\text{quot}}$. By definition of the projection, there exists $R_0 \in \mf{r}$ such that $X_0 = S_0+R_0 \in E(W)$.
Let $S \in C_{\text{quot}}$. Similarly, there exists $R \in \mf{r}$ such that $X = S+R \in W$.

Since $W$ is a Lie wedge and $X_0 \in E(W)$, the invariance condition holds: $e^{\ad(X_0)}(X) \in W$.

We analyze the projection of this transformed element onto $\mf{s}$. The projection $\pi_{\s}$ is not just a linear map; it is a Lie algebra homomorphism (since $\mf{r}$ is the kernel). A fundamental property of Lie algebra homomorphisms is that they intertwine the adjoint representations of the respective algebras. Infinitesimally:
\begin{equation}
\pi_{\s} \circ \ad(Y) = \ad(\pi_{\s}(Y)) \circ \pi_{\s} \quad \forall Y \in \g.
\end{equation}
This property extends to the exponentiated (group) action:
\begin{equation}
\pi_{\mathfrak{s}}\circ e^{\operatorname{ad}(Y)} = e^{\operatorname{ad}(\pi_{\mathfrak{s}}(Y))}\circ \pi_{\mathfrak{s}}\quad \forall\,Y\in\mathfrak{g}.
\end{equation}

Applying this to the elements $X_0$ and $X$:
\begin{equation}
\pi_{\s}(e^{\ad(X_0)}(X)) = e^{\ad(\pi_{\s}(X_0))}(\pi_{\s}(X)),
\end{equation}
and substituting the components:
\begin{equation}
\pi_{\s}(e^{\ad(X_0)}(X)) = e^{\ad(S_0)}(S).
\end{equation}

Since $e^{\ad(X_0)}(X) \in W$, its projection must belong to the image of $W$ under $\pi_{\s}$, which is $C_{\text{quot}}$. Thus, $e^{\ad(S_0)}(S) \in C_{\text{quot}}$. This establishes the invariance of $C_{\text{quot}}$ under the action of $E_{\text{quot}}$.
\end{proof}

\begin{remark}
The quotient cone is generally not a Lie wedge in $\mf{s}$. The Lie wedge condition would demand invariance under the edge of the wedge itself, $E(C_{\text{quot}})$, whereas $E_{\text{quot}}$ is generally only a subalgebra of $E(C_{\text{quot}})$. Thus $C_{\text{quot}}$ is merely a cone invariant under the action of the specific subalgebra $E_{\text{quot}}$. The distinction matters for the classification.
\end{remark}

\begin{remark}[Closedness of the quotient cone]
\label{rem:quotient_closedness}
Linear images of closed convex cones can fail to be closed, and $C_{\text{quot}}$ is no exception. A standard example is the closed convex cone $C = \{(x,y,z) \in \R^3 \mid x \ge 0,\ z \ge 0,\ xz \ge y^2\}$, whose image under the projection $(x,y,z) \mapsto (y,z)$ is the non-closed set $\{(y,z) \mid z > 0\} \cup \{(0,0)\}$. Closedness of $C_{\text{quot}}$ is accordingly a genuine additional hypothesis whenever an argument needs it; the decomposition theorem below does not. Three sufficient conditions cover the cases of interest.
\begin{enumerate}
    \item \textbf{Recession condition.} If $W \cap \mf{r} \subseteq E(W)$, then $\pi_{\s}(W)$ is closed. To see this, fix an inner product on $\g$ and let $S_n = \pi_{\s}(X_n) \to S$ with $X_n \in W$, where each $X_n$ is chosen of minimal norm in the closed set $W \cap (S_n + \mf{r})$. If some subsequence of $(X_n)$ is bounded, a limit point $X \in W$ has $\pi_{\s}(X) = S$, and we are done. Otherwise $\|X_n\| \to \infty$, and after passing to a subsequence $X_n/\|X_n\| \to Y \in W$ with $\|Y\| = 1$ and $\pi_{\s}(Y) = \lim_n S_n/\|X_n\| = 0$, so that $Y \in W_{\rr} \subseteq E(W)$ and hence $-Y \in W$. For $t \ge 0$ the element $X_n - tY$ then lies in $W \cap (S_n + \mf{r})$, while $\langle X_n, Y \rangle = \|X_n\|\,(1+o(1)) > 0$ for large $n$, so the choice $t = \langle X_n, Y\rangle$ yields $\|X_n - tY\|^2 = \|X_n\|^2 - \langle X_n, Y\rangle^2 < \|X_n\|^2$, contradicting minimality. In the GKLS setting this condition typically fails: the radical wedge contains genuinely dissipative rays, and these never lie in the edge.
    \item \textbf{Polyhedrality.} If $W$ is polyhedral, its linear image $\pi_{\s}(W)$ is again polyhedral, hence closed.
    \item \textbf{Full coherent control on the Levi factor.} If $E_{\text{quot}} = \mf{s}$, the chain $\mf{s} = E_{\text{quot}} \subseteq C_{\text{quot}} \subseteq \mf{s}$ forces $C_{\text{quot}} = \mf{s}$, a subspace and hence closed. This condition covers the depolarizing-noise example of Section~\ref{sec:example_reductive}, where $C_{\text{quot}} = \mf{su}(2)$.
\end{enumerate}
For edges, by contrast, no hypothesis is ever needed: $E_{\text{quot}} = \pi_{\s}(E(W))$ is the linear image of a subspace, and $E(C_{\text{quot}}) = C_{\text{quot}} \cap (-C_{\text{quot}})$ is the lineality space of a convex cone; both are subspaces and therefore closed in finite dimension. Corollary~\ref{cor:structure_edge} is unaffected by the possible non-closedness of $C_{\text{quot}}$.
\end{remark}

\paragraph{The fiber map.}

We turn to the structure transverse to the radical, encoded in the fiber map of Definition~\ref{def:fiber_cone}. Its basic geometric and topological properties are the following.

\begin{lemma}[Properties of the Fiber Map]
\label{lem:fiber_map_properties}
The Fiber Map $S \mapsto W(S)$ satisfies the following properties:
\begin{enumerate}
    \item $W(S)$ is a nonempty closed convex set in $\mf{r}$ for all $S \in C_{\text{quot}}$.
    \item $W(0) = W_{\rr}$. (The fiber above the origin is the Radical Wedge).
    \item Subadditivity: $W(S_1) + W(S_2) \subseteq W(S_1+S_2)$.
    \item Homogeneity (Scaling): $\lambda W(S) \subseteq W(\lambda S)$ for $\lambda \ge 0$. Since $W$ is a cone, equality holds for $\lambda > 0$.
    \item Closed graph: the graph $\Gamma = \{(S, R) \in \mf{s} \times \mf{r} \mid S \in C_{\text{quot}},\ R \in W(S)\}$ is closed in $\mf{s} \times \mf{r}$.
\end{enumerate}
\end{lemma}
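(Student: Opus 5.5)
The plan is to read every property off a single observation. With $\g$ identified with $\mf{s}\oplus\mf{r}$ through the fixed Levi section, the map $\iota_S:\mf{r}\to\g$, $R\mapsto S+R$, is an affine homeomorphism onto the slice $S+\mf{r}$. Hence
\[
W(S)=\iota_S^{-1}\bigl(W\cap(S+\mf{r})\bigr).
\]
Each item then becomes a statement about $W$ being a closed convex cone, intersected with affine subspaces parallel to $\mf{r}$. No Lie-wedge invariance is needed anywhere. The bracket structure plays no role, so all five items are purely convex-geometric.

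For item~1, nonemptiness is the definition of $C_{\text{quot}}=\pi_{\s}(W)$: if $S\in C_{\text{quot}}$, some $X\in W$ has $\pi_{\s}(X)=S$. Then $X-S\in\mf{r}$ and $X-S\in W(S)$. Convexity follows from the convexity of $W$. If $R_1,R_2\in W(S)$ and $t\in[0,1]$, then
\[
S+\bigl(tR_1+(1-t)R_2\bigr)=t(S+R_1)+(1-t)(S+R_2)\in W.
\]
Closedness holds because $W\cap(S+\mf{r})$ is closed in finite dimension and $\iota_S$ is a homeomorphism. Item~2 is immediate: $W(0)=\{R\in\mf{r}\mid R\in W\}=W\cap\mf{r}=W_{\rr}$.

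For item~3, take $R_i\in W(S_i)$. Then $(S_1+R_1)+(S_2+R_2)\in W$ because $W$ is closed under addition. This element has Levi component $S_1+S_2$ and radical component $R_1+R_2$. In particular $S_1+S_2\in C_{\text{quot}}$, so the right-hand fiber is defined, and $R_1+R_2\in W(S_1+S_2)$. For item~4 with $\lambda>0$, $\lambda(S+R)=\lambda S+\lambda R\in W$ gives $\lambda W(S)\subseteq W(\lambda S)$. Applying the same inclusion with $\lambda^{-1}$ to $\lambda S$ gives the reverse inclusion, hence equality. For $\lambda=0$ one should read $0\cdot W(S)$ as $\{0\}$, since the fiber is nonempty. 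Then $\{0\}\subseteq W(0)=W_{\rr}$ because $0\in W$, which gives the inclusion without claiming equality.

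Item~5 is the only point needing care, because $C_{\text{quot}}$ itself may fail to be closed (Remark~\ref{rem:quotient_closedness}). The plan is to note that the graph $\Gamma$ is exactly the image of $W$ under the linear isomorphism $X\mapsto(\pi_{\s}(X),X-\pi_{\s}(X))$ from $\g$ to $\mf{s}\times\mf{r}$. The constraint $S\in C_{\text{quot}}$ is then automatic and not an extra condition. Concretely, take $(S_n,R_n)\in\Gamma$ converging to $(S,R)$. Then $S_n+R_n\in W$ converges to $S+R$, which lies in $W$ because $W$ is closed. Hence $S=\pi_{\s}(S+R)\in C_{\text{quot}}$ and $R\in W(S)$.

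The point to emphasize is that closedness of the graph does not require closedness of its domain $C_{\text{quot}}$. A limit of base points that escapes $C_{\text{quot}}$ can occur only along sequences whose fibers run off to infinity, as in the example of Remark~\ref{rem:quotient_closedness}. Such sequences do not converge in $\mf{s}\times\mf{r}$, so they do not affect item~5.
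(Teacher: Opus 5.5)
Your proposal is correct and follows essentially the same route as the paper. Each item is read off from $W$ being a closed convex cone, and item~5 comes from identifying $\Gamma$ with $W$ under the linear isomorphism $(S,R)\mapsto S+R$, so no closedness of $C_{\text{quot}}$ is needed.
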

\begin{proof}
These properties follow directly from the definition of the fiber and the fact that $W$ is a closed convex cone.
\begin{enumerate}
    \item The translate $S + W(S)$ is the intersection of $W$ with the affine subspace $S+\mf{r}$. Since $W$ is closed and convex, and the affine subspace is closed and convex, $W(S)$ is closed and convex. It is nonempty precisely because $S \in \pi_{\s}(W)$.
    \item $W(0) = \{R \in \mf{r} \mid 0+R \in W\} = W \cap \mf{r} = W_{\rr}$.
    \item Subadditivity follows because $W$ is a convex cone. Let $R_1 \in W(S_1)$ and $R_2 \in W(S_2)$. Then $X_1 = S_1+R_1 \in W$ and $X_2 = S_2+R_2 \in W$. Convexity gives $\tfrac{1}{2}(X_1+X_2) \in W$, and the cone property absorbs the factor $2$, so $X_1+X_2 \in W$.
    $X_1+X_2 = (S_1+S_2) + (R_1+R_2)$. By definition of the fiber above $S_1+S_2$, $R_1+R_2 \in W(S_1+S_2)$. Thus $W(S_1) + W(S_2) \subseteq W(S_1+S_2)$.
    \item Homogeneity follows from the cone property of $W$. If $R \in W(S)$, then $S+R \in W$. For $\lambda > 0$, $\lambda(S+R) = \lambda S + \lambda R \in W$. Thus $\lambda R \in W(\lambda S)$, so $\lambda W(S) \subseteq W(\lambda S)$. Conversely, if $R' \in W(\lambda S)$, then $\lambda S + R' \in W$. Since $W$ is a cone, $S + R'/\lambda \in W$. Thus $R'/\lambda \in W(S)$, so $R' \in \lambda W(S)$. Equality holds for $\lambda > 0$. For $\lambda=0$, $0 \cdot W(S) = \{0\} \subseteq W(0)=W_{\rr}$.
    \item Under the linear isomorphism $\mf{s} \times \mf{r} \to \g$, $(S,R) \mapsto S+R$, the graph $\Gamma$ is carried onto $W$ itself: if $S+R \in W$, then $S = \pi_{\s}(S+R) \in C_{\text{quot}}$ and $R \in W(S)$ by definition, and the converse inclusion is immediate. Linear isomorphisms of finite-dimensional spaces are homeomorphisms, so the closedness of $\Gamma$ in $\mf{s} \times \mf{r}$ is the same statement as the closedness of $W$. The topology of $C_{\text{quot}}$ plays no role here, which matters because $C_{\text{quot}}$ itself may fail to be closed (Remark~\ref{rem:quotient_closedness}).
\end{enumerate}
\end{proof}

In the language of convex analysis, Lemma~\ref{lem:fiber_map_properties} says that the fiber map is a closed convex process from $\mf{s}$ to $\mf{r}$, a set-valued map whose graph is a closed convex cone \cite[\S 39]{rockafellar1970convex}. Under the identification $(S,R) \mapsto S+R$ that graph is the wedge $W$ itself, and items (1)--(5) are the standard properties of such processes. The lemma is therefore convex-analytic bookkeeping. The substance of the decomposition lies elsewhere, in the interaction of this convex structure with the Lie bracket, which the Fiber Compatibility Condition isolates.

\paragraph{The Fiber Compatibility Condition.}

We come to the constraint that couples the components. It originates in the invariance condition defining a Lie wedge, and in the reductive case it degenerates: trivial coupling leaves the radical pointwise fixed under the edge, and the condition then asks only that the fiber map be constant along the edge orbits in the quotient cone.

\begin{proposition}[Necessity of the Fiber Compatibility Condition]
\label{lem:fiber_compatibility_main}
Let $W$ be a Lie wedge in $\g = \mf{s} \ltimes \mf{r}$. Then for every $X_0 = S_0+R_0 \in E(W)$ and every $S \in C_{\text{quot}}$,
\begin{equation}
\label{eq:FCC}
e^{\ad(X_0)}(S+W(S)) \subseteq e^{\ad(S_0)}(S) + W(e^{\ad(S_0)}(S)).
\end{equation}
We call \eqref{eq:FCC} the Fiber Compatibility Condition (FCC). The fiber on the right is well defined because $e^{\ad(S_0)}(S) \in C_{\text{quot}}$ by Lemma~\ref{lem:quotient_cone}.
\end{proposition}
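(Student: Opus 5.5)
The argument is essentially a one-line unpacking of the Lie wedge invariance condition, combined with the intertwining property of $\pi_{\s}$ already used in Lemma~\ref{lem:quotient_cone}. Fix $X_0 = S_0+R_0 \in E(W)$ and $S \in C_{\text{quot}}$, and take an arbitrary element of the left-hand side of \eqref{eq:FCC}. It has the form $e^{\ad(X_0)}(S+R)$ with $R \in W(S)$, so that $X := S+R \in W$. Since $W$ is a Lie wedge and $X_0 \in E(W)$, the invariance condition \eqref{eq:lie_wedge_condition} gives $Y := e^{\ad(X_0)}(X) \in W$.

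The next step is to identify the Levi and radical components of $Y$. Because $\pi_{\s}$ is a Lie algebra homomorphism with kernel $\mf{r}$, it intertwines the exponentiated adjoint actions. Hence
\[
\pi_{\s}(Y) = e^{\ad(\pi_{\s}(X_0))}(\pi_{\s}(X)) = e^{\ad(S_0)}(S) =: S'.
\]
This is exactly the computation in step 3 of the proof of Lemma~\ref{lem:quotient_cone}, and it shows $S' \in C_{\text{quot}}$. Writing $Y = S' + R'$ with $R' := Y - S' \in \mf{r}$, the element $R'$ lies in $\mf{r}$ precisely because $\Ker \pi_{\s} = \mf{r}$ and we identify $\g = \mf{s}\oplus\mf{r}$ via the fixed Levi section.

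Finally, $S'+R' = Y \in W$ means $R' \in W(S')$ by Definition~\ref{def:fiber_cone}. Therefore $e^{\ad(X_0)}(S+R) = S' + R' \in e^{\ad(S_0)}(S) + W(e^{\ad(S_0)}(S))$. Since $R \in W(S)$ was arbitrary, the inclusion \eqref{eq:FCC} follows.

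The only point that needs care is the decomposition of $Y$. The radical part $R'$ is generally not $e^{\ad(S_0)}(R)$: it mixes in contributions from $R_0$ through $\rho$ and through the bracket on $\mf{r}$. We avoid computing it explicitly; it suffices to know its $\mf{s}$-projection, which the homomorphism property supplies. One should also check that $e^{\ad(S_0)}$ preserves the chosen Levi factor $\mf{s}$. This holds because $S_0 \in \mf{s}$ and $\mf{s}$ is a subalgebra, so $S' \in \mf{s}$ really is the Levi coordinate of $Y$ with respect to the fixed section.
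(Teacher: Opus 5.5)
Your proof is correct and follows the paper's argument essentially step for step. Edge invariance puts $e^{\ad(X_0)}(S+R)$ in $W$, the intertwining property of $\pi_{\s}$ identifies its Levi coordinate as $e^{\ad(S_0)}(S)\in C_{\text{quot}}$, and the definition of the fiber then places the radical remainder in $W(e^{\ad(S_0)}(S))$, without computing that remainder explicitly.
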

\begin{proof}
We analyze the action of the edge $E(W)$ on an arbitrary element of the wedge $W$, decomposed into its fiber structure.

Let $X = S+R \in W$. By definition, $R \in W(S)$.
Let $X_0 = S_0+R_0 \in E(W)$. Since $W$ is a Lie wedge, the invariance condition holds: $X' = e^{\ad(X_0)}(X) \in W$.

We decompose $X'$ into its semisimple and radical components, $X' = S'+R'$.

\textbf{Semisimple Component $S'$:}
We use the intertwining property of the projection homomorphism $\pi_{\s}$ (as derived in Lemma \ref{lem:quotient_cone}):
\begin{equation}
S' = \pi_{\s}(X') = \pi_{\s}(e^{\ad(X_0)}(X)) = e^{\ad(\pi_{\s}(X_0))}(\pi_{\s}(X)) = e^{\ad(S_0)}(S).
\end{equation}
We know $S' \in C_{\text{quot}}$, as established in Lemma \ref{lem:quotient_cone}.

\textbf{Radical Component $R'$:}
The radical component is the remainder:
\begin{equation}
R' = X' - S' = e^{\ad(X_0)}(S+R) - e^{\ad(S_0)}(S).
\end{equation}
The explicit form of $R'$ involves the coupling $\rho$ through the semidirect-product adjoint action (infinitesimally, Eq. \eqref{eq:semidirect_bracket}) and the Baker--Campbell--Hausdorff series; in a non-reductive algebra the action of $e^{\ad(X_0)}$ does not split into independent actions on $\mf{s}$ and $\mf{r}$. For the present proof, the decomposition into components is all we need.

\textbf{Constraint Imposed by the Fiber Structure:}
Since $X' = S'+R' \in W$, by the definition of the fiber, the radical component $R'$ must belong to the fiber above the semisimple component $S'$:
\begin{equation}
R' \in W(S') = W(e^{\ad(S_0)}(S)).
\end{equation}
This is the constraint imposed by the geometry of the wedge: the radical component of the transformed element must lie within the fiber corresponding to the transformed semisimple component.

\textbf{Derivation of the FCC:}
We now consider the action of $e^{\ad(X_0)}$ on the entire fiber $S+W(S)$. The image is the set $e^{\ad(X_0)}(S+W(S))$.
We have shown that every element $X'$ in this image can be written as $X' = S'+R'$, where $S' = e^{\ad(S_0)}(S)$ and $R' \in W(S')$.

Therefore, the image of the fiber under the adjoint action must be contained within the fiber above the transformed base point $S'$:
\begin{equation}
e^{\ad(X_0)}(S+W(S)) \subseteq S' + W(S') = e^{\ad(S_0)}(S) + W(e^{\ad(S_0)}(S)).
\end{equation}
This is precisely \eqref{eq:FCC}.
\end{proof}

\subsection{Reconstruction from Fiber Data}
\label{sec:reconstruction_levi}

We now reverse direction and prove the converse, which is what makes the decomposition a genuine classification statement. The data are abstract; no ambient wedge is presupposed. In particular, the edge over which the invariance conditions are quantified must itself be defined from the data, and we do so explicitly. This removes the circularity that would arise if the data were required to be invariant under the edge of a wedge not yet constructed.

\begin{proposition}[Reconstruction of the Wedge]
\label{lem:sufficiency}
Let $\g = \mf{s} \ltimes \mf{r}$. Suppose we are given a convex cone $C \subseteq \mf{s}$ containing $0$, not assumed closed, together with an assignment of a set $W(S) \subseteq \mf{r}$ to each $S \in C$, subject to the following conditions:
\begin{enumerate}
    \item[(D1)] each $W(S)$ is a nonempty closed convex subset of $\mf{r}$;
    \item[(D2)] (subadditivity) $W(S_1)+W(S_2) \subseteq W(S_1+S_2)$ for all $S_1, S_2 \in C$;
    \item[(D3)] (homogeneity) $\lambda W(S) \subseteq W(\lambda S)$ for all $\lambda \ge 0$ and $S \in C$;
    \item[(D4)] (closed graph) the set $\Gamma = \{(S,R) \mid S \in C,\ R \in W(S)\}$ is closed in $\mf{s} \times \mf{r}$.
\end{enumerate}
Define from these data the candidate edge
\begin{equation}
\label{eq:reconstructed_edge}
\Ecal \;=\; \bigcup_{S \in C \cap (-C)} \bigl(S + (W(S) \cap -W(-S))\bigr),
\end{equation}
and assume in addition:
\begin{enumerate}
    \item[(D5)] (base invariance) $e^{\ad(\pi_{\s}(X_0))}(C) \subseteq C$ for every $X_0 \in \Ecal$;
    \item[(D6)] (compatibility) every $X_0 \in \Ecal$ and every $S \in C$ satisfy the FCC \eqref{eq:FCC}, with $W(\cdot)$ the given assignment.
\end{enumerate}
Then
\begin{equation}
W \;=\; \bigcup_{S \in C} (S + W(S))
\end{equation}
is a Lie wedge in $\g$ with
\begin{equation}
W \cap \mf{r} = W(0), \qquad \pi_{\s}(W) = C, \qquad E(W) = \Ecal,
\end{equation}
and the fiber map of $W$ in the sense of Definition~\ref{def:fiber_cone} is the given assignment $S \mapsto W(S)$. No other subset of $\g$ has these components.
\end{proposition}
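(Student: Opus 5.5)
We identify $W$ with the graph $\Gamma$ under the linear isomorphism $(S,R)\mapsto S+R$ of $\mf{s}\times\mf{r}$ onto $\g$, as in item~(5) of Lemma~\ref{lem:fiber_map_properties}. The argument then splits into four stages: the convex-cone structure, identification of the components, identification of the edge, and the Lie wedge condition.

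\textbf{Closed convex cone.} Closedness of $W$ is exactly (D4). For $\lambda\ge 0$ and $S+R\in W$, (D3) gives $\lambda R\in W(\lambda S)$; here $\lambda S\in C$ because $C$ is a cone containing $0$. For two points $S_1+R_1$ and $S_2+R_2$ of $W$, (D2) gives $R_1+R_2\in W(S_1+S_2)$. Hence $W$ is a closed convex cone.

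\textbf{Components.} By construction $\pi_{\s}(W)=C$, using (D1) nonemptiness. For the fibers, $\{R\mid S+R\in W\}=W(S)$, since the decomposition $\g=\mf{s}\oplus\mf{r}$ is direct, so $S+R=S'+R'$ forces $S=S'$. In particular $W\cap\mf{r}=W(0)$.

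\textbf{Edge.} A point $X=S+R$ lies in $E(W)$ iff $S+R\in W$ and $-S-R\in W$. By directness this means $S\in C\cap(-C)$, $R\in W(S)$ and $-R\in W(-S)$, i.e.\ $R\in W(S)\cap -W(-S)$. This is exactly \eqref{eq:reconstructed_edge}, so $E(W)=\Ecal$.

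\textbf{Lie wedge condition.} Take $X_0\in\Ecal=E(W)$ and $X=S+R\in W$. By (D5), $e^{\ad(S_0)}(S)\in C$, so the fiber over it is defined. By (D6), $e^{\ad(X_0)}(S+R)\in e^{\ad(S_0)}S+W(e^{\ad(S_0)}S)\subseteq W$, and therefore $e^{\ad(X_0)}(W)\subseteq W$.

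\textbf{Uniqueness.} Any subset $W'\subseteq\g$ with $\pi_{\s}(W')=C$ and fiber map $S\mapsto W(S)$ satisfies $W'=\bigcup_{S\in C}(S+W(S))=W$ by the tautological identity \eqref{eq:fiber_union}.

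The only delicate point is the edge identification, since it must be derived from the data rather than presupposed. This is exactly where the directness of the Levi splitting is used, and I expect no real obstacle beyond it.
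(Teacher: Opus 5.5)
Your proposal is correct and follows the paper's proof essentially step for step: the cone axioms and closedness come from (D2)--(D4) via the isomorphism $(S,R)\mapsto S+R$, the components and the edge identification $E(W)=\Ecal$ come from the directness of $\g=\mf{s}\oplus\mf{r}$, the Lie wedge condition comes from (D5)--(D6), and uniqueness comes from the slice identity. The paper only spells out two points you leave implicit: that $S_1+S_2\in C$ by convexity of $C$, which is needed for $W(S_1+S_2)$ to be defined, and that $0\in W(0)$ follows from the nonemptiness in (D1) together with (D3) at $\lambda=0$.
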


\begin{proof}
\textbf{1. The set $W$ is a closed convex cone with the stated components.}

\textit{Scaling:} Let $X = S+R \in W$, so $S \in C$ and $R \in W(S)$, and let $\lambda > 0$. Then $\lambda X = \lambda S + \lambda R$ with $\lambda S \in C$ (a cone) and $\lambda R \in \lambda W(S) \subseteq W(\lambda S)$ by (D3), so $\lambda X \in W$. For $\lambda = 0$, conditions (D1) and (D3) with $S = 0$ give $\{0\} = 0 \cdot W(0) \subseteq W(0)$, hence $0 \in W$.

\textit{Addition:} Let $X_1 = S_1+R_1$ and $X_2 = S_2+R_2$ lie in $W$. Then $S_1+S_2 \in C$ by convexity of the cone $C$, and $R_1+R_2 \in W(S_1)+W(S_2) \subseteq W(S_1+S_2)$ by (D2), so $X_1+X_2 \in W$. A cone closed under addition is convex.

\textit{Closedness:} Under the linear isomorphism $\mf{s} \times \mf{r} \to \g$, $(S,R) \mapsto S+R$, the set $W$ is the image of the graph $\Gamma$. Linear isomorphisms of finite-dimensional spaces are homeomorphisms, so (D4) makes $W$ closed. The hypothesis is closedness of $\Gamma$ in all of $\mf{s} \times \mf{r}$; closedness in $C \times \mf{r}$ would be weaker and would not suffice when $C$ fails to be closed, and no closedness of $C$ is assumed.

\textit{Components:} If $X \in W \cap \mf{r}$, then writing $X = S+R$ with $S \in C$, $R \in W(S)$ and projecting gives $S = \pi_{\s}(X) = 0$, so $X = R \in W(0)$; conversely $0 + W(0) \subseteq W$. Thus $W \cap \mf{r} = W(0)$. Each fiber is nonempty by (D1), so $\pi_{\s}(W) = C$. Distinct base points contribute slices over distinct points of $\mf{s}$, so the slice of $W$ over each $S \in C$ is exactly $S + W(S)$, and the fiber map of $W$ is the given assignment.

\textbf{2. Identification of the edge.}
An element $X = S+R$ lies in $E(W) = W \cap (-W)$ precisely when $S \in C$, $R \in W(S)$, $-S \in C$, and $-R \in W(-S)$. Comparing with \eqref{eq:reconstructed_edge} gives $E(W) = \Ecal$. The invariance hypotheses (D5)--(D6), formulated over the data-defined set $\Ecal$, are therefore exactly invariance hypotheses over $E(W)$.

\textbf{3. The Lie wedge condition.}
Let $X_0 = S_0+R_0 \in E(W)$ and $X = S+R \in W$, and set $X' = e^{\ad(X_0)}(X)$. The intertwining relation $\pi_{\s} \circ e^{\ad(X_0)} = e^{\ad(S_0)} \circ \pi_{\s}$ (proof of Lemma~\ref{lem:quotient_cone}) shows that the semisimple component of $X'$ is $S' = e^{\ad(S_0)}(S)$, and $S' \in C$ by (D5). The FCC (D6) gives
\begin{equation}
e^{\ad(X_0)}(S+W(S)) \subseteq S' + W(S') \subseteq W.
\end{equation}
Since $X \in S + W(S)$, we conclude $X' \in W$. Hence $e^{\ad(X_0)}(W) \subseteq W$ for every $X_0 \in E(W)$, and $W$ satisfies the Lie wedge condition \eqref{eq:lie_wedge_condition}.

\textbf{4. Uniqueness.}
Any subset of $\g$ with quotient projection $C$ and slice $S + W(S)$ over each $S \in C$ equals $\bigcup_{S \in C}(S+W(S))$, which is $W$.
\end{proof}

\subsection{The Decomposition Theorem}
\label{sec:decomposition_theorem_statement}

The pieces now assemble into the main theorem of this work.

\begin{theorem}[Levi-Type Decomposition Theorem]
\label{thm:levi_type_main}
Let $(\g, W)$ be a Dynamical Lie Wedge Pair and let $\g = \mf{s} \ltimes \mf{r}$ be the Levi decomposition.
\begin{enumerate}
    \item \textbf{(Decomposition)} The components of Definition~\ref{def:fiber_cone} satisfy: $W_{\rr}$ is a Lie wedge in $\mf{r}$ with edge $E(W_{\rr}) = E(W) \cap \mf{r}$ (Lemma~\ref{lem:radical_wedge}); $C_{\text{quot}}$ is a convex cone in $\mf{s}$, in general not closed, invariant under the adjoint action of the closed subalgebra $E_{\text{quot}} = \pi_{\s}(E(W))$ (Lemma~\ref{lem:quotient_cone}); the fiber map has nonempty closed convex fibers, satisfies $W(0) = W_{\rr}$, is subadditive and positively homogeneous, and has closed graph in $\mf{s} \times \mf{r}$ (Lemma~\ref{lem:fiber_map_properties}); and the Fiber Compatibility Condition \eqref{eq:FCC} holds (Proposition~\ref{lem:fiber_compatibility_main}).
    \item \textbf{(Reconstruction)} Conversely, every datum $(C,\, S \mapsto W(S))$ satisfying (D1)--(D6) of Proposition~\ref{lem:sufficiency} defines the Lie wedge $W = \bigcup_{S \in C}(S+W(S))$, the unique subset of $\g$ with the given components.
    \item \textbf{(Correspondence)} The two constructions are mutually inverse: decomposing a Lie wedge and reconstructing from the resulting data returns the same wedge, and reconstructing from admissible data and decomposing returns the same data.
\end{enumerate}
\end{theorem}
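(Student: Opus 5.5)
The theorem is an assembly of results already proved, so the plan is to cite them part by part. The only genuinely new work is the correspondence in part (3).

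For part (1), I will take a Lie wedge $W$ in $\g=\mf{s}\ltimes\mf{r}$ and read off the components of Definition~\ref{def:fiber_cone}. Lemma~\ref{lem:radical_wedge} gives that $W_{\rr}$ is a Lie wedge with edge $E(W)\cap\mf{r}$. Lemma~\ref{lem:quotient_cone} gives the convexity of $C_{\text{quot}}$, the closed subalgebra $E_{\text{quot}}$, and the invariance of $C_{\text{quot}}$ under $E_{\text{quot}}$. Lemma~\ref{lem:fiber_map_properties} gives the properties of the fiber map. Proposition~\ref{lem:fiber_compatibility_main} gives the FCC. Part (2) is Proposition~\ref{lem:sufficiency} verbatim.

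For part (3), the key step is to show that the decomposed data of any Lie wedge satisfy (D1)--(D6), so that reconstruction is defined on them.
\begin{itemize}
\item (D1)--(D4) are items (1) and (3)--(5) of Lemma~\ref{lem:fiber_map_properties}, with $C=C_{\text{quot}}$ a convex cone containing $0$.
\item For (D5) and (D6), I first identify the data-defined set $\Ecal$ of \eqref{eq:reconstructed_edge} with $E(W)$. Using $W=\bigcup_S(S+W(S))$, i.e.\ \eqref{eq:fiber_union}, an element $S+R$ lies in $W\cap(-W)$ iff $\pm S\in C_{\text{quot}}$, $R\in W(S)$ and $-R\in W(-S)$. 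This is the computation of step 2 of Proposition~\ref{lem:sufficiency}, and it uses no hypothesis beyond the definition of the fibers.
\item (D5) then reduces to Lemma~\ref{lem:quotient_cone}, since $\pi_{\s}(X_0)\in E_{\text{quot}}$ for $X_0\in E(W)$.
\item (D6) reduces to Proposition~\ref{lem:fiber_compatibility_main}.
\end{itemize}
I expect this identification $\Ecal=E(W)$ to be the main point needing care. The invariance hypotheses in Proposition~\ref{lem:sufficiency} are quantified over $\Ecal$, not over an a priori edge, and the identification is what removes that apparent gap.

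With admissibility established, the two round trips are short.
\begin{itemize}
\item \emph{Decompose then reconstruct.} Reconstruction produces $\bigcup_{S\in C_{\text{quot}}}(S+W(S))$, which equals $W$ by the tautology \eqref{eq:fiber_union}. Alternatively, invoke the uniqueness clause of Proposition~\ref{lem:sufficiency}.
\item \emph{Reconstruct then decompose.} Start from admissible data $(C,S\mapsto W(S))$. Proposition~\ref{lem:sufficiency} states $\pi_{\s}(W)=C$, $W\cap\mf{r}=W(0)$, and that the fiber map of $W$ is the given assignment. Hence decomposition returns exactly $(C,S\mapsto W(S))$.
\end{itemize}
No closedness of $C$ is used anywhere, consistent with Remark~\ref{rem:quotient_closedness}. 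The Levi-section dependence enters only through the fixed identification $\g\cong\mf{s}\oplus\mf{r}$, which both constructions share.
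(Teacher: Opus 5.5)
Your proposal is correct and follows the paper's proof essentially step for step. Part 1 cites the three lemmas and the FCC proposition. Admissibility of the extracted data comes from (D1)--(D4) via Lemma~\ref{lem:fiber_map_properties}, the slice identification $\Ecal = E(W)$, and then (D5) and (D6) via Lemma~\ref{lem:quotient_cone} and Proposition~\ref{lem:fiber_compatibility_main}. The two round trips follow from \eqref{eq:fiber_union} and the component identities of Proposition~\ref{lem:sufficiency}, and your point that $\Ecal = E(W)$ is what licenses (D5)--(D6) is exactly the point the paper's proof makes.
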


\begin{proof}
Part 1 collects Lemmas~\ref{lem:radical_wedge}--\ref{lem:fiber_map_properties} and Proposition~\ref{lem:fiber_compatibility_main}. For Parts 2 and 3, first note that the data extracted from a Lie wedge $W$ by Definition~\ref{def:fiber_cone} satisfy the hypotheses of Proposition~\ref{lem:sufficiency} with $C = C_{\text{quot}}$: conditions (D1)--(D4) are Lemma~\ref{lem:fiber_map_properties}, the set $\Ecal$ of \eqref{eq:reconstructed_edge} coincides with $E(W)$ by the slice description of $W$, and then (D5) is Lemma~\ref{lem:quotient_cone} and (D6) is Proposition~\ref{lem:fiber_compatibility_main}. Part 2 is Proposition~\ref{lem:sufficiency}. The composite decompose-then-reconstruct returns $W$ by the identity \eqref{eq:fiber_union}; the composite reconstruct-then-decompose returns the data by the component identities $W \cap \mf{r} = W(0)$, $\pi_{\s}(W) = C$, and the fiber identification, all established in Proposition~\ref{lem:sufficiency}.
\end{proof}

\begin{remark}[Typing and dependence on the Levi section]
Parts 2 and 3 operate at the level of Lie wedges. The reconstructed object is a Lie wedge in $\g$, so the decomposition applies to an arbitrary Lie wedge in an algebra with a fixed splitting $\g = \mf{s} \ltimes \mf{r}$; for the DLWP of a physical system, that wedge is $W(G)$. The data $(C_{\text{quot}}, S \mapsto W(S))$ are defined relative to the chosen Levi section. The section is unique up to conjugation by $e^{\ad z}$ with $z$ in the nilradical (Malcev), and such a conjugation transports decomposition data to decomposition data. The decomposition is canonical relative to a Levi splitting.
\end{remark}

\begin{corollary}[Structure of the Edge]
\label{cor:structure_edge}
The edge $E(W)$ is characterized by the fiber decomposition over the edge of the quotient cone $E(C_{\text{quot}})$. An element $X=S+R$ belongs to $E(W)$ if and only if $S \in E(C_{\text{quot}})$ and the fibers satisfy the antisymmetry condition: $R \in W(S)$ and $-R \in W(-S)$. The edge is reconstructed as:
\begin{equation}
E(W) = \bigcup_{S \in E(C_{\text{quot}})} (S + (W(S) \cap -W(-S))).
\end{equation}
This highlights that the reversible dynamics are constrained by the fiber map even over the semisimple component.
\end{corollary}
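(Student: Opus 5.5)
The plan is to derive the corollary from the slice description of $W$ in \eqref{eq:fiber_union}, combined with the edge identification already carried out in Step~2 of the proof of Proposition~\ref{lem:sufficiency}. Everything will happen in the fixed linear identification $\g = \mf{s} \oplus \mf{r}$ given by the Levi section. An element $X = S+R$ lies in $W$ exactly when $S \in C_{\text{quot}}$ and $R \in W(S)$. This holds because $\pi_{\s}(X) = S$ and the slice of $W$ over $S$ is $S + W(S)$ by Definition~\ref{def:fiber_cone}.

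\textbf{Forward direction.} Let $X = S+R \in E(W)$. Membership $X \in W$ gives $S \in C_{\text{quot}}$ and $R \in W(S)$. Membership $-X = (-S) + (-R) \in W$ gives $-S \in C_{\text{quot}}$ and $-R \in W(-S)$. Together these put $S \in C_{\text{quot}} \cap (-C_{\text{quot}}) = E(C_{\text{quot}})$, and the antisymmetry condition holds. No closedness of $C_{\text{quot}}$ is needed, because $E(C_{\text{quot}})$ is defined purely set-theoretically as a lineality space, as noted in Remark~\ref{rem:quotient_closedness}.

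\textbf{Reverse direction.} Suppose $S \in E(C_{\text{quot}})$, $R \in W(S)$ and $-R \in W(-S)$. The definition of the fiber gives $S+R \in W$ and $-S-R \in W$, so $X \in W \cap (-W) = E(W)$. Taking the union over $S \in E(C_{\text{quot}})$ of the admissible sets $R \in W(S) \cap -W(-S)$ then yields the displayed formula. This is precisely \eqref{eq:reconstructed_edge} evaluated on the data extracted from $W$, consistent with Theorem~\ref{thm:levi_type_main}(3).

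The only point needing care is bookkeeping rather than substance. The condition $-R \in W(-S)$ must be read with $-S$ in the domain of the fiber map. That is guaranteed exactly by requiring $S$ to lie in the lineality space $E(C_{\text{quot}})$, rather than in the possibly smaller $E_{\text{quot}}$. I would also remark that $E_{\text{quot}} \subseteq E(C_{\text{quot}})$ follows immediately by projecting the forward direction, which ties the statement to the remark after Lemma~\ref{lem:quotient_cone}.
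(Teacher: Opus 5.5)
Your proof is correct and is essentially the paper's own argument: the corollary is the edge identification from Step~2 of the proof of Proposition~\ref{lem:sufficiency}. As in the proof of Theorem~\ref{thm:levi_type_main}, it is applied to the data extracted from $W$ through the slice description \eqref{eq:fiber_union}, and your forward and reverse directions reproduce exactly that unpacking of $W \cap (-W)$. Your side observations are also correct: no closedness of $C_{\text{quot}}$ is needed, and projecting the forward direction gives $E_{\text{quot}} \subseteq E(C_{\text{quot}})$.
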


\subsection{When the Fibered Form Reduces}
\label{sec:collapse_dichotomy}

A constant fiber map replaces the set-valued datum by a single cone and gives a genuine
dimension reduction. This subsection characterizes that collapse and its obstruction. In the
general case, Theorem~\ref{thm:levi_type_main} is an exact change of coordinates: under the
linear isomorphism $(S,R) \mapsto S+R$, the graph of the fiber map is the wedge itself, so the
data $(\Cquot, S \mapsto W(S))$ and the wedge $W$ determine one another without reducing
complexity.

\begin{proposition}[Collapse criterion]
\label{prop:fiber_collapse}
Let $(\g, W)$ be a DLWP with Levi decomposition $\g = \s \ltimes \rr$. If
$\Cquot \subseteq E(W)$, then the fiber map is constant,
\begin{equation}
W(S) \;=\; \Wrad \qquad \text{for every } S \in \Cquot,
\end{equation}
and consequently
\begin{equation}
\label{eq:collapsed_wedge}
W \;=\; \Cquot + \Wrad .
\end{equation}
In this case $W$ is determined by the pair $(\Cquot, \Wrad)$, two cones in complementary
factors. When both Levi factors are nonzero, each factor has strictly smaller ambient dimension
than $\g$, and the reconstruction of Proposition~\ref{lem:sufficiency} is a genuine reduction.
\end{proposition}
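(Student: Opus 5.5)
The plan is to read the hypothesis $\Cquot \subseteq E(W)$ literally, through the identification $\g = \s \oplus \rr$ by the fixed Levi section. Each $S \in \Cquot$, viewed as the element $S + 0 \in \g$, lies in the edge $E(W) = W \cap (-W)$. So both $S$ and $-S$ belong to $W$. Equivalently, $0 \in W(S)$ and $0 \in W(-S)$, and in particular $-S \in \Cquot$. Everything else should follow from the convex-cone bookkeeping of Lemma~\ref{lem:fiber_map_properties}, without using the Lie bracket.

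\textbf{Step 1} is the inclusion $\Wrad \subseteq W(S)$. Fix $S \in \Cquot$. By Lemma~\ref{lem:fiber_map_properties}(2), $W(0) = \Wrad$. Subadditivity, item (3), applied to the base points $0$ and $S$, gives
\[
\Wrad + W(S) \subseteq W(S).
\]
Since $0 \in W(S)$ by the observation above, we get $\Wrad = \Wrad + 0 \subseteq W(S)$.

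\textbf{Step 2} is the reverse inclusion. Let $R \in W(S)$, so $S + R \in W$. Since $-S \in W$ and $W$ is closed under addition, $R = (S+R) + (-S) \in W \cap \rr = \Wrad$. This step uses the hypothesis essentially: without $-S \in W$ one cannot cancel the base point, and the fiber over $S$ can genuinely differ from the radical wedge. Combining the two steps gives $W(S) = \Wrad$ for every $S \in \Cquot$.

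\textbf{Step 3} derives \eqref{eq:collapsed_wedge}. Substitute the constant fiber into the tautological identity \eqref{eq:fiber_union}:
\[
W = \bigcup_{S \in \Cquot}(S + \Wrad) = \Cquot + \Wrad .
\]
Thus $W$ is determined by the pair $(\Cquot, \Wrad)$. Here $\Cquot \subseteq \s$ is in fact a subspace contained in the edge: the hypothesis forces $\Cquot = -\Cquot$, and a convex cone with this symmetry is a linear subspace. The radical wedge $\Wrad \subseteq \rr$ is a Lie wedge by Lemma~\ref{lem:radical_wedge}.

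The dimension remark is immediate. If $\s \neq 0$ and $\rr \neq 0$, then $\dim \s < \dim \g$ and $\dim \rr < \dim \g$. So the datum consists of two objects in strictly smaller ambient spaces, fed into Proposition~\ref{lem:sufficiency} with a constant assignment.

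I expect no real obstacle. The only delicate point is the first observation: the hypothesis places the pure Levi element $S + 0$ in the edge, rather than merely some lift $S + R_0$. This is exactly what makes $0 \in W(\pm S)$ available, and I would state it explicitly at the start of the proof.
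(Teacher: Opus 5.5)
Your proof is correct and follows the paper's argument essentially line for line. The paper also reads $\Cquot \subseteq E(W)$ as placing $\pm S$ in $W$, cancels the base point via $R=(S+R)+(-S)$ to get $W(S)\subseteq \Wrad$, and obtains $\Wrad \subseteq W(S)$ from $S+\Wrad\subseteq W$, which is your subadditivity step with base points $0$ and $S$; your extra remark that the hypothesis forces $\Cquot=-\Cquot$, so that $\Cquot$ is a linear subspace, is correct but not needed for the statement.
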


\begin{proof}
Let $S \in \Cquot \subseteq E(W)$, so that $S \in W$ and $-S \in W$; in particular
$0 \in W(S) \cap W(-S)$. For $R \in W(S)$ the element $S+R$ lies in $W$, and adding
$-S \in W$ gives $R = (S+R) + (-S) \in W$ because $W$ is closed under addition. Since
$R \in \rr$, this places $R \in W \cap \rr = \Wrad$. Conversely $S \in W$ and
$\Wrad \subseteq W$ give $S + \Wrad \subseteq W$, so $\Wrad \subseteq W(S)$. Hence
$W(S) = \Wrad$, and \eqref{eq:collapsed_wedge} is \eqref{eq:fiber_union} with the constant
fiber substituted.
\end{proof}

\begin{proposition}[Obstruction to collapse]
\label{prop:fiber_obstruction}
Let $(\g,W)$ be a DLWP with $W \subseteq \WGKLS$. If the fiber map is constant, then
\begin{equation}
\Cquot \;\subseteq\; W \cap \s \;\subseteq\; E(\WGKLS),
\end{equation}
so every base point is a purely Hamiltonian generator. Equivalently, in contrapositive form:
\emph{if some $S \in \Cquot$ has $K(S) \neq 0$, the fiber map is not constant}.
\end{proposition}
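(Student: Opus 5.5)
The plan is to show that a constant fiber map forces every base point to lie in $W$ itself, and then to invoke the Hamiltonian rigidity of Corollary~\ref{cor:hamiltonian_rigidity}. Suppose the fiber map is constant, say $W(S) = F$ for every $S \in \Cquot$. By Lemma~\ref{lem:fiber_map_properties}(2) the fiber over the origin is $W(0) = \Wrad$, and $0 \in \Cquot$ because $\Cquot$ is a convex cone containing $\pi_{\s}(0)=0$. Hence $F = \Wrad$. Since $\Wrad = W \cap \rr$ is a cone, it contains $0$. Therefore $0 \in W(S)$ for every $S \in \Cquot$, which by the definition of the fiber means $S = S + 0 \in W$. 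As $S \in \s$ as well, this gives $\Cquot \subseteq W \cap \s$.

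For the second inclusion I will apply Corollary~\ref{cor:hamiltonian_rigidity}(1). Its hypotheses are met: $\g$ is a subalgebra of $\gLK$ (the DLWP algebra), and $W \subseteq \WGKLS \cap \g$ by assumption. Item (1) then gives $W \cap \s \subseteq E(\WGKLS)$. The underlying reason is that $\s = [\s,\s] \subseteq [\g,\g]$ by semisimplicity, so the dissipation functional $\varphi$ vanishes there, and $\varphi \ge 0$ on $\WGKLS$ with equality only at Hamiltonian generators.

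The contrapositive is immediate. Suppose some $S \in \Cquot$ has $K(S) \neq 0$. Then $S \notin E(\WGKLS)$, because Lemma~\ref{lem:gkls_edge} identifies that edge with the generators of vanishing Kossakowski matrix. So the displayed chain fails, and the fiber map cannot be constant.

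The only point requiring care is the first step. One might worry that constancy is meant only on some subset, or that $W(0)$ need not contain $0$. Both concerns are settled by $0 \in \Cquot$ and $0 \in \Wrad$, which follow from the cone properties already recorded. I expect no genuine obstacle beyond noting that the rigidity corollary is applied with $\s$ the chosen Levi factor, which is a subalgebra of $\g \subseteq \gLK$.
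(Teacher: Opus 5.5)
Your proposal is correct and follows the paper's own proof: constancy forces $W(S)=\Wrad\ni 0$, so $S=S+0\in W\cap\s$ for every $S\in\Cquot$, and Corollary~\ref{cor:hamiltonian_rigidity}(1) then supplies the second inclusion. Your explicit check that the constant value must be $W(0)=\Wrad$, using $0\in\Cquot$, fills in a step the paper leaves implicit.
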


\begin{proof}
Constancy gives $W(S) = \Wrad \ni 0$, so $S = S + 0 \in W$ for every $S \in \Cquot$, whence
$\Cquot \subseteq W \cap \s$. Corollary~\ref{cor:hamiltonian_rigidity}(1) supplies the second
inclusion.
\end{proof}

\begin{remark}[Neither condition is an equivalence]
\label{rem:collapse_not_iff}
The hypothesis of Proposition~\ref{prop:fiber_collapse} is sufficient but not necessary, and
it cannot be weakened to $\Cquot \subseteq W$. Two examples mark the gap.
\begin{enumerate}
  \item \emph{Constant fibers without $\Cquot \subseteq E(W)$.} In the reductive algebra
  $\g = \su(2) \oplus \R\,\Lcal_D$ of Section~\ref{sec:example_reductive}, take
  $G$ to consist of a \emph{pointed} cone $C$ of Hamiltonian directions together with
  $\Lcal_D$. Then $\cone(G)$ is pointed, so $E(W) = \{0\}$, the Lie wedge condition is vacuous
  and $W = C + \R_{\ge 0}\Lcal_D$ has constant fibers, while $\Cquot = C \not\subseteq \{0\}$.
  \item \emph{$\Cquot \subseteq W$ without constant fibers.} In the reductive algebra
  $\g=\su(2)\oplus\R z$, with $z$ central, choose a pointed Lie-generating cone
  $C=\cone\{S_1,S_2,S_3\}\subset\su(2)$ and set
  \[
    W=\cone\{(S_1,0),(S_2,0),(S_3,0),(S_1,2z)\}.
  \]
  The zero lifts give $\Cquot=C\subseteq W$. Pointedness of $C$ forces
  $W\cap\R z=\{0\}$ and makes $E(W)=\{0\}$, so this closed polyhedral cone is a Lie wedge
  with vacuous edge invariance. Nevertheless $W(S_1)$ contains the segment $[0,2]z$, whereas
  $W(0)=\Wrad=\{0\}$, and the fibers are not constant.
\end{enumerate}
What survives is the one-way collapse criterion of Proposition~\ref{prop:fiber_collapse} and
the separate nonconstancy test of Proposition~\ref{prop:fiber_obstruction}. The pointwise test
$K(S) \neq 0$ detects fiber motion; it does not by itself show that the FCC is nontrivial,
because a pointed wedge may have zero edge and hence a vacuous edge-invariance condition.
\end{remark}

\begin{example}[The obstruction is realized: $\WGKLS$ in $\gLK(2)$]
\label{ex:fiber_mobility_glk2}
Take $\g = \gLK(2) \cong \gl(3,\R) \ltimes \R^3$ in the affine model of
Theorem~\ref{thm:structure_glk}, with $\s = \sll(3,\R)$, $\rr = \R\,\Id_{\Htr} \ltimes \R^3$,
and $W = \WGKLS(2)$. Base points are traceless Bloch matrices, and by
Lemma~\ref{lem:gkls_edge} such a base point is Hamiltonian precisely when it is
antisymmetric. Since $\Cquot = \sll(3,\R) \not\subseteq \so(3)$,
Proposition~\ref{prop:fiber_obstruction} already tells us the fiber map varies. It does so
computably. Along the radial direction of the fiber, write $S_t = t\cdot\mathrm{diag}(1,1,-2)$
and ask which multiples of $\Id_{\Htr}$ complete $S_t$ to an admissible generator:
\begin{equation}
\label{eq:fiber_mobility_threshold}
\sup\{\lambda \in \R \mid S_t + \lambda\,\Id_{\Htr} \in \WGKLS\} \;=\; -4t
\qquad (t \ge 0),
\end{equation}
In the Hilbert--Schmidt orthonormal Pauli basis $F_j=\sigma_j/\sqrt{2}$, a real diagonal
Kossakowski matrix $K=\operatorname{diag}(k_x,k_y,k_z)$ has Bloch linear part
\[
Q=\operatorname{diag}(-k_y-k_z,-k_x-k_z,-k_x-k_y)
  =K-(\Tr K)\mathbb I_3.
\]
The target $Q=S_t+\lambda\mathbb I_3$ is symmetric and has $b=0$, so its canonical
Kossakowski matrix is real diagonal. Solving for it gives
\[
K=Q-\frac{1}{2}(\Tr Q)\mathbb I_3
  =\operatorname{diag}\left(
      t-\frac{\lambda}{2},
      t-\frac{\lambda}{2},
      -2t-\frac{\lambda}{2}
    \right).
\]
Thus $K\succeq0$ if and only if $\lambda\le 2t$ and $\lambda\le-4t$. Since $t\ge0$,
this reduces to $\lambda\le-4t$. At the endpoint,
$K=\operatorname{diag}(3t,3t,0)\succeq0$, so the supremum is attained.
By contrast, $0 \in W(0) = \Wrad$. The fibers therefore recede linearly as the base point moves
away from the Hamiltonian directions: an anisotropic base point must be paid for with a
minimum amount of isotropic damping. The mechanism is
Lemma~\ref{lem:dissipation_functional} in Bloch coordinates, $\Tr Q = -2\Tr K$, which lets a
traceless base point lift with zero radical component only when $K = 0$.
\end{example}

\begin{remark}[Coordinate slices and fiber variation]
\label{rem:decomposition_reading}
Corollary~\ref{cor:hamiltonian_rigidity} is what singles the Levi splitting out from the other
ideals one might split along. For an admissible lift $X=S+R\in W$, it gives
$\varphi(X)=\varphi(R)$, so all dissipation strength is carried by the radical coordinate. This
statement does not make either coordinate projection a physical generator on its own. The cone
$\Cquot=\pi_{\s}(W)$ records the Levi coordinates of admissible lifts; only the slice
$W\cap\mf{s}$ consists of standalone semisimple generators, and
Corollary~\ref{cor:hamiltonian_rigidity}(1) makes those generators Hamiltonian. Likewise,
$\Wrad=W\cap\mf{r}$ is the separately admissible radical slice, whereas the radical coordinate
$R$ of a general lift need not belong to $\Wrad$.
Proposition~\ref{prop:fiber_collapse} gives a sufficient condition for independence: the fibers
are constant when the quotient cone lies in the edge. Proposition~\ref{prop:fiber_obstruction}
gives a different, one-way conclusion: a non-Hamiltonian quotient coordinate forces the fiber
map to vary. Neither statement makes a nontrivial action $\rho$ necessary for fiber motion, and
fiber motion alone does not make the FCC nonvacuous.
\end{remark}

\subsection{The Significance and Interpretation of the Fiber Compatibility Condition}

For data satisfying D1--D5 of Theorem~\ref{thm:levi_type_main}, the Fiber Compatibility Condition D6 is the remaining necessary and sufficient condition for the reconstructed cone to have the Lie-wedge edge invariance. It constrains the coupling between the semisimple and solvable components of the dynamics.

\subsubsection{Geometric Interpretation: Adaptation of the Fibers}

The FCC dictates how the geometry of the closed convex fibers $W(S)$ must adapt to the adjoint action generated by the edge $E(W)$.

In a non-reductive algebra the adjoint action couples the components through $\rho$. The coupling can shear or rotate the space, and a fixed fiber assignment can fail the Lie wedge condition under that action. Remark~\ref{rem:collapse_not_iff} gives a pointed reductive $\su(2)\oplus\R z$ example showing that inclusion of the quotient cone does not force constant fibers.

When the edge action is nontrivial, the FCC makes the fibers transform compatibly as $S$ varies. Their coordinated change of shape and position counteracts the adjoint action, so global convexity and the Lie wedge property survive. Non-reductive coupling can cause this motion, but it is not necessary: the pointed reductive example of Remark~\ref{rem:collapse_not_iff} has varying fibers with $\rho=0$.


\begin{figure}[h]
    \centering
    \begin{tikzpicture}[scale=1.8,
        axis/.style={->, thick},
        fiber_S/.style={blue, thick},
        fiber_S_fill/.style={blue!20, opacity=0.6},
        fiber_Sp/.style={red, thick},
        fiber_Sp_fill/.style={red!20, opacity=0.6},
        image/.style={black!60, thick, densely dashed},
        image_fill/.style={black!15, opacity=0.8},
        action/.style={->, >=stealth, thick, black!70}
    ]
        \draw[axis] (0,0) -- (4.5,0) node[right] {$\mf{s}$ (Base)};
        \draw[axis] (0,0) -- (0,3.2) node[above] {$\mf{r}$ (Fiber)};

        \coordinate (S) at (1,0);
        \coordinate (Sp) at (3.5,0);
        \fill (S) circle (1.5pt) node[below] {$S$};
        \fill (Sp) circle (1.5pt) node[below] {$S'$};

        \draw[action, dashed] (S) to[bend left=15] node[above, midway, yshift=2pt] {$e^{\ad(S_0)}$} (Sp);

        \coordinate (S_L) at (0.6, 1.8);
        \coordinate (S_R) at (1.4, 1.8);
        
        \coordinate (Img_L) at (3.0, 2.2);
        \coordinate (Img_R) at (4.0, 1.4);

        \coordinate (Sp_L) at (2.9, 2.3);
        \coordinate (Sp_R) at (4.1, 1.3);

        \draw[fiber_S] (S) -- (S_L);
        \draw[fiber_S] (S) -- (S_R);
        \fill[fiber_S_fill] (S) -- (S_L) -- (S_R) -- cycle;
        \node[blue] at (1, 2.1) {$W(S)$};

        \draw[action] (S_L) to[bend right=8] (Img_L);
        \draw[action] (S_R) to[bend left=8] (Img_R);
        \node[action] at (2.25, 2.5) {Action $e^{\ad(X_0)}$};
        
        \draw[fiber_Sp] (Sp) -- (Sp_L);
        \draw[fiber_Sp] (Sp) -- (Sp_R);
        \fill[fiber_Sp_fill] (Sp) -- (Sp_L) -- (Sp_R) -- cycle;
        \node[red] at (4.2, 2.1) {$W(S')$};

        \draw[image] (Sp) -- (Img_L);
        \draw[image] (Sp) -- (Img_R);
        \fill[image_fill] (Sp) -- (Img_L) -- (Img_R) -- cycle;
        \node[black!60] at (3.5, 0.8) {Image};


    \end{tikzpicture}
    \caption{A schematic of the Fiber Compatibility Condition and the Shear-and-Compensate mechanism. The adjoint action $e^{\ad(X_0)}$ moves the base point $S$ to $S'$ and shears the fiber $W(S)$ (blue) into the dashed image. The FCC compensates by requiring the fiber $W(S')$ (red) to contain that sheared image, so the wedge stays admissible under the action.}
    \label{fig:fcc_visualization}
\end{figure}

\subsubsection{Infinitesimal FCC and Subtangency}

The FCC (Equation \eqref{eq:FCC}) is expressed in terms of the global (exponentiated) adjoint action. It is instructive to analyze its infinitesimal counterpart, which provides a more direct link to the algebraic coupling $\rho$ and the Lie bracket structure. This analysis relies on the subtangency condition (Equation \eqref{eq:subtangency}), which states that $W$ is a Lie wedge if and only if $[X, Y] \in T_Y(W)$ for all $X \in E(W), Y \in W$.

Let $X_0 = S_0+R_0 \in E(W)$ and $Y = S+R \in W$. We analyze the constraint imposed by the subtangency condition on the commutator $[X_0, Y]$.

The commutator is given by the semidirect product structure (Equation \eqref{eq:semidirect_bracket}):
\begin{equation}
\label{eq:commutator_infinitesimal_fcc}
[X_0, Y] = \underbrace{([S_0, S])}_{\in \mf{s}} + \underbrace{([R_0, R] + \rho(S_0)(R) - \rho(S)(R_0))}_{\in \mf{r}}.
\end{equation}

We must determine the tangent cone $T_Y(W)$ at $Y=S+R$. Since $W$ is reconstructed via the fiber map, the tangent cone structure is determined by the local geometry of the fibers. The tangent cone $T_Y(W)$ involves both the tangent cone to the base $T_S(C_{\text{quot}})$ and the tangent cone to the fiber $T_R(W(S))$, combined in a way that depends on the local behavior of the fiber map $S \mapsto W(S)$; Proposition~\ref{prop:infinitesimal_fcc} below makes this precise. Section~\ref{sec:connection_neeb_fcc_glk} compares this edge-invariance condition with Neeb's full-algebra invariant-cone setting.

The subtangency condition requires $[X_0,Y]\in T_Y(W)$, so the components of the commutator must be compatible with the fiber structure at $Y$. The radical contribution in Eq.~\eqref{eq:commutator_infinitesimal_fcc} consists of the internal bracket $[R_0,R]$ and the semidirect-action terms $\rho(S_0)(R)-\rho(S)(R_0)$. Either contribution can generate a direction incompatible with the convex geometry of $W$. The tangent cone $T_Y(W)$ encodes the local geometry of the fiber map $S \mapsto W(S)$. The infinitesimal FCC requires the fibers to accommodate all such algebraic flows. If the radical contribution points outside the local tangent cone defined by the fiber structure, the subtangency condition, and hence the global FCC, fails.

The following proposition makes this infinitesimal requirement exact, recasting it through the contingent derivative of the fiber map.

\begin{proposition}[Infinitesimal FCC]\label{prop:infinitesimal_fcc}
Let $W$ be a closed Lie wedge with fiber data $(C_{\text{quot}},\, S \mapsto W(S))$, and let $\Gamma = \{(S, R) \mid S \in C_{\text{quot}},\, R \in W(S)\} \subseteq \mf{s} \times \mf{r}$ be the graph of the fiber map. For $(S, R) \in \Gamma$, define the contingent derivative of the fiber map, in the sense of set-valued analysis \cite[Ch.~5]{aubin1990setvalued}, through the tangent cone of the graph:
\begin{equation}
\label{eq:contingent_derivative}
DW(S|R)(V_S) \;=\; \{\, V_R \in \mf{r} \;\mid\; (V_S, V_R) \in T_{(S,R)}(\Gamma) \,\}.
\end{equation}
Then, by Proposition~\ref{prop:subtangency}, the invariance of $W$ under the edge action (the global FCC) is equivalent to the subtangency condition $[X_0, Y] \in T_Y(W)$ for all $X_0 \in E(W)$, $Y \in W$, and at $Y = S+R$, with $X_0 = S_0 + R_0$, subtangency takes the exact componentwise form
\begin{equation}
\label{eq:infinitesimal_fcc_display}
[R_{0}, R] + \rho(S_{0})R - \rho(S)R_{0} \;\in\; DW(S|R)([S_{0}, S]).
\end{equation}
The membership \eqref{eq:infinitesimal_fcc_display} contains the base constraint as well: it forces $[S_0, S] \in T_S(C_{\text{quot}})$.
\end{proposition}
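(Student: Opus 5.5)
The proof reduces to three facts: Proposition~\ref{prop:subtangency}, the identification of $W$ with the graph $\Gamma$, and the bracket formula \eqref{eq:semidirect_bracket}.

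\emph{First step.} Since $W$ is assumed to be a closed Lie wedge, Proposition~\ref{prop:subtangency} applies verbatim. Invariance $e^{\ad(X_0)}W\subseteq W$ for all $X_0\in E(W)$ is therefore equivalent to $[X_0,Y]\in T_Y(W)$ for all $X_0\in E(W)$ and $Y\in W$. By Proposition~\ref{lem:fiber_compatibility_main} and step~3 of the proof of Proposition~\ref{lem:sufficiency}, this edge invariance is the same statement as the global FCC \eqref{eq:FCC}. So the first assertion needs no new work beyond citing these results.

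\emph{Second step: transport the tangent cone.} Let $\Psi:\mf{s}\times\mf{r}\to\g$ be the linear isomorphism $\Psi(S,R)=S+R$. Lemma~\ref{lem:fiber_map_properties}(5) shows that $\Psi(\Gamma)=W$. A linear homeomorphism carries convergent sequences and positive rescalings to the same, so directly from Definition~\ref{def:tangent_cone_wedge} we get
\[
T_{S+R}(W)=\Psi\bigl(T_{(S,R)}(\Gamma)\bigr).
\]
Hence a vector $V_S+V_R$ lies in $T_Y(W)$ if and only if $(V_S,V_R)\in T_{(S,R)}(\Gamma)$. By the definition \eqref{eq:contingent_derivative}, this is exactly the condition $V_R\in DW(S|R)(V_S)$.

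\emph{Third step: read off the components.} Decompose $[X_0,Y]$ using \eqref{eq:commutator_infinitesimal_fcc}. Its $\mf{s}$-part is $V_S=[S_0,S]$ and its $\mf{r}$-part is $V_R=[R_0,R]+\rho(S_0)R-\rho(S)R_0$. Inserting these into the second step turns $[X_0,Y]\in T_Y(W)$ into exactly \eqref{eq:infinitesimal_fcc_display}. The equivalence holds pointwise, so it holds for all pairs $(X_0,Y)$ simultaneously.

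\emph{Base constraint.} Suppose $DW(S|R)([S_0,S])$ is nonempty. Then there are $(S_k,R_k)\in\Gamma$ converging to $(S,R)$ and $\lambda_k>0$ with $\lambda_k(S_k-S,\,R_k-R)\to([S_0,S],V_R)$. Projecting to the first factor gives $\lambda_k(S_k-S)\to[S_0,S]$ with $S_k\in C_{\text{quot}}$ and $S_k\to S$. Thus $[S_0,S]$ lies in the contingent cone of $C_{\text{quot}}$ at $S$. Since $C_{\text{quot}}$ is convex, this is $\overline{\cone(C_{\text{quot}}-S)}=T_S(C_{\text{quot}})$. I will read the phrase ``forces'' in the proposition in the sense that membership in \eqref{eq:infinitesimal_fcc_display} requires $DW(S|R)([S_0,S])$ to be nonempty, which is what this argument uses.

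\emph{Main obstacle.} The only delicate point is that $C_{\text{quot}}$ may fail to be closed (Remark~\ref{rem:quotient_closedness}), so one has to check that the tangent-cone formulas do not require closedness. The formula $T_Y(W)=\overline{\cone(W-Y)}$ uses only that $W$ is a closed convex cone. The base statement uses the Bouligand cone of a possibly non-closed convex set, and that cone still equals $\overline{\cone(C_{\text{quot}}-S)}$ because $S$ itself lies in $C_{\text{quot}}$. With these checks, the proof needs no further estimates.
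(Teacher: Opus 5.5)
Your proposal is correct and follows the paper's proof essentially step for step. You use Proposition~\ref{prop:subtangency} for the equivalence, transport the tangent cone through the linear isomorphism $(S,R)\mapsto S+R$ carrying $\Gamma$ onto $W$, read $[X_0,Y]$ componentwise via the semidirect bracket, and project the approximating sequences in $\Gamma$ to obtain $[S_0,S]\in T_S(C_{\text{quot}})$. Your extra check, that the contingent cone of the possibly non-closed convex set $C_{\text{quot}}$ at a point $S\in C_{\text{quot}}$ still equals $\overline{\cone(C_{\text{quot}}-S)}$, is a correct detail the paper leaves implicit.
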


\begin{proof}
By Proposition~\ref{prop:subtangency}, invariance of the closed wedge $W$ under $e^{\ad(X_0)}$ for all $X_0 \in E(W)$ is equivalent to the subtangency condition \eqref{eq:subtangency}. The map $(S, R) \mapsto S + R$ is a linear isomorphism $\mf{s} \times \mf{r} \to \g$ carrying $\Gamma$ bijectively onto $W$ by the slice description \eqref{eq:fiber_union}, and a linear isomorphism carries tangent cones to tangent cones, so $(V_S, V_R) \in T_{(S,R)}(\Gamma)$ if and only if $V_S + V_R \in T_{S+R}(W)$. Decomposing $[X_0, Y]$ into components by \eqref{eq:commutator_infinitesimal_fcc} and applying this identification yields \eqref{eq:infinitesimal_fcc_display}. For the last assertion, if $(V_S, V_R) \in T_{(S,R)}(\Gamma)$, pick sequences $(S_n, R_n) \in \Gamma$ and $h_n \downarrow 0$ with $(S_n - S)/h_n \to V_S$; since $S_n \in C_{\text{quot}}$, this exhibits $V_S \in T_S(C_{\text{quot}})$.

In a fixed direction $u$ one has the inner estimate $\limsup_{h \downarrow 0}\,\bigl(W(S + h u) - R\bigr)/h \subseteq DW(S|R)(u)$. The graph form \eqref{eq:contingent_derivative} is the correct general notion, since it also captures tangent directions attained only along perturbed base directions, as happens at curved parts of the boundary of $C_{\text{quot}}$ where $S + hu$ may leave $C_{\text{quot}}$ for every $h > 0$ while subtangency still holds. The graph contingent cone is no larger than what subtangency controls, because $T_{(S,R)}(\Gamma)$ is built from sequences inside $\Gamma$ and the isomorphism $(S,R) \mapsto S+R$ identifies it with $T_{S+R}(W)$, the exact object the subtangency condition constrains.
\end{proof}

The infinitesimal perspective shows that the FCC constrains compatibility between the full Lie bracket and the convex geometry of the wedge; its radical contribution includes both $[R_0,R]$ and the semidirect coupling terms.

\subsubsection{Physical Interpretation: Positivity Constraints on Coupled Dynamics}

In an open quantum system the wedge $W$ consists of admissible GKLS generators. Relative to the selected Levi splitting, $C_{\text{quot}}$ is the cone of Levi coordinates of admissible lifts, while $W_{\rr}$ is the radical slice. For a general lift $X=S+R\in W$, neither $S$ nor $R$ need belong to $W$ separately; the standalone slices are $W\cap\mf{s}$ and $W\cap\mf{r}$, with the former Hamiltonian by Corollary~\ref{cor:hamiltonian_rigidity}(1). The geometric adaptation described above may involve the internal radical bracket, the coupling $\rho$, or both. In the action-coupled subclass it takes the form of a ``Shear-and-Compensate'' mechanism, in which a reversible edge action shears the radical coordinate through $\rho$. If the fibers do not accommodate that action, an element of the declared wedge is carried outside it. The FCC requires the fiber geometry to compensate for the available reversible action; fixed-basis dephasing under basis-rotating control gives the canonical violation in Section~\ref{sec:counterexamples}.

The adaptation of the closed convex fibers $W(S)$ translates directly to constraints on the parameters of the physical generators, which we now express through the Kossakowski matrix.

\paragraph{The FCC and the Kossakowski Matrix.}
Recall that the wedge $W(G)$ is composed of valid GKLS generators, $W(G) \subseteq \WGKLS$. Each element $X=S+R \in W(G)$ is parameterized by a Hamiltonian $H_X$ and a Kossakowski matrix $K_X$. The fundamental physical constraint is $K_X \ge 0$.

The Kossakowski data belong to the full lift $X$, not to the coordinates $S$ and $R$ separately. As the base coordinate $S$ varies, the allowed full lifts $S+R$ with $R\in W(S)$ must remain physically admissible. The FCC separately dictates how these fibers move under the dynamics generated by the edge $E(W)$.

By Lemma \ref{lem:gkls_edge}, the edge of any physical Lie wedge $E(W(G))$ consists exclusively of Hamiltonian generators: reversibility ($\Lcal, -\Lcal \in \WGKLS$) forces $K=0$, so reversible dissipation is impossible within the GKLS framework.

Therefore, for any $X_0 \in E(W(G))$, the action $e^{\ad(X_0)}$ corresponds precisely to unitary conjugation. Let $X_0 \in E(W)$. The algebraic action $e^{\ad(X_0)}$ transforms the generator $X$ to $X'$.
\begin{equation}
X \mapsto X' = e^{\ad(X_0)}(X) \implies (H_X, K_X) \mapsto (H_{X'}, K_{X'}).
\end{equation}

As noted previously, unitary conjugation transforms the Kossakowski matrix by $\ast$-congruence with a unitary matrix, $K_{X'} = U K_X U^\dagger$. For unitary $U$ this is a similarity, since $U^\dagger = U^{-1}$, so the spectrum is preserved, and $K_X \ge 0$ implies $K_{X'} \ge 0$. The positivity of individual generators is automatically preserved under the edge action.

Why then is the FCC necessary?

The edge action preserves each generator's positivity but need not preserve the collective geometry that the fiber map $S \mapsto W(S)$ defines (Figure~\ref{fig:fcc_visualization}). Were the FCC violated, the edge would carry elements of $W(G)$ outside the wedge, the failure that Section~\ref{sec:counterexamples} exhibits.

In GKLS applications the condition $K\geq0$ restricts the available fibers. Separately, the FCC links the coupling $\rho$ to their motion under the edge action. Section~\ref{sec:neebs_theory} compares that edge-invariance statement with Neeb's invariant-cone condition and keeps spectral stability separate.

\subsection{Illustrative Examples of the Decomposition}

To illustrate the application of the Levi-Type Decomposition Theorem, we analyze fundamental examples of open quantum systems. These examples demonstrate how the framework distinguishes solvable and mixed reductive types.

\subsubsection{Example 1: Solvable-Type Dissipation (Cascaded Dynamics)}
\label{sec:example_integrable}

We analyze a cascaded qutrit under a control specification with two independently actuable decay channels. Its system algebra is solvable, while activating both channels simultaneously at positive rates generates hierarchical decay. The calculation establishes these two properties separately.

\paragraph{System Specification: Cascaded Decay in a Qutrit.}

Consider a 3 level system (qutrit) in a ladder configuration with states $|1\rangle, |2\rangle, |3\rangle$. The dynamics describe a cascade decay process: $|3\rangle \to |2\rangle \to |1\rangle$.
We assume purely dissipative dynamics ($H=0$).
Lindblad operators: $L_1 = \sqrt{\gamma_1} |1\rangle\langle 2|$ (decay from 2 to 1), $L_2 = \sqrt{\gamma_2} |2\rangle\langle 3|$ (decay from 3 to 2), with $\gamma_1,\gamma_2>0$.
The generator set is $G=\{\Lcal_1, \Lcal_2\}$. The dissipative rays $\R_{\geq0}\Lcal_1$ and $\R_{\geq0}\Lcal_2$ can be switched and scaled independently. We choose the representatives $\gamma_1=\gamma_2=1$ for the algebraic calculation; positive rescaling changes neither the cone nor the Lie closure. An inseparable autonomous drift $\gamma_1\Lcal_1+\gamma_2\Lcal_2$ is a singleton specification and does not carry the three-dimensional algebra computed below.

\paragraph{Analysis of the DLWP.}

\textbf{1. System Lie Algebra $\g(G)$:}
We compute the Lie algebra generated by $\Lcal_1$ and $\Lcal_2$. This involves calculating the commutators of the superoperators in the Liouville space representation.

The explicit calculation, detailed in Appendix \ref{app:cascaded_qutrit}, shows that $\g(G)$ is three-dimensional and solvable but not nilpotent: $\g(G) = \operatorname{span}\{\Lcal_1, \Lcal_2, \Lcal_3\}$ with $\Lcal_3 = [\Lcal_1, \Lcal_2]$, and the residual relations $[\Lcal_1, \Lcal_3] = -\Lcal_3$, $[\Lcal_2, \Lcal_3] = \Lcal_3$ make the lower central series stabilize at $\operatorname{span}\{\Lcal_3\}$. The derived series, in contrast, terminates at the second step. The algebra is triangular in a suitable Liouville-space basis, reflecting the unidirectional flow of population, but the damping terms contribute diagonal entries, and these diagonal entries are what obstruct nilpotency.

\textbf{2. Levi Decomposition:}
Since $\g(G)$ is solvable, the Levi decomposition is trivial: $\mf{s}(G)=\{0\}$, $\mf{r}(G)=\g(G)$. The entire algebra is the radical.

\textbf{3. Dynamical Lie Wedge $W(G)$:}
We analyze the structure of the wedge. Since $H=0$ and the generators are purely dissipative, the wedge is pointed. The edge $E(W(G))=\{0\}$.
Since the edge is trivial, the Lie wedge condition (invariance under the edge) is trivially satisfied for any convex cone. Therefore, the Dynamical Lie Wedge is simply the convex cone generated by $G$:
$W(G) = \cone(G) = \{\alpha \Lcal_1 + \beta \Lcal_2 \mid \alpha, \beta \ge 0 \}$. This forms a simple polyhedral cone (a sector in the plane spanned by $\Lcal_1, \Lcal_2$).

\textbf{4. Levi-Type Decomposition of $W(G)$:}
The decomposition is straightforward:
$C_{\text{quot}} = \{0\}$.
$W_{\rr} = W(G)$.
The Fiber Compatibility Condition is trivially satisfied as $\mf{s}=\{0\}$ and $E(W)=\{0\}$.

\textbf{Physical Interpretation:}
For the simultaneous benchmark generator $\gamma_1\Lcal_1+\gamma_2\Lcal_2$ with $\gamma_1,\gamma_2>0$, the jump operators transfer population sequentially along $|3\rangle\to|2\rangle\to|1\rangle$ and select $|1\rangle\langle1|$ as the asymptotic steady state. The lower-triangular representation mirrors that hierarchy. These are properties of this selected benchmark; solvability alone would not imply population transfer, a particular steady state, or the absence of mixing.

\subsubsection{Example 2: Mixed Reductive Dynamics}
\label{sec:example_reductive}

We consider a mixed reductive system with full unitary control and isotropic depolarizing noise. The example separates the semisimple Hamiltonian control algebra from a central dissipative ray.

\paragraph{System Specification: Qubit with Full Control and Depolarizing Noise.}

Consider a qubit system ($d=2$) subject to full unitary control and isotropic depolarizing noise. This is a standard model in quantum information processing.

The generator set is $G = \{\pm\Lcal_{H_x}, \pm\Lcal_{H_y}, \pm\Lcal_{H_z}, \Lcal_D\}$. The Hamiltonian directions are bidirectional controls. The depolarizing generator represents an independently switchable one-sided ray; changing $\gamma>0$ only rescales that ray, and we use $\gamma=1$.
Hamiltonians: $H_i = \sigma_i$ (Pauli matrices). $\Lcal_{H_i}(\rho) = -i[\sigma_i, \rho]$.
Depolarizing channel generator: $\Lcal_D(\rho) = \gamma \sum_{i=x,y,z} (\sigma_i \rho \sigma_i - \rho)$.

\paragraph{Analysis of the DLWP.}

\textbf{1. System Lie Algebra $\g(G)$:}
We compute the Lie closure of $G$.
The Hamiltonian generators $\{\Lcal_{H_i}\}$ generate the Lie algebra isomorphic to $\mf{su}(2)$. The commutation relations are $[\Lcal_{H_i}, \Lcal_{H_j}] = 2\epsilon_{ijk} \Lcal_{H_k}$.

We analyze the commutators involving the depolarizing generator $\Lcal_D$. The depolarizing channel is unitarily covariant: $U \Lcal_D(\rho) U^\dagger = \Lcal_D(U \rho U^\dagger)$ for any unitary $U$. In the superoperator representation, this means the adjoint action of the Hamiltonian generators on $\Lcal_D$ is trivial:
\begin{equation}
\ad(\Lcal_{H_i})(\Lcal_D) = [\Lcal_{H_i}, \Lcal_D] = 0.
\end{equation}
This property places $\Lcal_D$ in the center of $\g(G)$: the noise is independent of the control operations.

The System Lie Algebra is the direct sum:
$\g(G) = \text{Lie}(\{\Lcal_{H_i}\}) \oplus \text{span}(\Lcal_D) \cong \mf{su}(2) \oplus \R$.

\textbf{2. Levi Decomposition:}
The algebra $\g(G)$ is reductive.
Semisimple Levi factor: $\mf{s} = \mf{su}(2)$.
Solvable radical: $\mf{r} = \R$ (the center, spanned by $\Lcal_D$).
The coupling $\rho$ is trivial ($[\mf{s}, \mf{r}]=0$).

\textbf{3. Dynamical Lie Wedge $W(G)$:}
The signed Hamiltonian controls put all of $\mf{su}(2)$ in the edge. The depolarizing generator is central, so the following cone is invariant under that edge and is a closed Lie wedge containing $G$:
$W(G) = \{ \Lcal_H + \alpha \Lcal_D \mid \Lcal_H \in \mf{su}(2), \alpha \ge 0 \}$.
Minimality gives inclusion in this cone, while the signed controls and the ray $\R_{\geq0}\Lcal_D$ give the reverse inclusion. Hence the equality is exact and $E(W(G))=\mf{su}(2)$.
Geometrically, this is a cone whose base is the space $\mf{su}(2)$ and whose axis is the ray generated by $\Lcal_D$.

\textbf{4. Levi-Type Decomposition of $W(G)$:}
We apply Theorem \ref{thm:levi_type_main}.
\begin{itemize}
    \item $W_{\rr} = W(G) \cap \mf{r} = \{ \alpha \Lcal_D \mid \alpha \ge 0 \}$. A ray in $\mf{r}$.
    \item $C_{\text{quot}} = \pi_{\s}(W(G)) = \mf{su}(2)$. The quotient cone is the entire semisimple algebra.
    \item Adjoint Coupling Constraints (FCC): We analyze the fibers. $W(S) = W_{\rr}$ for all $S \in C_{\text{quot}}$. The inclusion $C_{\text{quot}} \subseteq E(W(G))$ implies this constancy by Proposition~\ref{prop:fiber_collapse}; centrality explains the trivial action on the radical.
    The Fiber Compatibility Condition (Eq. \eqref{eq:FCC}) requires:
    $e^{\ad(X_0)}(S+W_{\rr}) \subseteq e^{\ad(S_0)}(S) + W_{\rr}$.
    Since the coupling is trivial ($\rho=0$), the action of $X_0=S_0$ (as $R_0=0$ for the edge elements in this decomposition) on $\mf{r}$ is trivial. $e^{\ad(S_0)}(W_{\rr}) = W_{\rr}$.
    The FCC simplifies to $e^{\ad(S_0)}(S) + W_{\rr} \subseteq e^{\ad(S_0)}(S) + W_{\rr}$, which is trivially satisfied. The absence of coupling means the FCC imposes no constraints.
\end{itemize}

\textbf{Physical Interpretation:}
The semisimple factor is the Hamiltonian control algebra. Convergence to the maximally mixed state and the dissipative gap come from the selected depolarizing generator $\Lcal_D$ with $\gamma>0$. As $\gamma\downarrow0$ through positive values, the factor $\mf{su}(2)$ remains unchanged while the contraction rate and gap tend to zero. Since $\Lcal_D$ is central, the controls rotate independently of the isotropic contraction.

\section{Concrete Illustration: Mixed Dynamics in a Non-Reductive System}
\label{sec:example_non_reductive}

With the decomposition theorem in hand, we put it to work on a concrete system whose System Lie Algebra is non-reductive, one in which control operations modulate the noise structure. The example exhibits the Fiber Compatibility Condition as an edge-invariance constraint. Whether a separately diagnosed expanding or shearing action produces dynamical or spectral instability is a different question; Section~\ref{sec:neebs_theory} makes only the stated invariant-cone comparison under its additional hypotheses.

\paragraph*{System Specification: Coupled Qubit System with Modulated Dissipation}

Consider a system of two coupled qubits, $\Hcal = \Hcal_A \otimes \Hcal_B$. We design the dynamics such that the evolution of qubit $A$ (the control qubit) modulates the dissipation experienced by qubit $B$ (the target qubit).

\textbf{Control on Qubit A (Semisimple Subalgebra):}
We assume full unitary control on qubit A. The control Hamiltonians are $H_{A,i} = \frac{1}{2}\sigma_{A,i} \otimes I_B$ for $i=x,y,z$ (using the standard normalization for $\mf{su}(2)$ generators). The corresponding generators are $\Lcal_{H_{A,i}}$.

\textbf{Modulated Dissipation on Qubit B (Dissipative Inputs):}
The dissipation on B depends on the state of A. We consider amplitude damping on B modulated by the projectors onto the energy basis of A ($P_{A,0} = |0\rangle\langle 0|_A, P_{A,1} = |1\rangle\langle 1|_A$). We use $|0\rangle,|1\rangle$ as the eigenstates of $\sigma_z$ and set $\sigma_B^-=|1\rangle_B\langle0|$.

Put $J_0=P_{A,0}\otimes\sigma_B^-$ and $J_1=P_{A,1}\otimes\sigma_B^-$. The physical jumps are $L_1=\sqrt{\gamma_0}J_0$ and $L_2=\sqrt{\gamma_1}J_1$, with $\gamma_0,\gamma_1>0$, and we write $\Dcal_i=\Dcal_{J_i}$, so that $\Lcal_i=\Dcal_{L_i}=\gamma_i\Dcal_i$.
The base set of generators is 
\[
G = \{\pm\Lcal_{H_{A,x}}, \pm\Lcal_{H_{A,y}}, \pm\Lcal_{H_{A,z}}, \Lcal_1, \Lcal_2\}.
\]
The two dissipative rays are independently switchable. Positive rate changes merely rescale those rays and therefore leave their cone, Lie closure, and minimal closed Lie wedge unchanged; no inequality between the rates is required. The exact certificate below uses the representative $(\gamma_0,\gamma_1)=(1,2)$. A fixed simultaneous drift $\gamma_0\Dcal_0+\gamma_1\Dcal_1$ would instead be a one-ray specification.

\subsection{Analysis of the DLWP and the Non-Reductive Structure}

\textbf{1. System Lie Algebra $\g(G)$:}
We compute the Lie algebra generated by $G$. What matters is the interaction between the control Hamiltonians (acting on A) and the dissipative generators (coupled A B).

\textbf{Subalgebras:}
The control generators $\{\Lcal_{H_{A,i}}\}$ form the algebra $\mf{g}_A \cong \mf{su}(2)_A$.

\textbf{Commutation Relations, Closed-Wedge Construction, and the Coupling $\rho$:}
We determine the structure of the System Lie Algebra $\g(G)$ by computing the commutators of the base generators. The algebraic closure must be distinguished from the physical generators: while the commutators $[\Lcal, \Lcal']$ reveal the directions spanning the Lie algebra, the resulting superoperators often violate the condition of complete positivity (exhibiting indefinite Kossakowski matrices). The physical Dynamical Lie Wedge $W(G)$ is obtained by the iterative closed conical hull of the adjoint orbits generated by its edge.

Covariance gives the control action directly. For $H_{A,x}=\frac12\sigma_{A,x}\otimes I_B$, $U_t=e^{-itH_{A,x}}$, and any jump $L$,
\begin{equation}
e^{t\ad(\Lcal_{H_{A,x}})}\Dcal_L=\Dcal_{U_tLU_t^\dagger}.
\end{equation}
For $L=J_0$, put $M=-i[H_{A,x},J_0]=-\frac12\sigma_{A,y}\otimes\sigma_B^-$. Differentiating the covariance identity gives
\begin{equation}
[\Lcal_{H_{A,x}},\Dcal_{J_0}]=\mathcal C(J_0,M),\qquad
\mathcal C(L,M)(\rho)=M\rho L^\dagger+L\rho M^\dagger-\frac12\{M^\dagger L+L^\dagger M,\rho\}.
\end{equation}
This is a bilinear cross superoperator, not the quadratic dissipator $\Dcal_M$. At $t=\pi/2$ the finite orbit jump is
$P_y^-\otimes\sigma_B^-$, where $P_y^-=(I_A-\sigma_{A,y})/2$; it is not $\sigma_{A,y}\otimes\sigma_B^-$ alone.

The wedge $W(G)$ must contain the closed conic hull of these physical adjoint orbits. The nonzero cross direction records the infinitesimal control--noise coupling, while the finite orbit remains inside the GKLS cone.

By iteratively computing the commutators, we generate the full System Lie Algebra $\g(G)$. The algebra involves the control operators acting on A and the coupled operators (both dissipative and Hamiltonian) acting on the combined system.

\textbf{2. Levi Decomposition:}
We identify the structure of $\g(G)$.
The semisimple Levi factor contains the control algebra $\mf g_A\cong\su(2)_A$ and has dimension nine; the solvable radical has dimension thirteen. The following exact certificate proves both dimensions.

\paragraph{Exact certificate for the Levi data.}
Set $X=\Lcal_{H_{A,x}}$, $Y=\Lcal_{H_{A,y}}$, $Z=\Lcal_{H_{A,z}}$, $D=\Dcal_0$, and $E=2\Dcal_1$. The following recursive words form an ordered basis:
\begin{equation}
\begin{aligned}
b_1&=X,&b_2&=Y,&b_3&=Z,&b_4&=D,&b_5&=E,\\
b_6&=[X,D],&b_7&=[X,E],&b_8&=[X,b_6],&b_9&=[Y,D],&b_{10}&=[Y,E],\\
b_{11}&=[Y,b_6],&b_{12}&=[Y,b_9],&b_{13}&=[D,b_6],&b_{14}&=[D,b_7],\\
b_{15}&=[D,b_8],&b_{16}&=[D,b_9],&b_{17}&=[D,b_{10}],&b_{18}&=[D,b_{13}],\\
b_{19}&=[D,b_{15}],&b_{20}&=[D,b_{16}],&b_{21}&=[b_6,b_9],&b_{22}&=[b_6,b_{16}].
\end{aligned}
\end{equation}
Represent each superoperator in the column-stacked computational operator basis. For its resulting $16\times16$ matrix $S$, let $\operatorname{rvec}(S)\in\mathbb Q^{512}$ list the real parts of the matrix entries row by row, followed by the imaginary parts, and let $B=(\operatorname{rvec}(b_1)\ \cdots\ \operatorname{rvec}(b_{22}))$. The rows
\begin{equation}
\begin{gathered}
1,3,20,33,35,81,83,91,113,115,121,123,259,274,276,\\
289,291,339,369,371,377,379
\end{gathered}
\end{equation}
give a $22\times22$ minor of determinant $9/2^{31}$. All $231$ unordered brackets $[b_i,b_j]$ resolve exactly in this basis. Deleting $b_{22}$ leaves rank $21$, and its defining bracket $[b_6,b_{16}]$ escapes the truncated span, independently confirming the necessity of the final basis element. Thus $\dim\g(G)=22$ exactly.

Write $[b_i,b_j]=\sum_k c_{ij}^{k}b_k$, let $K_{pq}=\Tr(\ad b_p\,\ad b_q)$, and form the Cartan constraint matrix $C$ with rows $(c_{ij})^TK$ for $i<j$. The rows indexed by
\[
(1,2),(1,3),(1,4),(1,6),(1,16),(1,21),(2,3),(2,4),(2,21)
\]
and columns $1,2,3,4,6,9,13,16,21$ have determinant $5^9/2^9$. Exact row reduction gives $\operatorname{rank}C=9$ and a thirteen-dimensional nullspace; the nullspace is an ideal, with derived dimensions $13\to12\to0$ and lower-central dimensions $13\to12\to12$. Cartan's criterion therefore identifies it with $\mf r$. Modulo this radical, the classes of
\[
b_1,b_2,b_3,b_4,b_6,b_9,b_{13},b_{16},b_{21}
\]
form a quotient basis whose Killing determinant is $1/8$.

Finally let $A_x,A_y,A_z$ be the quotient actions of $b_1,b_2,b_3$, take $z_1=\bar b_3$, $z_2=\bar b_4$, $z_3=\bar b_{21}$, and set
\[
T=(A_yz_1,-A_xz_1,z_1\mid A_yz_2,-A_xz_2,z_2\mid A_yz_3,-A_xz_3,z_3).
\]
Use the standard adjoint matrices
\[
J_x=\begin{pmatrix}0&0&0\\0&0&-1\\0&1&0\end{pmatrix},\qquad
J_y=\begin{pmatrix}0&0&1\\0&0&0\\-1&0&0\end{pmatrix},\qquad
J_z=\begin{pmatrix}0&-1&0\\1&0&0\\0&0&0\end{pmatrix}.
\]
In the quotient basis above, $\det T=1$ and
\begin{equation}
T^{-1}A_iT=\operatorname{diag}(J_i,J_i,J_i),\qquad i=x,y,z.
\end{equation}
Hence $\g/\mf r\cong\operatorname{ad}^{\oplus3}$ as an $\su(2)_A$-module and has no fixed vector. Three independent checks distinguish the required hypotheses: deleting $b_{22}$ breaks closure, repeating a selected Cartan row makes the rank minor vanish, and adjoining a trivial module summand creates a fixed vector.

At the jump-operator level, the adjoint orbits lie in
\begin{equation}
L_{mod} = \text{span}_{\C} \{ A \otimes \sigma_{B}^- \mid A \in \Bcal(\Hcal_A) \}.
\end{equation}
This operator-space span is useful for locating the orbits, but $L\mapsto\Dcal_L$ is quadratic, so it supplies neither a superoperator basis nor a proof of the dimensions above. In particular, radical membership is decided by the exact Cartan certificate, not by whether a jump lies in $L_{mod}$.

\textbf{3. Non-Reductivity and the Action $\rho$:}
The action $\rho$ of $\mf{s}$ on $\mf{r}$ is non-trivial.
\begin{equation}
\rho: \mf{s} \to \text{Der}(\mf{r}), \quad \rho \neq 0.
\end{equation}
The representation $\rho$ is the adjoint action of the Levi factor on the radical. More explicitly, $R_z=\Dcal_{\sigma_{A,z}\otimes\sigma_B^-}$ has coordinate column $[R_z]_B$ in $(b_1,\ldots,b_{22})$, with $C[R_z]_B=0$, whereas $[X,R_z]\neq0$. Hence $[\su(2)_A,\mf r]\neq0$ and $\rho$ is non-trivial. This statement concerns the radical ideal; it is not inferred from a linear span of jump operators.

The System Lie Algebra $\g(G) = \mf{s} \ltimes_{\rho} \mf{r}$ is therefore non-reductive, and the structure of $\rho$ dictates how the control operations modulate the effective noise experienced by qubit B.

\subsection{Analysis of the Dynamical Lie Wedge and the FCC}

We now analyze the structure of the Dynamical Lie Wedge $W(G)$ using the Levi-Type Decomposition Theorem. This requires careful consideration of the constraints imposed by the FCC in this non-reductive setting.

\textbf{1. The Edge $E(W(G))$:}
The edge consists of the reversible (Hamiltonian) generators contained in the physical local wedge. For the specified generating set the closed-wedge construction stabilizes at once: positive semidefiniteness is additive along the dissipative directions, so no conical combination of orbit generators can cancel the dissipative parts and add new reversible elements, and the edge is exactly the primary control algebra:
\begin{equation}
E(W(G)) = \mf{g}_A \cong \mf{su}(2)_A.
\end{equation}

\textbf{2. The Quotient Cone $C_{\text{quot}}$:}
The quotient cone $C_{\text{quot}}$ is the projection of $W(G)$ onto the Levi factor $\mf{s}$. The exact certificate gives $\dim\mf{s}=9$. The edge $E(W(G))\cong\mf{su}(2)_A$ accounts for three of those dimensions; the remaining directions of $C_{\text{quot}}$ are fed by the Levi-factor components of the dissipative generators, so the quotient cone extends strictly beyond the edge directions.

\textbf{3. The Radical Wedge $W_{\rr}$ and the zero-base FCC slice:}
The radical wedge is $W_{\rr} = W(G) \cap \mf{r}$. The calculation below checks the necessary FCC slice over the base point $S=0$.

Let $X_0 \in \mf{g}_A$ belong to the control edge. The fiber above $S=0$ is $W_{\rr}$, and the FCC at that base point implies
\begin{equation}
e^{\ad(X_0)}(W_{\rr}) \subseteq W_{\rr} \quad \forall X_0 \in \mf{g}_A.
\end{equation}

Thus the cone $W_{\rr}$ must be invariant under the control action of $SU(2)_A$. This is a necessary zero-base consequence of the full FCC, which also quantifies over every nonzero base point in $C_{\mathrm{quot}}$.

The initial dissipative generators $\Lcal_1, \Lcal_2$ correspond to Lindblad operators modulated by $P_{A,0}, P_{A,1}$ (related to $\sigma_{A,z}$). The action of the control group $SU(2)_A$ transforms these modulation operators by conjugation.

We examine the orbit of the Lindblad operators under the action of $U_A \in SU(2)_A$. The action is $(U_A \otimes I_B) L_i (U_A^\dagger \otimes I_B)$.
Consider $L_1$. The action transforms $P_{A,0}$ to $U_A P_{A,0} U_A^\dagger$. This is the projector onto the state $U_A|0\rangle$. As $U_A$ varies over $SU(2)_A$, this orbit covers all rank 1 projectors on $\Hcal_A$.

Because the two rays are independently switchable, the closed-wedge construction contains the union of their individual adjoint orbits,
\[
\{\operatorname{Ad}_{U_A}\Lcal_1:U_A\in SU(2)_A\}\ \cup\
\{\operatorname{Ad}_{V_A}\Lcal_2:V_A\in SU(2)_A\},
\]
and its conic combinations may choose $U_A$ and $V_A$ independently. A linked pair with one common rotation would describe a different control constraint. Merging the two jump operators into a single one would also change the generator, since the dissipator is quadratic in $L$ and $\Dcal_{L_1+L_2}\neq\Dcal_{L_1}+\Dcal_{L_2}$.

Edge invariance of the full wedge forces $W(G)$ to contain the dissipative generators corresponding to this entire orbit; the construction in Section~\ref{sec:dlwp_framework} closes the cone under exactly these adjoint orbits. The orbit generators themselves carry components along the Levi factor, so they enter the decomposition through fibers over nonzero base points rather than through $W_{\rr}$.

At zero base, averaging over $SU(2)_A$ projects onto the control-fixed isotypic component. The two normalized dissipative rays have the same Haar average, and the physical rates only rescale it. The average therefore generates a single radical ray:
\begin{equation}
\bar{\Lcal}_i := \int_{SU(2)_A} \operatorname{Ad}_{U_A} \Lcal_i \, dU_A \in W_{\rr}, \qquad e^{\ad(X_0)} \bar{\Lcal}_i = \bar{\Lcal}_i \qquad (i = 1, 2).
\end{equation}
What puts the averages inside the radical is a property of the quotient. Since $\mf{r}$ is an ideal, the control action descends to $\g(G)/\mf{r}$ and the quotient map intertwines the averaging operators. The exact matrix $T$ above identifies this nine-dimensional quotient with three adjoint copies and proves that it has no invariant vector. Every element fixed by the control therefore lies in $\mf r$. Invariance of $\mf{r}$ by itself would not settle the question, because the elements fixed by the control form the centralizer of $\mf{g}_A$, and a centralizer can meet a Levi factor. The exact $22/13/9$ certificate and module decomposition establish this $S=0$ slice; they do not reduce the full FCC to that slice.

The full FCC holds here because the closed-wedge construction makes $W(G)$ invariant under its entire edge. The zero-base calculation is a consistency check on that construction, not an independent proof for all fibers.

\paragraph*{Edge Action and the Scope of the Convex-Type Comparison}

We analyze the edge action in this example directly. The calculation does not identify a bracket-induced symplectic module of the kind required by Neeb's theory, so no convex-type conclusion is drawn from it.

\textbf{1. Operator-Space Action.}
The control subalgebra $\mf g_A\cong\su(2)_A\subset\mf s$ acts nontrivially on the radical $\mf r$. The radical contains directions generated by the modulated dissipation together with induced Hamiltonian terms. This paragraph studies the restriction of the Levi action to $\mf g_A$; it makes no identification of a symplectic nilradical subquotient.

Focus on the action of the control algebra $\mf{su}(2)_A$ on the space of modulation operators. The space of modulation operators acting on $\Hcal_A$ is $\Bcal_H(\Hcal_A)$. The action is conjugation.

A convenient invariant subspace for this direct calculation is the real vector space of traceless Hermitian operators on $\Hcal_A$, $\{X\in\Bcal_H(\Hcal_A)\mid\Tr X=0\}\cong\R^3$, identified with the Bloch-vector space. The action of $\mf{su}(2)_A$ on this subspace is the standard rotation ($\mf{so}(3)$ action).

A convex-type analysis would first have to characterize the nilradical $\mf{n}$ of $\g(G)$ and the bracket-induced symplectic structures on its subquotients. We do not use such an analysis here. The control algebra $\mf{su}(2)_A$ acts on the modulation operators, and its coupling $\rho$ must preserve $W(G)$ through the edge-invariance condition.

\textbf{2. The Edge-Invariance Slice.}
The action of the control algebra $\mf{su}(2)_A$ on the radical $\mf{r}$ rotates the modulation basis. On the subspaces used here, the operators $\ad(\Lcal_{H_{A,i}})$ are semisimple with purely imaginary eigenvalues and hence generate bounded one-parameter groups.

In this example the rotational action makes the zero-base constraint mild. The FCC requires invariance of $W_{\rr}$ under the control action as one necessary slice, and the control-averaged directions extracted from the closed-wedge construction realize a compatible cone explicitly.

Here the control action is elliptic, and invariance of $W_{\rr}$ verifies the $S=0$ slice of the FCC. The full condition still concerns every fiber and follows from the Lie-wedge construction. Section~\ref{sec:neebs_theory} next separates the invariant-cone comparison, the FCC, complete positivity, and the spectral-stability test.

\paragraph*{Physical Interpretation}

This example has mixed non-reductive type: a semisimple control subalgebra on qubit A acts through $\rho$ on the solvable radical generated in the closure by structured amplitude damping on qubit B. The Levi-Type Decomposition separates these components without assigning either one a spectral or many-body character.

Because the construction closes $W(G)$ under the adjoint orbit of its edge, implementing state-dependent dissipation in one basis together with fast control makes that dissipation available in every basis in the control orbit. The rotated dissipators are not elements of $W_{\rr}$, their cross terms lying along the Levi factor; they enter the wedge through fibers over non-zero base points. The next section keeps this geometric constraint separate from complete positivity and from the spectrum of a selected drift.

\section{The Geometry of Non-Reductive Dynamics: Edge Invariance, the FCC, and Neeb's Theory}
\label{sec:neebs_theory}

The decomposition theorem makes the Fiber Compatibility Condition the edge-invariance constraint on a Lie wedge in a non-reductive algebra $\g = \mf{s} \ltimes \mf{r}$. Section~\ref{sec:example_non_reductive} verified one rotational slice. Here we compare the FCC with Neeb's invariant-cone theory where the required symplectic module exists, and then use the Gaussian example to separate complete positivity from the FCC and from Hurwitz stability. Figure~\ref{fig:fcc_visualization} gives a schematic of the fiber geometry.

In a non-reductive algebra, $\gLK$ among them (Corollary~\ref{cor:glk_nonreductive}), the coupling $\rho$ can obstruct invariance under the full algebra. This obstruction has no automatic spectral consequence. The wedge $W(G)$ is invariant under its edge $E(W)$, and the FCC is precisely the expression of that restricted invariance in fiber coordinates.

\subsection{Analysis of the Convex Type Condition for the Ambient Algebra $\gLK$}
\label{sec:convex_type_glk}

We next ask whether the ambient algebra $\gLK \cong \aff(\Htr)$ can carry a pointed generating cone invariant under its full adjoint action. This is an algebraic admissibility question about the ambient space, not a spectral-stability test for a physical generator.

\paragraph*{Algebraic Structure and Identification of Modules}

We analyze the structure of $\gLK = \aff(\Htr) = \gl(\Htr) \ltimes \Htr$.
The decomposition of the linear part is $\gl(\Htr) = \sll(\Htr) \oplus \R \cdot \Id_{\Htr}$.
The radical is $\rad(\gLK) = \R \cdot \Id_{\Htr} \ltimes \Htr$.
The nilradical is $\mf{n} = \Htr$ (the translational ideal). The nilradical is abelian, $[\mf{n}, \mf{n}]=0$.

In Neeb's theory the symplectic modules carry forms induced by the Lie bracket: on the relevant subquotient of the lower central series of $\mf{n}$, the form is $\omega_{\xi}(x, y) = \xi([x, y])$ for a suitable functional $\xi$. Here the nilradical $\mf{n} = \Htr$ is abelian, so every bracket-induced form vanishes identically and the construction produces no nonzero symplectic module. The convex type condition is therefore vacuous for $\gLK$: it neither obstructs nor licenses an invariant cone. The operative criterion at the level of the ambient algebra is instead the structure theory of admissible Lie algebras, which we apply next.

\paragraph*{The Constraints on the Affine Algebra and Its Implications for $\gLK$}

The analysis of invariant cones in the affine algebra $\aff(V)$ reveals that the constraints arise from the coupling between the linear and translational parts. The existence of pointed generating invariant cones in $\aff(V)$ is highly restricted. The following is a direct consequence of the structure theory of admissible Lie algebras \cite{NeebOeh2021} (see also \cite{Neeb2000}).

\begin{proposition}[Invariant Cones in the Affine Algebra]
The affine algebra $\aff(V) = \gl(V) \ltimes V$ admits a pointed generating invariant cone if and only if $V=\{0\}$.
\end{proposition}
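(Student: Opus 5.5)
The trivial direction comes first. If $V=\{0\}$ then $\aff(V)=\gl(0)=\{0\}$, and the zero cone is pointed, generating and invariant. (If one also wants to cover nonzero $V$ with the convention $\gl(V)$ alone, note that $\gl(V)=\sll(V)\oplus\R\Id_V$ already admits the half-line... but here $V=\{0\}$ is the only case, and nothing more is needed.) The substance is therefore the forward implication: we assume $V\neq\{0\}$ and a pointed generating invariant cone $W\subseteq\aff(V)$, and we derive a contradiction.

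The plan is to use the isotropic dilation $X=(\Id_V,0)$, exactly as in the proof that $\WGKLS$ is not invariant. In the affine model the bracket of \eqref{eq:semidirect_bracket} gives $[(\Id_V,0),(A,b)]=(0,b)$. So $\ad X$ is the projection onto the translational ideal $V$ along $\gl(V)$, and $e^{t\ad X}(A,b)=(A,e^{t}b)$. Invariance gives $(A,e^tb)\in W$ for all $t\in\R$ and all $(A,b)\in W$. Letting $t\to-\infty$ and using closedness, we get $(A,0)\in W$. Dividing by $e^t$ and letting $t\to+\infty$, we get $(0,b)\in W$.

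Now use that $W$ is generating. Some $(A,b)\in W$ has $b\neq 0$, since otherwise $W\subseteq\gl(V)$ and $W-W\neq\aff(V)$. So $(0,b)\in W$ with $b\neq0$. It remains to produce $-(0,b)\in W$, which contradicts pointedness. For this we use $\sll(V)\subseteq\aff(V)$ acting on $V$. We may also use translations: $e^{\ad(0,c)}(A,b)=(A,b-Ac)$. Consider first $\dim V\geq 2$. The group $SL(V)$ generated by $\exp\sll(V)$ acts on $V\setminus\{0\}$ transitively, so it sends $b$ to $-b$, and hence $(0,-b)=\Ad_g(0,b)\in W$. For $\dim V=1$, $\sll(V)=0$. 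Instead we apply translations to an element $(A,0)\in W$ with $A=a\neq0$; such an element exists by generation together with the limit above. This gives $(a,-ac)\in W$ for all $c$. Dividing by $|c|$ and letting $c\to\pm\infty$ puts both $(0,\pm1)$ in $W$, which again contradicts pointedness.

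The main obstacle is making the transitivity and limit steps clean with closedness, which we assume as part of ``cone'' in the Lie wedge sense. An alternative that handles all dimensions uniformly is the translation argument. Generation supplies some $(A,b)\in W$ with $A\neq0$, and then $(A,0)\in W$. Choosing $c$ with $Ac=\mp v\neq 0$ and rescaling $(A,-tAc)/t$ as $t\to\infty$ shows that $\pm v$ both lie in $W$ (in $V$), so $E(W)\neq\{0\}$. I would present this uniform version as the main proof, citing the structure theory of admissible Lie algebras \cite{NeebOeh2021} only as context.
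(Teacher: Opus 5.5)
Your proof is correct. Its uniform translation version takes a more elementary route than the paper's.

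The paper handles the forward direction in one line. A pointed generating invariant cone makes $\aff(V)$ admissible, and by \cite[Lemma~27(a)]{NeebOeh2021} every abelian ideal of an admissible Lie algebra is central. But $V$ is abelian with $[\gl(V),V]=V\neq\{0\}$.

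You instead prove the underlying mechanism by hand. Since $V$ is an abelian ideal, $\ad(0,c)^2=0$, so $e^{t\ad(0,c)}x=x+t[(0,c),x]$ for every $x\in W$ and $t\in\R$. Dividing by $|t|$ and using closedness puts both $\pm[(0,c),x]$ in $W$, hence in $E(W)=\{0\}$. Generation then supplies $x=(A,b)\in W$ with $A\neq 0$, and any $c$ with $Ac\neq 0$ gives the contradiction.

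This is the argument behind the cited lemma, specialized to the ideal $V$. Your version is self-contained and shows exactly where closedness, pointedness and the generating property enter, without importing Neeb's notion of admissibility. The paper's citation is shorter and places the statement inside a general structure theory. Both apply equally to any Lie algebra with a non-central abelian ideal.

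Three simplifications for your write-up:
\begin{enumerate}
\item The dilation step is unnecessary in the uniform version, since $(A,\,b-tAc)/t\to(0,-Ac)$ directly.
\item The case split into $SL(V)$-transitivity for $\dim V\ge 2$ and the separate $\dim V=1$ argument becomes redundant once you lead with the uniform proof.
\item Delete the trailing parenthetical in your trivial direction about $\gl(V)$ admitting a half-line. It is unfinished and irrelevant to the case $V=\{0\}$, where $\aff(V)=\{0\}$ and $\{0\}$ is the required cone.
\end{enumerate}
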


\begin{proof}
If $V=\{0\}$ the statement is vacuous. Suppose $V \neq \{0\}$. A finite-dimensional real Lie algebra admitting a pointed generating invariant (closed convex) cone is \emph{admissible}, and by the structure theory of admissible Lie algebras every abelian ideal of an admissible Lie algebra is central \cite[Lemma~27(a)]{NeebOeh2021}. In $\aff(V)$ the translational ideal $V$ is abelian, but it is not central: $\gl(V)$ acts on $V$ by the (faithful) standard representation, so $[\gl(V), V] = V \neq \{0\}$. Hence $\aff(V)$ is not admissible and admits no pointed generating invariant cone.
\end{proof}

Since $\gLK \cong \aff(\Htr)$ and $\Htr \neq \{0\}$ for $d \ge 2$, the ambient algebra $\gLK$ admits no pointed generating invariant cone. Its non-reductive coupling prevents invariance under the full adjoint action.

The ambient conclusion is nonadmissibility, not spectral instability. No pointed generating cone is invariant under all of $\gLK$, so physical control systems are described by Lie wedges. For a specified system, the FCC enforces invariance under reversible edge directions. Complete positivity limits which generators can lie in the wedge, while decay or growth of a selected generator must be checked from its spectrum.

\subsection{The FCC as an Edge-Invariance Constraint: A Structured Correspondence}
\label{sec:connection_neeb_fcc_glk}

Neeb's convex-type results concern cones invariant under the full algebra, while a Lie wedge is invariant under its edge. The infinitesimal FCC is a subtangency condition on the graph of the fiber map; Neeb's condition is stated through a moment-map cone and the center action. Their comparison presupposes a bracket-induced symplectic module. It is vacuous for $\gLK$ itself (Section~\ref{sec:convex_type_glk}) and applies when the nilradical carries such a module, as in an oscillator algebra with a genuine Heisenberg nilradical. The solvable Gaussian system algebra of Section~\ref{sec:stabilization_hyperbolic} is outside that domain.

\begin{remark}[FCC and convex type: a correspondence]
\label{rem:fcc_convex_type}
Let $(\g, W)$ be a DLWP with $\g = \mf{s} \ltimes \mf{r}$ whose nilradical $\mf{n}$ carries a bracket-induced symplectic module $(V, \omega)$. Failure of convex type says that at least one condition in Definition~II.1 fails; it does not identify a hyperbolic or parabolic mode. If an edge generator is shown independently to produce expansion or shear on $V$, the infinitesimal FCC of Proposition~\ref{prop:infinitesimal_fcc} requires that flow to remain tangent to the local graph of the fibers. The comparison is therefore between two invariance conditions, global algebra invariance for Neeb's cones and edge invariance for Lie wedges. It is not an equivalence and carries no spectral implication by itself.
\end{remark}

By Proposition~\ref{prop:infinitesimal_fcc}, edge-invariance of the closed wedge is the subtangency condition, which at $Y = S + R$ requires the algebraic shear $V_R^{alg} = [R_0, R] + \rho(S_0)R - \rho(S)R_0$ to lie in $DW(S|R)(V_S^{alg})$ with $V_S^{alg} = [S_0, S]$, a membership that already forces $V_S^{alg} \in T_S(C_{\text{quot}})$.

Projected onto the symplectic module $V$, the action of an edge element $X_0$ is Hamiltonian, and $V_R^{alg}$ records its infinitesimal contribution in the chosen fiber coordinates. If a separate spectral calculation finds an expanding or shearing component, subtangency requires the resulting vector to stay in the contingent derivative of the graph. The fibers may therefore have to rotate or change shape along the edge orbit. This conclusion comes from the FCC and the independently established spectrum; failure of convex type alone supplies neither premise.

\subsection{A Hyperbolic Bosonic Example: Complete Positivity and Spectral Stability}
\label{sec:stabilization_hyperbolic}

Within the finite-dimensional Gaussian moment representation, a squeezed, damped bosonic mode separates the infinitesimal CP criterion from asymptotic stability. The Gaussian CP matrix is positive semidefinite for every damping rate $\gamma\geq0$, while its first-moment drift is Hurwitz only above a threshold set by the squeezing strength. The Gaussian system algebra is solvable, so the FCC plays no role in that threshold.

\paragraph*{A Squeezed Bosonic Mode}

Consider a single mode with $[a,a^\dagger]=1$ and quadratures $Q=(a+a^\dagger)/\sqrt{2}$ and $P=(a-a^\dagger)/(i\sqrt{2})$. The squeezing Hamiltonian and amplitude-damping generator are
\[
H_S=\frac{i}{2}\bigl(r a^2-r(a^\dagger)^2\bigr),
\qquad
\Lcal_D(\rho)=\gamma\left(a\rho a^\dagger-\frac{1}{2}\{a^\dagger a,\rho\}\right),
\]
where $r\in\R$ and $\gamma\geq 0$. In the ordering $(Q,P)$, the corresponding Gaussian moment pair has
\[
A=\begin{pmatrix}-\gamma/2-r&0\\0&-\gamma/2+r\end{pmatrix},
\qquad
D=\frac{\gamma}{2}I_2.
\]
The Hamiltonian part is hyperbolic when $r\neq0$, and its quadratic Hamiltonian on phase space is indefinite, as in Example~\ref{ex:heisenberg_convex_type}.

\begin{remark}[Finite-dimensional Gaussian scope]
All algebraic claims in this example concern the finite-dimensional Gaussian moment representation $(A,D)$. These moment-level statements do not extend the paper's finite-dimensional Levi theorem to the full oscillator operator algebra and do not establish domains, conservativity, or the existence of a strongly continuous quantum dynamical semigroup for the unbounded oscillator generator. Such an extension would require a separate infinite-dimensional semigroup analysis.
\end{remark}

\paragraph*{Solvability and the Spectral Threshold}

In this specification, $\Lcal_{H_S}$ and $\Lcal_D$ represent independently actuable one-sided rays. For $r\neq0$ and $\gamma>0$, their Gaussian Lie algebra has dimension four, with derived-series dimensions $4\to2\to0$. The algebra is therefore solvable. Hence $\mf{s}=0$, $C_{\mathrm{quot}}=\{0\}$, and $W=W_{\rr}$. This specification has solvable type; the FCC has a singleton base and is automatic. The supplied squeezing direction is one-sided, with $\Lcal_{H_S}\notin E(W)$.

The symmetric extension also has constant fibers. Let $H_1,H_2$ be quadratic Hamiltonians with Gaussian drift matrices
\[
S_1=\begin{pmatrix}-1&0\\0&1\end{pmatrix},
\qquad
S_2=\begin{pmatrix}0&1\\1&0\end{pmatrix},
\]
and set $G_{\mathrm{ext}}=\{\pm\Lcal_{H_1},\pm\Lcal_{H_2},\Lcal_D\}$. The Gaussian algebra then has dimension seven and a Levi factor isomorphic to $\mf{sl}(2,\R)$. The bidirectional Hamiltonian controls put its quotient directions in the edge, so Proposition~\ref{prop:fiber_collapse} makes the fiber map constant. A nonconstant-fiber instance in which the FCC has content appears in Example~\ref{ex:fiber_mobility_glk2}.

For the moment pair above, the infinitesimal Gaussian CP matrix has eigenvalues $0$ and $\gamma$. The CP condition is therefore $\gamma\geq0$, independently of $|r|$. The drift eigenvalues are
\[
\lambda_\pm(A)=-\frac{\gamma}{2}\pm |r|,
\]
so $A$ is Hurwitz exactly when $\gamma>2|r|$.

Independent positive rescaling of the two generating rays leaves their DLWP unchanged, while the specified sum can cross the Hurwitz threshold as the two coefficients vary. The threshold is therefore a spectral condition on a fixed combined drift. It lies outside the wedge classification and the FCC.

\subsection{Counterexamples Illustrating the Necessity of the FCC}
\label{sec:counterexamples}

A single worked counterexample isolates what fails without the Fiber Compatibility Condition. A pointed radical wedge that sits in a plane the edge rotates cannot survive that rotation, and the FCC is exactly the obstruction. We give the mechanism abstractly in the Euclidean algebra $\mf{e}(2)$, then restate it physically for a dephasing qubit.

\label{sec:fcc_counterexample}
The Euclidean algebra $\mf{e}(2) = \mf{so}(2) \ltimes \R^2$ carries the semidirect product structure $\mf{e}(2) = \mf{h} \ltimes \mf{r}_{ab}$, with $\mf{h}=\mf{so}(2)$ the rotations (generator $J$), $\mf{r}_{ab}=\R^2$ the translations (generators $P_x, P_y$), and $\rho$ the standard rotation action, so that $[J, P_x] = P_y$ and $[J, P_y] = -P_x$. The algebra is solvable, so its true Levi factor is trivial; the splitting still illustrates the FCC mechanism, which operates wherever a subalgebra $\mf{h}$ acts non-trivially on an ideal $\mf{r}_{ab}$. The constraints transfer verbatim, since the proofs of Proposition~\ref{lem:fiber_compatibility_main} and Proposition~\ref{lem:sufficiency} use only that $\mf{r}$ is an ideal admitting a complementary subalgebra $\mf{s}$, with no appeal to the semisimplicity of $\mf{s}$ or the maximality of $\mf{r}$.

Assemble a candidate wedge from these components. Demand the edge $E(W) = \mf{so}(2)$ and set the quotient part $C_{\text{quot}} = \mf{so}(2)$. Every $S \in \mf{so}(2)$ then lifts into the edge with zero translation component, so $0 \in W(S) \cap W(-S)$, and this forces the fibers to be constant. For $R \in W(S)$ we have $R = (S+R) + (-S) \in W \cap \mf{r}_{ab} = W_{\rr}$, while conversely $S + W_{\rr} \subseteq W$ because $0 \in W(S)$, so $W(S) = W_{\rr}$ for every $S$. The zero-component lift is what does the work; subadditivity over a base subspace alone would not force constant fibers, as the abelian algebra $\R \oplus \R$ shows, where the Lie wedge $W = \{(s,r) \mid r \ge |s|\}$ has $C_{\text{quot}} = \R$ a subspace yet non-constant fibers $W(s) = [\,|s|, \infty)$. Take $W_{\rr}$ to be a pointed cone, the positive quadrant
\begin{equation}
W_{\rr} = \{(x, y) \in \R^2 \mid x \ge 0, y \ge 0\} = \cone(P_x, P_y),
\end{equation}
which is trivially a Lie wedge in the abelian $\mf{r}_{ab}$, and form $W = \mf{so}(2) + W_{\rr}$.

Now test the FCC at $X_0 = J \in E(W)$. With constant fibers the condition \eqref{eq:FCC} reduces to invariance of the radical wedge under the edge,
\begin{equation}
e^{\ad(J)t}(W_{\rr}) \subseteq W_{\rr} \quad \forall t \in \R,
\end{equation}
where $e^{\ad(J)t}$ rotates the plane by angle $t$. The positive quadrant is not rotation invariant, since rotating $P_x$ by $\pi$ returns $-P_x \notin W_{\rr}$. The FCC fails, so $W$ is not a Lie wedge even though it was built from a valid $C_{\text{quot}}$ and a valid $W_{\rr}$. The only closed convex cones in $\R^2$ invariant under the full rotation group are $\{0\}$ and $\R^2$, so a pointed generating $W_{\rr}$ forces a trivial edge. Two independent reasons enforce this. An edge element with nonzero rotation component would impose rotation invariance of $W_{\rr}$ by the computation above, while one lying in the translation ideal would belong to $E(W) \cap \mf{r}_{ab} = E(W_{\rr}) = \{0\}$ by pointedness. The non-reductive coupling constrains the wedge geometry in a way the naive construction ignored.

\begin{remark}[The same obstruction for a dephasing qubit]
The mechanism has a direct qubit realization. Take bidirectional control by $H=\sigma_x$, so the declared candidate edge is $E(W)=\operatorname{span}(\Lcal_{H_x})$, and restrict the declared noise cone to the single ray generated by $z$-dephasing, $\Lcal_z(\rho)=\sigma_z\rho\sigma_z-\rho$. Because the $x$-control is reversible, the sequence $e^{t\Lcal_{H_x}}e^{s\Lcal_z}e^{-t\Lcal_{H_x}}=e^{s e^{t\ad(\Lcal_{H_x})}\Lcal_z}$, with $s\geq0$ and $t\in\R$, implements unitary conjugation of the dissipator. A quarter turn carries $\Lcal_z$ to $\Lcal_y$. The generator $\Lcal_y(\rho)=\sigma_y\rho\sigma_y-\rho$ is itself a rank-one GKLS generator and is therefore CP; the failure is that it lies outside the declared one-ray candidate cone. Edge invariance requires the conical closure of the full orbit,
\begin{equation}
C_{\mathrm{orb}} = \cone\{ \Lcal_n \mid n \in \text{span}(\hat{y}, \hat{z}) \},
\end{equation}
At fixed total dephasing rate, the convex hull of the orbit is the two-dimensional disk of real $2\times2$ positive semidefinite matrices on the $yz$ block with fixed trace. Allowing arbitrary nonnegative rates gives $C_{\mathrm{orb}}$, the three-dimensional positive-semidefinite cone on that block. This orbit-generated dissipative cone is contained in $W(G)$ and is distinct from the radical slice $W_{\rr}$.
\end{remark}

\section{Conclusion}
\label{sec:conclusion}

This work develops structural tools for generator-labelled Markovian control specifications. The DLWP couples the generated real Lie algebra to the minimal closed Lie wedge determined by the declared control rays. In finite dimension, the edge-orbit construction of Proposition~\ref{prop:construction_W(G)_saturation} reaches that wedge after finitely many stages. Its relation to the tangent wedge of the reachable semigroup remains a separate globality question.

The affine realization of $\gLK$ exposes the non-reductive ambient algebra in which the wedge sits. Relative to a chosen Levi splitting, Theorem~\ref{thm:levi_type_main} re-encodes the wedge by its radical slice, quotient cone, and fiber map. The FCC in Equation~\eqref{eq:FCC} is the fiber-coordinate form of edge invariance. Conversely, data satisfying conditions (D1)--(D6) reconstruct the unique wedge with those components. This structural normal form becomes a genuine reduction when the fibers are constant; in general it is an exact change of coordinates. For the control-generated wedges $W\subseteq\WGKLS$ considered here, the Lie-semialgebra condition occurs in the restricted situations stated in the text.

The dissipation functional yields a restriction specific to GKLS systems. Corollary~\ref{cor:hamiltonian_rigidity} shows that every wedge element in the commutator ideal is Hamiltonian, a semisimple system algebra synthesizes no dissipation, and the dissipation strength of every element is carried by its radical coordinate. Neeb's invariant-cone theory supplies a structured correspondence for algebras whose nilradical carries a bracket-induced symplectic module; the convex-type condition is vacuous for the ambient algebra $\gLK$ itself. Physical algebras with genuine Heisenberg nilradicals, including oscillator algebras, provide the natural setting for extending this comparison. The ambient algebra and the solvable Gaussian example here require a direct analysis outside that setting.

These structural outputs provide coordinates for later studies of control, chaos, or protected information; thermalization, scrambling, and protocol performance require additional dynamical input. The framework covers generator-labelled, piecewise-controlled Markovian dynamics. Each segment uses a time-homogeneous GKLS generator, while a protocol with $|G|>1$ is switched and need not be autonomous; the singleton $G=\{\Lcal\}$ recovers an autonomous semigroup. Time-local or time-dependent generators may admit cone or Lie-wedge descriptions on suitable intervals. Non-Markovian memory and multi-time correlations require additional structure.

\appendix

\appendix
\section{Appendix A: Supplementary Mathematical Details and Derivations}

\subsection{Detailed Structure of Non-Reductive Lie Algebras}
\label{app:nonreductive_structure}

This appendix collects details on the structure of non-reductive Lie algebras that the main text draws on, together with the standard examples.

\paragraph*{The Role of the Nilradical and the Structure of the Radical}

In a non-reductive algebra $\g = \mf{s} \ltimes \mf{r}$ the nilradical $\mf{n} = \nil(\g)$, the maximal nilpotent ideal, sits inside the radical, $\mf{n} \subseteq \mf{r}$. The radical need not itself be nilpotent, but its derived algebra satisfies $[\mf{r}, \mf{r}] \subseteq \mf{n}$, so $\mf{r}/\mf{n}$ is abelian and $\mf{r}$ is an extension of $\mf{n}$ by this abelian quotient.

The action of the entire algebra $\g$ on $\mf{n}$ is what the analysis of invariant cones turns on: the constraints on convexity are dictated by this action on the symplectic modules derived from $\mf{n}$, with the characteristic ideals $Z(\mf{n})$ and $[\mf{n}, \mf{n}]$ entering through the construction of those modules (Appendix~\ref{app:neeb_theory_details}).

\paragraph*{The Semidirect Product Structure and the Representation $\rho$}

The structure of the semidirect product $\g = \mf{s} \ltimes_{\rho} \mf{r}$ is defined by the homomorphism $\rho: \mf{s} \to \text{Der}(\mf{r})$. This representation describes how the semisimple component modulates the solvable component.

The action $\rho$ preserves the characteristic ideals of $\mf{r}$, in particular $\mf{n}$, so it induces actions of $\mf{s}$ on $\mf{n}$ and on $\mf{r}/\mf{n}$; by Weyl's theorem it decomposes into irreducibles, whose structure measures the complexity of the coupling.

If $\rho$ is non-trivial, the algebra is a proper semidirect product. A full classification of such algebras is out of reach, since it contains the classification of solvable algebras, a wild problem; the Levi decomposition at least separates the difficulty into the radical, the Levi factor, and the coupling $\rho$, and the FCC is the compatibility condition the wedge geometry must satisfy relative to $\rho$.

\paragraph*{Examples of Non-Reductive Algebras}

We provide standard examples of non-reductive algebras to illustrate the structure and the role of the coupling $\rho$.

\begin{example}[The Euclidean Algebra $\mf{e}(n)$]
The Euclidean algebra $\mf{e}(n) = \mf{so}(n) \ltimes \R^n$ describes the symmetries of Euclidean space (rotations and translations). Here we identify the components in the semidirect product structure. The rotational part is $\mf{h} = \mf{so}(n)$. The translational part is $\mf{r}_{ab} = \R^n$, which is an abelian ideal.
The action of $\mf{so}(n)$ on $\R^n$ is the defining representation (rotation).
If $n>2$, $\mf{so}(n)$ is semisimple. In this case, $\mf{s}=\mf{so}(n)$ and $\mf{r}=\R^n$. The radical is abelian, so $\mf{r}=\mf{n}$. This is a non-reductive algebra because the action $\rho$ is non-trivial, and the radical is not central.
If $n=2$, $\mf{so}(2)$ is abelian. The entire algebra $\mf{e}(2)$ is solvable. $\mf{s}=\{0\}, \mf{r}=\mf{e}(2)$. The nilradical is $\mf{n}=\R^2$. This example was used in Section \ref{sec:fcc_counterexample} to illustrate the FCC.
\end{example}

\begin{example}[The Poincaré Algebra]
The Poincaré algebra $\mf{p}(1,3) = \mf{so}(1,3) \ltimes \R^4$ is non-reductive, with semisimple part $\mf{s} = \mf{so}(1,3)$ acting on the abelian ideal $\mf{r} = \R^4$ of translations by the standard Lorentz action, and its invariant cones carry the causal structure of Minkowski spacetime.
\end{example}

\subsection{Proof of Theorem~\ref{thm:structure_glk} (Affine Structure of $\g_{LK}$)}
\label{app:proof_structure_glk}

This appendix carries out the construction of the isomorphism $\Phi: \g_{LK} \to \aff(\Htr)$ stated in Theorem~\ref{thm:structure_glk}, following the framework established by O'Meara \cite{o2025lie}. The non-reductivity of the resulting affine algebra is Corollary~\ref{cor:glk_nonreductive}, proved in the main text.

\begin{proof}[Proof of Theorem~\ref{thm:structure_glk}]
\textbf{Step 1: Decomposition of the Operator Space.}
We decompose the space of Hermitian operators $\Bcal_H(\Hcal)$ into the identity component and the traceless subspace $\Htr$. The traceless subspace consists of Hermitian operators $X$ such that $\Tr(X)=0$. Let $N = d^2-1$ be the dimension of $\Htr$.
\begin{equation}
\Bcal_H(\Hcal) = \R \cdot I \oplus \Htr.
\end{equation}
An operator $\rho \in \Bcal_H(\Hcal)$ can be uniquely written as $\rho = t I + X$, where $t \in \R, X \in \Htr$. The trace of $\rho$ is $\Tr(\rho) = dt$. The physical state space (density operators) is the affine subspace defined by $t=1/d$ (normalized trace $\Tr(\rho)=1$).

\textbf{Step 2: Action of $\g_{LK}$ on the Decomposition.}
We analyze the action of $\Lcal \in \g_{LK}$ on this decomposition. Since $\Lcal$ is HP, the image $\Lcal(\rho)$ is Hermitian for any Hermitian $\rho$. By the unique decomposition established in Step 1, we can write the image as $\Lcal(\rho) = t' I + X'$, where $t' \in \R$ and $X' \in \Htr$.

We analyze the trace of the image:
\begin{equation}
\Tr(\Lcal(\rho)) = \Tr(t' I + X') = d t' + \Tr(X') = d t'.
\end{equation}
(We used the defining property of the traceless subspace, $\Tr(X')=0$).
Since $\Lcal$ is trace-annihilating, we must have $\Tr(\Lcal(\rho))=0$ for all $\rho$. This implies $d t' = 0$, and thus $t'=0$.

Therefore, the image of $\Lcal$ is contained entirely within the traceless subspace $\Htr$: $\Image(\Lcal) \subseteq \Htr$. The dynamics generated by $\Lcal$ preserve the affine structure of the state space, mapping the affine hyperplanes defined by constant trace to the subspace $\Htr$. This confirms that the evolution preserves the trace constraint infinitesimally.

\textbf{Step 3: Isomorphism to the Affine Algebra.}
The real affine group $\aff(V)$ acting on a vector space $V$ is the semidirect product of the general linear group $GL(V)$ and the translational group $V$. The corresponding Lie algebra is the affine algebra $\aff(V) = \gl(V) \ltimes V$. We set $V=\Htr$. The elements of $\aff(\Htr)$ are pairs $(A, b)$, where $A \in \gl(\Htr)$ is the linear part and $b \in \Htr$ is the translational part.

We construct the isomorphism $\Phi: \g_{LK} \to \aff(\Htr)$. Let $\Lcal \in \g_{LK}$.

The linear part $A_\Lcal$ is the restriction of $\Lcal$ to the traceless subspace $\Htr$. Since $\Image(\Lcal) \subseteq \Htr$, this restriction is well defined as an endomorphism of $\Htr$:
\begin{equation}
A_\Lcal = \Lcal|_{\Htr} \in \gl(\Htr).
\end{equation}
If $X \in \Htr$, $\Lcal(X) = A_\Lcal(X) \in \Htr$.

The translational part $b_\Lcal$ is determined by the action of $\Lcal$ on the identity element. Since $\Image(\Lcal) \subseteq \Htr$, we define:
\begin{equation}
b_\Lcal = \Lcal(I) \in \Htr.
\end{equation}

The isomorphism is defined as $\Phi(\Lcal) = (A_\Lcal, b_\Lcal)$.

We verify the action of $\Lcal$ on a general element $\rho = tI+X$:
\begin{equation}
\Lcal(tI+X) = t \Lcal(I) + \Lcal(X) = t b_\Lcal + A_\Lcal(X).
\end{equation}
This confirms that $\Lcal$ acts as an affine transformation on the space $\Bcal_H(\Hcal)$, characterized by $(A_\Lcal, b_\Lcal)$.

\textbf{Step 4: Verification of the Lie Algebra Homomorphism.}
We must verify that $\Phi$ respects the Lie bracket structure. The Lie bracket in the affine algebra $\aff(\Htr)$ is defined by the semidirect product structure, where $\Htr$ is an abelian ideal and $\gl(\Htr)$ acts on $\Htr$ by the standard representation:
\begin{equation}
[(A_1, b_1), (A_2, b_2)]_{\aff} = ([A_1, A_2], A_1 b_2 - A_2 b_1).
\end{equation}

Consider the commutator $[\Lcal_1, \Lcal_2] = \Lcal_1 \circ \Lcal_2 - \Lcal_2 \circ \Lcal_1$. We analyze the components of $\Phi([\Lcal_1, \Lcal_2])$.

\textit{Linear Part:} The linear part is the restriction to $\Htr$. Let $X \in \Htr$.
\begin{equation}
A_{[\Lcal_1, \Lcal_2]}(X) = [\Lcal_1, \Lcal_2](X) = \Lcal_1(\Lcal_2(X)) - \Lcal_2(\Lcal_1(X)).
\end{equation}
Since $X \in \Htr$, $\Lcal_i(X) = A_i(X) \in \Htr$. The action of $\Lcal_i$ on the image $A_j(X)$ is again given by the linear part $A_i$.
\begin{equation}
A_{[\Lcal_1, \Lcal_2]}(X) = A_1(A_2(X)) - A_2(A_1(X)) = [A_1, A_2](X).
\end{equation}
The linear part of the commutator corresponds to the commutator of the linear parts.

\textit{Translational Part:} The translational part is the action on $I$.
\begin{equation}
b_{[\Lcal_1, \Lcal_2]} = [\Lcal_1, \Lcal_2](I) = \Lcal_1(\Lcal_2(I)) - \Lcal_2(\Lcal_1(I)).
\end{equation}
Since $\Lcal_i(I) = b_i \in \Htr$.
\begin{equation}
b_{[\Lcal_1, \Lcal_2]} = \Lcal_1(b_2) - \Lcal_2(b_1).
\end{equation}
Since $b_i \in \Htr$, the action of $\Lcal_i$ on $b_j$ is given by the linear part $A_i(b_j)$, as established in Step 3.
\begin{equation}
b_{[\Lcal_1, \Lcal_2]} = A_1(b_2) - A_2(b_1).
\end{equation}

Therefore, $\Phi([\Lcal_1, \Lcal_2]) = ([A_1, A_2], A_1 b_2 - A_2 b_1) = [\Phi(\Lcal_1), \Phi(\Lcal_2)]_{\aff}$.

To confirm isomorphism, we verify bijectivity. For injectivity, suppose $\Phi(\Lcal) = (0,0)$. Then $b_\Lcal = \Lcal(I) = 0$ and $A_\Lcal = \Lcal|_{\Htr} = 0$, so $\Lcal$ annihilates both summands of $\Bcal_H(\Hcal) = \R I \oplus \Htr$ and is therefore the zero map. Surjectivity is by explicit construction: given $(A,b) \in \aff(\Htr)$, first define on the real Hermitian subspace
\begin{equation}
\label{eq:affine_inverse}
\Lcal_{(A,b)}(tI + X) := t\,b + A X \qquad (t \in \R,\ X \in \Htr),
\end{equation}
which is well defined and real-linear by the uniqueness of the decomposition of Step 1. Every operator has a unique form $Z=X+iY$ with $X,Y$ Hermitian, so this affine inverse has the unique complex-linear extension $\Lcal_{(A,b)}(Z)=\Lcal_{(A,b)}(X)+i\Lcal_{(A,b)}(Y)$. On Hermitian inputs it is Hermiticity-preserving because $b$ and $AX$ are Hermitian, and it is trace-annihilating because its range lies in the complexification of $\Htr$. Hence $\Lcal_{(A,b)} \in \g_{LK}$, and $\Phi(\Lcal_{(A,b)}) = (A,b)$ directly from the definitions of $A_\Lcal$ and $b_\Lcal$. Thus $\Phi$ is an isomorphism. The dimension count $\dim(\g_{LK}) = \dim(\gl(\Htr)) + \dim(\Htr) = (d^2-1)^2 + (d^2-1) = d^4 - d^2$ quoted in Section~\ref{sec:glk_definition} is a consequence of the isomorphism, and no step above uses it.
\end{proof}

\subsection{Proof of Theorem~\ref{thm:gkls_lie_wedge} ($\WGKLS$ is a Lie Wedge)}
\label{app:proof_gkls_lie_wedge}

\begin{proof}[Proof of Theorem~\ref{thm:gkls_lie_wedge}]
We have established in Section~\ref{sec:structure_gkls} that $\WGKLS$ is a closed convex cone in $\g_{LK}$. We must verify the Lie wedge condition (Equation \eqref{eq:lie_wedge_condition}): $e^{\ad(X)}(W) \subseteq W$ for all $X \in E(W)$. By Lemma~\ref{lem:gkls_edge}, every element of $E(\WGKLS)$ is a purely Hamiltonian generator $\Lcal_H$, so it suffices to verify the invariance condition for these.

Let $\Lcal_H \in E(\WGKLS)$ and $\Lcal' \in \WGKLS$. We analyze the action $e^{t \ad(\Lcal_H)}(\Lcal')$.

The adjoint action of a Hamiltonian generator is conjugation by the unitary evolution it generates. With $U_t = e^{-itH}$, the action $e^{t \ad(\Lcal_H)}(\Lcal')$ passes the dynamics into the frame rotating with $U_t$, through the conjugation map $\Ad_{U_t}(\cdot) = U_t (\cdot) U_t^\dagger$.
\begin{equation}
e^{t \ad(\Lcal_H)}(\Lcal')(\rho) = \Ad_{U_t} \circ \Lcal' \circ \Ad_{U_t^\dagger} (\rho) = U_t \Lcal'(U_t^\dagger \rho U_t) U_t^\dagger.
\end{equation}
This is the exponentiation of the adjoint representation to the group adjoint action.

Let $\Lcal'$ have Hamiltonian $H'$ and Lindblad operators $\{L'_k\}$, from diagonalizing the Kossakowski matrix $K'$ with rates $\gamma'_k$.

The explicit form of the transformed generator is obtained by transforming the Hamiltonian and the Lindblad operators unitarily:
\begin{equation}
H''(t) = U_t H' U_t^\dagger, \quad L''_k(t) = U_t L'_k U_t^\dagger.
\end{equation}

The transformed generator retains the GKLS form with the same positive decay rates $\gamma'_k$ (the eigenvalues of $K'$). The Kossakowski matrix transforms by congruence under the unitary change of basis, so positive semidefiniteness of $K'$ is preserved.

Therefore, $e^{t \ad(\Lcal_H)}(\Lcal') \in \WGKLS$. The invariance condition holds, and $\WGKLS$ is a Lie wedge.
\end{proof}

\subsection{Explicit Calculation of the Qutrit Counterexample (Non Closure of $\WGKLS$)}
\label{app:qutrit_counterexample}

This appendix details the derivation of the counterexample demonstrating that the commutator of two GKLS generators is generally not a GKLS generator. The computation is carried out directly; the Lie-wedge (rather than Lie-semialgebra) structure of $\WGKLS$ that it illustrates is established in \cite{dirr2009lie}.

\paragraph*{Setup}

We consider a qutrit system ($d=3$). The standard Gell-Mann matrices $\{\lambda_i\}_{i=1}^8$ form an orthogonal, unnormalized basis of the traceless Hermitian operators $\Htr$, with $\Tr(\lambda_i\lambda_j)=2\delta_{ij}$. Thus $F_i=\lambda_i/\sqrt{2}$ is the Hilbert--Schmidt orthonormal basis used by default in Eq.~\eqref{eq:gkls_kossakowski}. We keep the raw $\lambda_i$ convention in the calculation below and display the conversion explicitly.

We define two purely dissipative Lindbladians $\Lcal_1, \Lcal_2$ ($H=0$).
The Lindblad operators are defined as specific combinations of the Gell Mann matrices:
\begin{align}
L_1 &= \lambda_1 + i \lambda_4 = \begin{pmatrix} 0 & 1 & i \\ 1 & 0 & 0 \\ i & 0 & 0 \end{pmatrix} \\
L_2 &= \lambda_2 + i \lambda_5 = \begin{pmatrix} 0 & -i & 1 \\ i & 0 & 0 \\ -1 & 0 & 0 \end{pmatrix}
\end{align}
The corresponding Lindbladians are $\Lcal_j(\rho) = L_j \rho L_j^\dagger - \frac{1}{2} \{L_j^\dagger L_j, \rho\}$. These are valid GKLS generators by construction (rank 1 generators).

\paragraph*{Calculation of the Commutator and Kossakowski Matrix}

We compute the commutator $\Lcal_C = [\Lcal_1, \Lcal_2]$. To determine if $\Lcal_C$ is a valid generator, we compute its Kossakowski matrix $K_C^{(\lambda)}$ in the raw Gell-Mann convention.

The GKLS form (Equation \eqref{eq:gkls_kossakowski}) is:
\begin{equation}
\Lcal(\rho) = -i[H, \rho] + \sum_{j,k=1}^{8} K_{jk} \left( \lambda_j \rho \lambda_k - \frac{1}{2} \{\lambda_k \lambda_j, \rho\} \right).
\end{equation}

The computation involves expanding the commutator of the superoperators $\Lcal_1$ and $\Lcal_2$ and extracting the coefficients $K_{C, jk}$. This requires extensive algebraic manipulation involving the commutation and anticommutation relations of the Gell-Mann matrices. The structure constants of $\mf{su}(3)$ are involved in this calculation.

The generators $\Lcal_1, \Lcal_2$ themselves have simple Kossakowski matrices $K_1, K_2$. The commutator $K_C$ is related to the commutator of the corresponding superoperator representations. The calculation involves analyzing the action of the commutator on the basis elements $\lambda_i$.

The calculation yields the following structure for the Kossakowski matrix. Factoring in the structure constants of $\mathfrak{su}(3)$, the non-zero contribution lies in the block corresponding to indices $\{1, 2, 4, 5\}$:
\begin{equation}
K_C^{(\lambda)}|_{\{1,2,4,5\}} =
\begin{pmatrix}
-1 & 0 & i & 0 \\
0 & 1 & 0 & -i \\
-i & 0 & -1 & 0 \\
0 & i & 0 & 1
\end{pmatrix}.
\end{equation}
Its eigenvalues are $\{-2,2,0,0,0,0,0,0\}$. Passing to the default orthonormal basis gives $K_C^{(F)}=2K_C^{(\lambda)}$ and spectrum $\{-4,4,0,\dots,0\}$. The conditional Choi compression has minimum eigenvalue $-4$ under the Choi normalization used here. The commutator therefore violates CCP.

\paragraph*{Analysis of the Spectrum}

The displayed Kossakowski spectrum is indefinite, whereas a GKLS generator requires $K\ge0$. Thus $\Lcal_C$ is not a GKLS generator. This establishes failure of bracket closure for $d=3$; the bracket direction violates CCP.

\subsection{Solvability of the Cascaded Qutrit System Algebra}
\label{app:cascaded_qutrit}

This appendix verifies that the System Lie Algebra $\g(G)$ for the two-channel cascaded-qutrit specification with independent actuation (Section \ref{sec:example_integrable}) is solvable, and shows that it fails to be nilpotent. The qutrit has $d=3$, $H=0$, and jump operators $L_1 = |1\rangle\langle 2|$ and $L_2 = |2\rangle\langle 3|$, whose generators $\Lcal_1, \Lcal_2$ represent the two independently actuable rays; we set $\gamma_i=1$ throughout the algebraic analysis.

\paragraph*{Calculation of the Lie Algebra}

We compute the Lie algebra generated by $\Lcal_1, \Lcal_2$ in the superoperator representation. We work in the Liouville space spanned by the matrix units $E_{ij} = |i\rangle\langle j|$.

The generators are explicitly:
\begin{align}
\Lcal_1(\rho) &= E_{12} \rho E_{21} - \frac{1}{2} \{E_{22}, \rho\} \\
\Lcal_2(\rho) &= E_{23} \rho E_{32} - \frac{1}{2} \{E_{33}, \rho\}
\end{align}

We compute the commutator $\Lcal_3 = [\Lcal_1, \Lcal_2]$.
The calculation involves analyzing the action of the commutator on an arbitrary operator $\rho$.

We examine the terms generated by the composition $\Lcal_1 \circ \Lcal_2$:
$\Lcal_1(\Lcal_2(\rho)) = \Lcal_1(E_{23} \rho E_{32} - \frac{1}{2} E_{33} \rho - \frac{1}{2} \rho E_{33})$.

The term $\Lcal_1(E_{23} \rho E_{32})$ yields:
$E_{12} E_{23} \rho E_{32} E_{21} - \frac{1}{2} \{E_{22}, E_{23} \rho E_{32}\} = E_{13} \rho E_{31} - \frac{1}{2} (E_{22} E_{23} \rho E_{32} + E_{23} \rho E_{32} E_{22})$.
Since $E_{22} E_{23} = E_{23}$ and $E_{32} E_{22} = E_{32}$, this simplifies to:
$E_{13} \rho E_{31} - E_{23} \rho E_{32}$.

We examine the terms generated by the composition $\Lcal_2 \circ \Lcal_1$:
$\Lcal_2(\Lcal_1(\rho)) = \Lcal_2(E_{12} \rho E_{21} - \dots)$.

The term $\Lcal_2(E_{12} \rho E_{21})$ yields:
$E_{23} E_{12} \rho E_{21} E_{32} - \frac{1}{2} \{E_{33}, E_{12} \rho E_{21}\}$.
Since $E_{23} E_{12} = 0$ and $E_{33}$ commutes with $E_{12} \rho E_{21}$ (if we consider the action on the basis elements), the structure of the commutator involves terms related to sequential population transfer and damping of coherences.

The commutator $\Lcal_3$ contains the term $E_{13}\rho E_{31}$, an algebraic contribution associated with sequential population transfer along the cascade, together with modifications of the damping terms. Since $\Lcal_3$ is a Lie-bracket direction, this description does not assert that it is itself a GKLS generator.

\paragraph*{Verification of Solvability; Failure of Nilpotency}

We analyze the structure of the generated algebra $\g(G) = \text{span}(\Lcal_1, \Lcal_2, \Lcal_3, \dots)$. We determine the derived and lower central series of the generated algebra.

The key insight is recognizing the hierarchical structure of the generators. We analyze the action of the algebra on the subspaces of the Liouville space corresponding to different levels of populations and coherences. We order the basis of the Liouville space according to the energy levels.

The generators $\Lcal_1, \Lcal_2$ involve operators that shift the populations downwards (from higher energy levels to lower energy levels, assuming $E_1 < E_2 < E_3$). The anti commutator terms (damping) reduce the norm of the density matrix components corresponding to higher levels. The jump terms transfer population downwards.

In this ordered basis the superoperators are lower triangular. The jump terms are strictly lower triangular with respect to the grading defined by the energy levels; the anticommutator (damping) terms, however, contribute diagonal entries, so the triangularity is not strict.

The Lie closure is three-dimensional. Writing $\Lcal_3 = [\Lcal_1, \Lcal_2]$, a direct computation gives
\begin{equation}
[\Lcal_1, \Lcal_3] = -\Lcal_3, \qquad [\Lcal_2, \Lcal_3] = \Lcal_3,
\end{equation}
so $\g(G) = \operatorname{span}\{\Lcal_1, \Lcal_2, \Lcal_3\}$ is three-dimensional. The derived series collapses at the second step: $[\g, \g] = \operatorname{span}\{\Lcal_3\}$ is abelian, hence $\g^{(2)} = \{0\}$ and $\g(G)$ is solvable. The lower central series instead stabilizes, $\g^{1} = \g^{2} = \operatorname{span}\{\Lcal_3\}$, because $\ad(\Lcal_1)$ acts on $\Lcal_3$ with eigenvalue $-1$. The algebra is therefore not nilpotent: the damping rates on the diagonal are precisely what the lower central series fails to kill.

In this example the lower-triangular representation matches the unidirectional population transfer of the specified cascade. The algebra is solvable but not nilpotent, so it lies in the paper's solvable-type class.

\section{Appendix B: Detailed Exposition of Neeb's Theory of Invariant Cones}
\label{app:neeb_theory_details}

This appendix recalls the part of Neeb's invariant-cone theory \cite{neeb1994classification, Neeb2000} used in the comparison with the Fiber Compatibility Condition of Section~\ref{sec:neebs_theory}.

\paragraph*{Nilradical, Symplectic Modules, and Convex Type}

Section~\ref{sec:invariant_cones_prelim} states the working definitions; we recall only what the example and classification theorem below require. Neeb's Section~II treats a real reductive algebra $\g$ acting symplectically on $(V,\omega)$ and uses the quadratic Hamiltonian $\phi_X(v)=\tfrac12\omega(A_Xv,v)$. The module is of convex type when the cone generated by the moment-map image is pointed, equivalently when its dual is generating, and when $\mf{z}(\g)$ acts semisimply with purely imaginary eigenvalues \cite[Definition~II.1]{neeb1994classification}. When $\g$ also has the compactly embedded Cartan used in Proposition~II.23, the added assumptions $V_{\mathrm{fix}}=0$ and the stated center condition make convex type equivalent to a positive-definite witness in $\mf{z}(\mf{k})$, where $\mf{k}$ is the maximal compactly embedded subalgebra. The witness is therefore conditional; its absence does not determine the spectrum of the action.

\paragraph*{Concrete Illustration: The Heisenberg Algebra and the Symplectic Group}

The Heisenberg algebra gives a concrete instance of the module condition. A suitably signed harmonic-oscillator generator supplies the positive witness for the full symplectic action, while the inverted oscillator provides a separately verified hyperbolic one-parameter subalgebra.

\begin{example}[Convex Type Condition in the Heisenberg Algebra]
\label{ex:heisenberg_convex_type}
Consider the Heisenberg algebra $\mf{h}_n$, spanned by the position and momentum operators $\{P_i, Q_i\}_{i=1}^n$ and the center $Z$, with commutation relations $[P_i, Q_j] = \delta_{ij} Z$.

The relevant symplectic module is $V = \mf{h}_n/Z(\mf{h}_n) \cong \R^{2n}$ (the phase space). The Lie bracket induces a canonical symplectic form $\omega$ on $V$.

We consider the action of the symplectic Lie algebra $\mf{sp}(2n, \R)$ on $\mf{h}_n$. The resulting algebra $\g = \mf{sp}(2n, \R) \ltimes \mf{h}_n$ (the oscillator algebra) is non-reductive.

Let $X \in \mf{sp}(2n, \R)$. The action of $X$ on $V$ is given by the matrix $A_X$, and Neeb's convention is $\phi_X(v)=\tfrac12\omega(A_Xv,v)$.

\textbf{Case 1: An Elliptic Oscillator Direction.}
Consider the elliptic oscillator direction associated with $H_{osc} = \sum_i (P_i^2 + Q_i^2)$. Choose its sign so that $\phi_{X_0}$ is positive definite. The one-parameter group generated by $X_0$ is rotational and bounded.

The standard module has no fixed vectors, and the center condition is automatic for $\mf{sp}(2n,\R)$. Proposition~II.23 therefore shows that the action is of convex type. Neeb's construction then places $\g = \mf{sp}(2n, \R) \ltimes \mf{h}_n$ among the Lie algebras admitting pointed generating invariant cones.

\textbf{Case 2: A Hyperbolic One-Parameter Subalgebra.}
Consider the inverted harmonic oscillator $H_{inv} = \sum_i (P_i^2 - Q_i^2)$. Its generator $X_{inv}$ has real eigenvalues, so its flow has exponential expansion and contraction. The form $\phi_{X_{inv}}$ is indefinite.

For the one-dimensional reductive subalgebra $\R X_{inv}$, the center is the whole algebra and acts with real eigenvalues, so Definition~II.1 fails its center condition. The exponential behavior follows from the displayed spectrum, not from that failure. The ambient algebra $\mf{sp}(2n,\R)$ still contains the elliptic witness above and remains of convex type. More generally, absence of a positive witness does not imply expansion.
\end{example}

Neeb's classification theorem characterizes the Lie algebras admitting pointed generating invariant cones in terms of a hermitian reductive part, symplectic modules of convex type, and a central skew form.

\begin{theorem}[Neeb's classification, summary form \cite{neeb1994classification}]
Every finite dimensional real Lie algebra admitting a pointed generating invariant cone arises from the following data: a reductive Lie algebra $\mf{h}$ that itself supports such cones (its non-compact simple ideals are hermitian \cite{vinberg1963theory, olshanskii1981invariant}); an $\mf{h}$-module $V$ carrying an invariant symplectic form for which the action is of convex type; and an $\mf{h}$-invariant skew form $q: V \times V \to \mf{z}$ with values in the center, building a Heisenberg-type nilradical on $V \oplus \mf{z}$. Convex type of the module action is the condition through which the nilradical enters the classification; for reductive $\g$ the module is absent, and what remains is the requirement that the non-compact simple ideals be hermitian, a necessary condition rather than a characterization.
\end{theorem}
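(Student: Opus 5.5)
The statement is a summary of Neeb's classification rather than a result proved here, so the plan is to retrace the structure of his argument \cite{neeb1994classification,Neeb2000}. We take the theory of compactly embedded Cartan subalgebras as the main imported tool. Throughout, let $W \subseteq \g$ be a pointed generating invariant cone.

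\textbf{Step 1 (compactly embedded Cartan).} The first step extracts a compactly embedded Cartan subalgebra $\mf{t} \subseteq \g$ from the interior of $W$. Invariance makes the interior stable under $\Ad(G)$. Pointedness then implies that a suitable interior point has an adjoint orbit with bounded closure, so its centralizer contains a compactly embedded Cartan subalgebra. This places $\g$ in the admissible class. From admissibility we record the fact already used above: every abelian ideal of $\g$ is central \cite[Lemma~27(a)]{NeebOeh2021}.

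\textbf{Step 2 (shape of the nilradical).} Apply the central-abelian-ideal fact to the last nonzero term of the derived series of the radical, and then to the nilradical $\mf{n}$. We aim to show that $[\mf{n},\mf{n}] \subseteq \mf{z}(\g)$ and that $\mf{n}$ is at most two-step nilpotent. We then use the $\mf{t}$-root decomposition of $\g_\C$ to split $\mf{n} = V \oplus \mf{z}$, where $V$ is the sum of the nonzero $\mf{t}$-weight spaces in $\mf{n}$. This produces the skew form $q(v,w) = [v,w] \in \mf{z}$.

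\textbf{Step 3 (reductive complement).} Choose a reductive complement $\mf{h} \supseteq \mf{t}$ with $\g = (V \oplus \mf{z}) \rtimes \mf{h}$. We expect this to be the centralizer of $\mf{z}$ intersected with a Levi-type section adapted to $\mf{t}$. The complement $\mf{h}$ then acts on $V$ preserving $q$ and acts trivially on $\mf{z}$. This splitting step is the one I expect to be the main obstacle. One must show that the part of the radical outside $\mf{n}$ can be absorbed into $\mf{h}$ without losing reductivity. The approach I would try first is to use that $\mf{t}$ acts semisimply with imaginary spectrum, so that the generalized weight decomposition already separates $\mf{n}$ from a $\mf{t}$-stable complement.

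\textbf{Step 4 (convex type of the module).} This is where the nilradical enters the classification. Pick $f$ in the interior of the dual cone $W^*$ and restrict it to $\mf{z}$. Pointedness and generation of $W$ make $\omega_f = f \circ q$ a symplectic form on $V$, with $\mf{h}$ acting symplectically on $(V,\omega_f)$. Pairing $W$ against $\Ad(\exp v)$-translates, and expanding $e^{\ad v}X = X + [v,X] + \tfrac12[v,[v,X]]$, yields nonnegativity of $\phi_X$ on $V$ for $X$ in the projection of $W$ to $\mf{h}$. This gives pointedness of the moment-map cone $C_V$. The center condition follows from $\mf{z}(\mf{h}) \subseteq \mf{t}$, which is compactly embedded.

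\textbf{Step 5 (reductive part).} Project $W$ to $\mf{h}$ and check that the image is an invariant cone there; it is generating because $W$ is, and pointedness is recovered by quotienting out its edge. Applying the Vinberg--Olshanskii classification \cite{vinberg1963theory,olshanskii1981invariant} to $\mf{h}$ then forces its non-compact simple ideals to be hermitian.

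The direction recorded in the statement is \emph{arises from}, so no converse construction is needed. This matches the remark that, for reductive $\g$, hermitian type is stated only as a necessary condition.
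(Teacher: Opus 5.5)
The paper gives no proof of this statement. It quotes Neeb's classification with a citation only, so your outline is an independent reconstruction. Its overall architecture is the right one.

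Steps~2 and~4 can be completed as you indicate:
\begin{itemize}
\item \emph{Step 2.} Use the lower central series, $\mf{n}^1=\mf{n}$, $\mf{n}^{j+1}=[\mf{n},\mf{n}^j]$. If $\mf{n}^m\neq0=\mf{n}^{m+1}$ with $m\ge3$, then $[\mf{n}^{m-1},\mf{n}^{m-1}]\subseteq\mf{n}^{2m-2}=0$. So $\mf{n}^{m-1}$ is an abelian ideal, hence central, and $\mf{n}^m=0$, a contradiction. Therefore $[\mf{n},\mf{n}]$ is an abelian ideal, hence central.
\item \emph{Step 4.} Let $v\in V$ lie in the kernel of $f\circ q$. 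Then $t\mapsto f(e^{t\ad v}X)$ is affine and nonnegative for each $X\in W$, so $f\circ\ad v=0$. Hence $e^{\R\ad v}$ fixes the interior point $f$ of $W^*$ and preserves the compact set $\{X\in W\mid f(X)\le1\}$. A bounded unipotent group forces $\ad v=0$, so $v\in\mf{z}(\g)\cap V=0$.
\end{itemize}

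Step~1 is misphrased. Interior points need not have bounded adjoint orbits; in $\mf{sl}(2,\R)$ no nonzero orbit is bounded. What is true, and what you need again below, is that $e^{\R\ad X}$ is relatively compact for every $X\in\operatorname{int}W$. The reason is that this group preserves the order interval $W\cap(X-W)$, which is compact by pointedness and has nonempty interior.

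There are two genuine gaps.

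\textbf{Step~3.} Requiring $\mf{h}\supseteq\mf{t}$ is incompatible with $\g=(V\oplus\mf{z})\rtimes\mf{h}$. Since $\mf{t}$ is a compactly embedded Cartan, $\mf{z}(\g)\subseteq\mf{z}_\g(\mf{t})=\mf{t}$. Moreover $\mf{z}(\g)\neq0$ as soon as $V\neq0$, because Step~4 needs $f\circ q$ nondegenerate. So the sum is not direct. Also, the centralizer of $\mf{z}$ is all of $\g$, so that description selects nothing.

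The obstacle you anticipate disappears once you choose a $\mf{t}$-invariant Levi factor $\mf{s}$. It exists because the closure of $e^{\ad\mf{t}}$ is a compact group of automorphisms. Then:
\begin{itemize}
\item $\mf{t}=(\mf{t}\cap\mf{s})\oplus(\mf{t}\cap\mf{r})$, with $\mf{t}\cap\mf{r}$ centralizing $\mf{s}$;
\item the weight decomposition gives $\mf{r}=(\mf{t}\cap\mf{r})\oplus[\mf{t},\mf{r}]$, with $[\mf{t},\mf{r}]\subseteq[\g,\mf{r}]\subseteq\mf{n}$;
\item $\mf{n}\cap\mf{t}=\mf{z}(\g)$, since $\ad$ of such an element is both semisimple and nilpotent.
\end{itemize}
Hence the reductive complement is $\mf{h}:=\mf{s}\oplus\mf{t}'$, where $\mf{t}\cap\mf{r}=\mf{t}'\oplus\mf{z}(\g)$. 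It contains a complement of $\mf{z}(\g)$ in $\mf{t}$, not $\mf{t}$ itself.

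\textbf{Step~5.} Quotienting out the edge $E$ of $\overline{p_{\mf{h}}(W)}$ only lets you apply Vinberg--Olshanskii to $\mf{h}/E$. (The closure is needed; cf.\ Remark~\ref{rem:quotient_closedness}.) That says nothing about simple ideals of $\mf{h}$ lying inside $E$, and this loss really occurs. For $\g=\R\oplus\mf{su}(2)$ with the Lorentz cone $\{(t,X)\mid t\ge\|X\|\}$, one has $\mf{n}=\R$, $\mf{h}=\mf{su}(2)$ and $p_{\mf{h}}(W)=\mf{su}(2)$. The same example shows that a complement of $\mf{n}$ need not itself carry a pointed invariant cone. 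So what your Step~5 can deliver is only the parenthetical reading of that clause.

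The missing argument, which excludes a non-compact simple ideal $\mf{s}_i$ from $E$, is the ellipticity from Step~1:
\begin{itemize}
\item Let $p_i:\g\to\g/\mf{n}\cong\mf{h}\to\mf{s}_i$. Then $p_i(\operatorname{int}W)$ is the interior of $\overline{p_i(W)}$ and consists of elliptic elements.
\item A non-compact simple algebra contains non-elliptic elements; otherwise its Killing form would be negative definite.
\item Hence $\overline{p_i(W)}\neq\mf{s}_i$. Its edge is a proper ideal of $\mf{s}_i$, hence zero.
\item Only now does the Vinberg--Olshanskii theorem apply and make $\mf{s}_i$ hermitian.
\end{itemize}
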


The theorem gives a constructive description of these Lie algebras. Its module input is the full convex-type condition above, including the cone and center requirements.

\printbibliography

@article{Brunetti2002,
  title={Modular localization and wigner particles},
  author={Romeo Brunetti and Daniele Guido and Roberto Longo},
  journal={Reviews in Mathematical Physics},
  year={2002},
  volume={14},
  pages={759-786},
  doi={10.1142/S0129055X02001387}
}

@book{Graaf2000LieAT,
  author    = {de Graaf, Willem A.},
  title     = {Lie Algebras: Theory and Algorithms},
  series    = {North-Holland Mathematical Library},
  volume    = {56},
  publisher = {North-Holland (Elsevier)},
  address   = {Amsterdam},
  year      = {2000},
  isbn      = {978-0-444-50116-5}
}

@book{Humphreys2012,
  author    = {Humphreys, James E.},
  title     = {Introduction to Lie Algebras and Representation Theory},
  series    = {Graduate Texts in Mathematics},
  volume    = {9},
  publisher = {Springer},
  address   = {New York, NY},
  year      = {1972},
  doi       = {10.1007/978-1-4612-6398-2},
  isbn      = {9780387900537}
}

@book{Neeb2000,
  title     = {Holomorphy and Convexity in Lie Theory},
  author    = {Neeb, Karl-Hermann},
  year      = {2000},
  publisher = {Walter de Gruyter},
  address   = {Berlin; New York},
  series    = {De Gruyter Expositions in Mathematics},
  volume    = {28},
  doi       = {10.1515/9783110808148},
  isbn      = {9783110808148}
}

@article{NeebOeh2021,
  author  = {Neeb, Karl-Hermann and Oeh, Daniel},
  title   = {Elements in Pointed Invariant Cones in {Lie} Algebras and Corresponding Affine Pairs},
  journal = {Bulletin of the Iranian Mathematical Society},
  year    = {2022},
  volume  = {48},
  number  = {1},
  pages   = {295--330},
  doi     = {10.1007/s41980-021-00671-y},
  eprint  = {2110.07297},
  archivePrefix = {arXiv},
  primaryClass  = {math.RT}
}

@article{OMeara2011IllustratingTG,
  author  = {O'Meara, C. and Dirr, G. and Schulte-Herbr{\"u}ggen, T.},
  title   = {Illustrating the Geometry of Coherently Controlled Unital Open Quantum Systems},
  journal = {IEEE Transactions on Automatic Control},
  volume  = {57},
  number  = {8},
  pages   = {2050--2056},
  year    = {2012},
  doi     = {10.1109/TAC.2012.2195849},
  eprint  = {1103.2703},
  archiveprefix = {arXiv}
}

@article{Schirmer2000Complete,
  author  = {Schirmer, S. G. and Fu, H. and Solomon, A. I.},
  title   = {Complete controllability of quantum systems},
  journal = {Physical Review A},
  volume  = {63},
  number  = {6},
  pages   = {063410},
  year    = {2001},
  doi     = {10.1103/PhysRevA.63.063410}
}

@article{albert2016geometry,
  title   = {Geometry and Response of Lindbladians},
  author  = {Albert, Victor V. and Bradlyn, Barry and Fraas, Martin and Jiang, Liang},
  journal = {Physical Review X},
  volume  = {6},
  number  = {4},
  pages   = {041031},
  year    = {2016},
  doi     = {10.1103/PhysRevX.6.041031}
}

@article{altafini2001controllability,
  author  = {Altafini, Claudio},
  title   = {Controllability of quantum mechanical systems by root space decomposition of su({$N$})},
  journal = {Journal of Mathematical Physics},
  volume  = {43},
  number  = {5},
  pages   = {2051--2062},
  year    = {2002},
  doi     = {10.1063/1.1467611}
}

@book{aubin1990setvalued,
  title     = {Set-Valued Analysis},
  author    = {Aubin, Jean-Pierre and Frankowska, H{\'e}l{\`e}ne},
  year      = {1990},
  publisher = {Birkh{\"a}user},
  address   = {Boston, MA},
  series    = {Systems \& Control: Foundations \& Applications},
  volume    = {2},
  isbn      = {0-8176-3478-9},
  doi       = {10.1007/978-0-8176-4848-0}
}

@book{bourbaki1989lie,
  author    = {Bourbaki, Nicolas},
  title     = {Lie Groups and Lie Algebras: Chapters 1--3},
  series    = {Elements of Mathematics},
  publisher = {Springer},
  address   = {Berlin},
  year      = {1989},
  isbn      = {9783540642428}
}

@article{brockett1972system,
  author  = {Brockett, R. W.},
  title   = {System theory on group manifolds and coset spaces},
  journal = {SIAM Journal on Control},
  volume  = {10},
  number  = {2},
  pages   = {265--284},
  year    = {1972},
  doi     = {10.1137/0310021},
}

@misc{cai2025lindbladians,
  author        = {Cai, Jihong and Govindarajan, Advith and Junge, Marius},
  title         = {How Far do Lindbladians Go?},
  year          = {2025},
  eprint        = {2504.04883},
  archivePrefix = {arXiv},
  primaryClass  = {quant-ph}
}

@book{d2021introduction,
  author    = {D'Alessandro, Domenico},
  title     = {Introduction to Quantum Control and Dynamics},
  edition   = {2nd},
  series    = {Advances in Applied Mathematics},
  publisher = {Chapman \& Hall/CRC},
  address   = {Boca Raton, FL},
  year      = {2021},
  isbn      = {978-0-367-50790-9}
}

@article{dirr2009lie,
  author  = {Dirr, G. and Helmke, U. and Kurniawan, I. and Schulte-Herbr{\"u}ggen, T.},
  title   = {Lie-semigroup structures for reachability and control of open quantum systems: {Kossakowski-Lindblad} generators form {Lie} wedge to {Markovian} channels},
  journal = {Reports on Mathematical Physics},
  volume  = {64},
  number  = {1--2},
  pages   = {93--121},
  year    = {2009},
  doi     = {10.1016/S0034-4877(09)90022-2}
}

@book{dixmier1996enveloping,
  author    = {Dixmier, Jacques},
  title     = {Enveloping Algebras},
  series    = {Graduate Studies in Mathematics},
  volume    = {11},
  publisher = {American Mathematical Society},
  address   = {Providence, RI},
  year      = {1996},
  isbn      = {9780821805602}
}

@article{gorini1976completely,
  author  = {Gorini, Vittorio and Kossakowski, Andrzej and Sudarshan, E. C. G.},
  title   = {Completely positive dynamical semigroups of {N}-level systems},
  journal = {Journal of Mathematical Physics},
  volume  = {17},
  number  = {5},
  pages   = {821--825},
  year    = {1976},
  doi     = {10.1063/1.522979},
  publisher = {AIP Publishing}
}

@book{hall2013quantum,
  author    = {Hall, Brian C.},
  title     = {Quantum Theory for Mathematicians},
  series    = {Graduate Texts in Mathematics},
  volume    = {267},
  publisher = {Springer},
  address   = {New York},
  year      = {2013},
  isbn      = {9781461471158},
  doi       = {10.1007/978-1-4614-7116-5},
}

@book{hilgert1989lie,
  title     = {Lie Groups, Convex Cones, and Semigroups},
  author    = {Hilgert, Joachim and Hofmann, Karl Heinrich and Lawson, Jimmie D.},
  year      = {1989},
  publisher = {Clarendon Press (Oxford University Press)},
  address   = {Oxford; New York},
  series    = {Oxford Mathematical Monographs},
  isbn      = {9780198535690}
}

@article{hofmann1991shortcourse,
  author  = {Hofmann, Karl H.},
  title   = {A Short Course on the {Lie} Theory of Semigroups {I}},
  journal = {Seminar Sophus Lie},
  volume  = {1},
  pages   = {33--40},
  year    = {1991}
}

@book{jacobson2013lie,
  author    = {Jacobson, Nathan},
  title     = {Lie Algebras},
  publisher = {Dover Publications},
  address   = {New York},
  year      = {2013},
  isbn      = {9780486136790}
}

@book{jurdjevic1997geometric,
  author    = {Jurdjevic, Velimir},
  title     = {Geometric Control Theory},
  series    = {Cambridge Studies in Advanced Mathematics},
  number    = {52},
  publisher = {Cambridge University Press},
  address   = {Cambridge},
  year      = {1997},
  isbn      = {9780521495028},
  doi       = {10.1017/CBO9780511530036},
}

@book{knapp1996lie,
  author    = {Knapp, Anthony W.},
  title     = {Lie Groups Beyond an Introduction},
  series    = {Progress in Mathematics},
  volume    = {140},
  edition   = {1},
  publisher = {Birkh\"{a}user},
  address   = {Boston, MA},
  year      = {1996},
  isbn      = {9780817639266}
}

@article{kurniawan2012controllability,
  author  = {Kurniawan, Indra and Dirr, Gunther and Helmke, Uwe},
  title   = {Controllability Aspects of Quantum Dynamics: A Unified Approach for Closed and Open Systems},
  journal = {IEEE Transactions on Automatic Control},
  volume  = {57},
  number  = {8},
  pages   = {1984--1996},
  year    = {2012},
  doi     = {10.1109/TAC.2012.2195870}
}

@article{lindblad1976generators,
  author  = {Lindblad, G{\"o}ran},
  title   = {On the generators of quantum dynamical semigroups},
  journal = {Communications in Mathematical Physics},
  volume  = {48},
  number  = {2},
  pages   = {119--130},
  year    = {1976},
  doi     = {10.1007/BF01608499},
  publisher = {Springer}
}

@article{malvetti2024reachability,
  author        = {Malvetti, Emanuel and vom Ende, Frederik and Dirr, Gunther and Schulte-Herbr{\"u}ggen, Thomas},
  title         = {Reachability, Coolability, and Stabilizability of Open {M}arkovian Quantum Systems with Fast Unitary Control},
  journal       = {SIAM Journal on Control and Optimization},
  volume        = {63},
  number        = {1},
  pages         = {S53--S81},
  year          = {2025},
  doi           = {10.1137/23M1594467},
  eprint        = {2308.00561},
  archiveprefix = {arXiv},
  primaryclass  = {quant-ph}
}

@article{neeb1994classification,
  title   = {The Classification of Lie Algebras with Invariant Cones},
  author  = {Neeb, Karl-Hermann},
  journal = {Journal of Lie Theory},
  volume  = {4},
  number  = {2},
  pages   = {139--184},
  year    = {1994},
  doi     = {10.5802/jolt.76},
  zbl     = {0838.17003}
}

@misc{o2025lie,
  author       = {O'Meara, Corey Patrick},
  title        = {A {Lie} Theoretic Framework for Controlling Open Quantum Systems},
  year         = {2025},
  eprint       = {2510.04719},
  archivePrefix = {arXiv},
  primaryClass = {quant-ph},
  doi          = {10.48550/arXiv.2510.04719}
}

@article{olshanskii1981invariant,
  author  = {Ol'shanskii, G. I.},
  title   = {Invariant cones in {Lie} algebras, {Lie} semigroups, and the holomorphic discrete series},
  journal = {Functional Analysis and Its Applications},
  volume  = {15},
  number  = {4},
  pages   = {275--285},
  year    = {1981},
  doi     = {10.1007/BF01106156},
}

@book{rockafellar1970convex,
  title     = {Convex Analysis},
  author    = {Rockafellar, R. Tyrrell},
  year      = {1970},
  publisher = {Princeton University Press},
  address   = {Princeton, NJ},
  series    = {Princeton Mathematical Series},
  volume    = {28},
  isbn      = {9780691080697}
}

@article{schulteherbruggen2017kossakowski,
  author  = {Schulte-Herbr{\"u}ggen, T. and Dirr, G. and Zeier, R.},
  title   = {Quantum Systems Theory Viewed from {Kossakowski-Lindblad} {Lie} Semigroups --- and Vice Versa},
  journal = {Open Systems \& Information Dynamics},
  volume  = {24},
  number  = {4},
  pages   = {1740019},
  year    = {2017},
  doi     = {10.1142/S1230161217400194}
}

@article{sussmann1972controllability,
  author  = {Sussmann, H. J. and Jurdjevic, V.},
  title   = {Controllability of nonlinear systems},
  journal = {Journal of Differential Equations},
  volume  = {12},
  number  = {1},
  pages   = {95--116},
  year    = {1972},
  doi     = {10.1016/0022-0396(72)90007-1},
}

@book{varadarajan2013lie,
  author    = {Varadarajan, V. S.},
  title     = {Lie Groups, Lie Algebras, and Their Representations},
  series    = {Graduate Texts in Mathematics},
  volume    = {102},
  publisher = {Springer},
  address   = {New York, NY},
  year      = {1984},
  doi       = {10.1007/978-1-4612-1126-6},
  isbn      = {9780387909691}
}

@article{vinberg1963theory,
  author  = {Vinberg, E. B.},
  title   = {The theory of homogeneous convex cones},
  journal = {Transactions of the Moscow Mathematical Society},
  volume  = {12},
  pages   = {340--403},
  year    = {1963}
}

@article{wiersema2024classification,
  author  = {Wiersema, Roeland and K{\"o}kc{\"u}, Efekan and Kemper, Alexander F. and Bakalov, Bojko N.},
  title   = {Classification of dynamical {Lie} algebras of 2-local spin systems on linear, circular and fully connected topologies},
  journal = {npj Quantum Information},
  volume  = {10},
  pages   = {110},
  year    = {2024},
  doi     = {10.1038/s41534-024-00900-2}
}

@article{wigner1939unitary,
  author  = {Wigner, E. P.},
  title   = {On unitary representations of the inhomogeneous {Lorentz} group},
  journal = {Annals of Mathematics},
  volume  = {40},
  number  = {1},
  pages   = {149--204},
  year    = {1939},
  doi     = {10.2307/1968551},
}

@article{wolf2008dividing,
  author  = {Wolf, Michael M. and Cirac, J. Ignacio},
  title   = {Dividing Quantum Channels},
  journal = {Communications in Mathematical Physics},
  volume  = {279},
  number  = {1},
  pages   = {147--168},
  year    = {2008},
  doi     = {10.1007/s00220-008-0411-y},
  eprint  = {math-ph/0611057},
  archivePrefix = {arXiv},
  publisher = {Springer}
}

\end{document}